\theoremstyle{plain}
\newtheorem{question}{Question}
\newtheorem{theorem}{Theorem}[section]
\newtheorem{lemma}[theorem]{Lemma}
\newtheorem*{claim*}{Claim}
\newenvironment{claimproof}[1][\proofname]{\proof[#1]}{\endproof}
\theoremstyle{definition}
\DeclareMathOperator{\dist}{dist}
\DeclareMathOperator{\diam}{diam}
\DeclareMathOperator{\lca}{lca}
\DeclareMathOperator{\conv}{conv}
\DeclareMathOperator{\desc}{desc}
\DeclareMathOperator{\cov}{cov}
\DeclareMathOperator{\cell}{cell}
\newcommand{\bigO}[1]{\mathcal{O}{\left(#1\right)}}
\newcommand{\Oh}{\mathcal{O}}
\newcommand{\bigOd}[1]{\mathcal{O}_d{\left(#1\right)}}
\newcommand{\bigOm}[1]{\Omega{\left(#1\right)}}
\newcommand{\bigTh}[1]{\Theta{\left(#1\right)}}
\newcommand{\eps}{\varepsilon}
\newcommand{\cC}{\mathcal{C}}
\newcommand{\cR}{\mathcal R}
\newcommand{\cQ}{\mathcal{Q}}
\newcommand{\cT}{\mathcal{T}}
\DeclareMathOperator\arsinh{arsinh}
\DeclareMathOperator\arcosh{arcosh}
\newcommand{\Hyp}{\mathbb{H}}
\newcommand{\Euc}{\mathbb{E}}
\newcommand{\Reals}{\mathbb{R}}
\newcommand{\Sph}{\mathbb{S}}
\newcommand{\distH}[1]{\dist_{\Hyp^{#1}}}
\newcommand{\distHd}{\distH{d}}
\renewcommand{\angle}{\sphericalangle}
\newcommand{\up}{\uparrow}
\newcommand{\down}{\downarrow}
\newcommand{\zu}{z^\up}
\newcommand{\zd}{z^\down}
\newcommand{\x}{x_{\min}}
\newcommand{\xm}{x_{\max}}
\newcommand{\stereo}{\pi_{\text{stereo}}}
\date{}
\title{Near-Optimal Dynamic Steiner Spanners\\ for Constant-Curvature Spaces}
\author{Sándor Kisfaludi-Bak\thanks{Department of Computer Science, Aalto University, Espoo, Finland, \textsf{sandor.kisfaludi-bak@aalto.fi}.
This work was supported by the Research Council of Finland, Grant 363444.}
\and
Geert van Wordragen\thanks{Department of Computer Science, Aalto
    University, Espoo, Finland, \textsf{geert.vanwordragen@aalto.fi}}
}
\begin{document}

\maketitle
\thispagestyle{empty}
\begin{abstract}\normalsize
    We consider Steiner spanners in Euclidean and non-Euclidean geometries. In the Euclidean setting, a recent line of work initiated by Le and Solomon [FOCS'19] and further improved by Chang et al.\ [SoCG'24] obtained Steiner $(1+\varepsilon)$-spanners of size $O_d(\varepsilon^{(1-d)/2}\log(1/\varepsilon)n)$, nearly matching the lower bounds of Bhore and T\'oth [SIDMA'22].
    
    We obtain Steiner $(1+\varepsilon)$-spanners of size $O_d(\varepsilon^{(1-d)/2}\log(1/\varepsilon)n)$ not only in $d$-dimensional Euclidean space, but also in $d$-dimensional spherical and hyperbolic space. For any fixed dimension $d$, the obtained edge count is optimal up to an $O(\log(1/\varepsilon))$ factor in each of these spaces. Unlike earlier constructions, our Steiner spanners are based on simple quadtrees, and they can be dynamically maintained, leading to efficient data structures for dynamic approximate nearest neighbours and bichromatic closest pair.

    In the hyperbolic setting, we also show that $2$-spanners in the hyperbolic plane must have $\Omega(n\log n)$ edges, and we obtain a $2$-spanner of size $O_d(n\log n)$ in $d$-dimensional hyperbolic space, matching our lower bound for any constant $d$. Finally, we give a Steiner spanner with \emph{additive} error $\varepsilon$ in hyperbolic space with $O_d(\varepsilon^{(1-d)/2}\log(\alpha(n)/\varepsilon)n)$ edges, where $\alpha(n)$ is the inverse Ackermann function.

    Our techniques generalize to closed orientable surfaces of constant curvature as well as to some quotient spaces.
\end{abstract}

\clearpage
\setcounter{page}{1}

\section{Introduction}
Given a set of points in some metric space, what is the most efficient way to represent all pairwise distances? One can use a complete geometric graph: the vertices are the given points, and the edges are weighted by the corresponding distance. A \emph{$t$-spanner} is some small subgraph $G$ of this complete graph such that the shortest path in $G$ among any pair of points is at most $t$ times longer than their distance. In Euclidean space, one can obtain $(1+\eps)$-spanners of linear size for any fixed constant $\eps>0$ \cite{Clarkson87,Keil88}. Spanners are not only interesting in their own right, but they are also used in a plethora of algorithmic applications, and they lie at the heart of many approximation algorithms~\cite{RaoS98,subsetspanner,DemaineHM10,BateniHM11,9719718}. There is an ongoing research effort to study spanners in more general settings, and to establish optimal trade-offs between their stretch ($t=1+\eps$) and their size (edge count).

Spanners and dynamic spanners have been studied extensively in Euclidean spaces~\cite{narasimhan2007geometric,Chew86,AryaDMSS95,DasN97} as well as in the more general setting of spaces with bounded doubling dimension~\cite{GaoGN06,GottliebR08a,ChanG09,Roditty12,ChanLNS15}. The \emph{doubling dimension} of a metric space is the minimum $k$ such that any ball in the metric space can be covered by $2^k$ balls of half its radius.
Today, we understand the optimal trade-offs between stretch and spanner size in Euclidean and doubling spaces.
In Euclidean space, several techniques for spanner constructions, such as greedy spanners~\cite{AlthoferDDJS93,narasimhan2007geometric} and locality-sensitive orderings~\cite{locsens,GaoH24} yield $(1+\eps)$-spanners of size (near) $\Oh_d(\eps^{1-d}n)$, which is optimal for any fixed dimension $d$~\cite{LeS19}.
Well-separated pair decompositions (WSPDs)~\cite{CallahanK95} give a slightly worse edge count of $\Oh_d(\eps^{-d} n)$, but have a simpler construction and analysis.

Many of these techniques have been ported to doubling spaces, and have been used successfully to create spanners (and approximation algorithms). Net trees~\cite{Har-PeledM06} play the role of a hierarchical decomposition of the space rather than quadtrees, and one can build WSPDs \cite{Talwar04,Har-PeledM06}, greedy spanners \cite{Smid09}, and locality-sensitive orderings \cite{locsens,FiltserL22,LaLe24}. Gottlieb and Roditty~\cite{GottliebR08} provide dynamic spanners in doubling spaces of dimension $d$ with $\eps^{-\bigO{d}}n$ edges and update time $\eps^{-\bigO{d}} \log n$. In this more general setting, the constants around the doubling dimension $d$ in the exponent cannot be nailed down precisely using current tools.

While the above spanners are already useful, one can obtain even smaller (sparser) spanners if we allow so-called Steiner points in the spanner graph, that is, new points from the ambient space. More precisely, a \emph{Steiner $(1+\eps)$-spanner} of a point set $P$ in a metric space $(X,\dist)$ is a subgraph $G$ of the complete geometric graph on $P\cup S$ (for some set $S\subset X\setminus P$) that has the spanner property on $P$: $\dist_G(u,v)\leq (1+\eps)\dist(u,v)$ for all $u,v\in P$. The points in $S$ are called \emph{Steiner points}. In $d$-dimensional Euclidean space (henceforth denoted by $\Euc^d$) it is possible to obtain Steiner $(1+\eps)$-spanners of size $\Oh_d(\eps^{(1-d)/2}\log(1/\eps) n)$ \cite{ChangC0MST24}. This is almost matched by a lower bound of~\cite{BhoreT22}, who prove that $\Omega_d(\eps^{(1-d)/2}n)$ edges are required. 

In this paper, we attempt to generalise Euclidean (Steiner) spanners to curved (non-Euclidean) spaces. In particular, we study the following question.

\begin{question}\label{q:tradeoff}
    What is the optimal trade-off between the stretch and the size of a (Steiner) $(1+\eps)$-spanner in non-Euclidean spaces?
\end{question}

To begin our investigation with the simplest non-Euclidean spaces, let us consider the two other geometries of constant curvature: spherical geometry,\footnote{Arguably, elliptic geometry is more appropriate, but less readily perceived. Our results also extend to elliptic geometry with a simple trick, which is presented in Section~\ref{sec:quotient}.} representing constant positive (sectional) curvature, and hyperbolic geometry, representing constant negative (sectional) curvature.
We note that spanner results apply regardless of the constant positive curvature of the spherical space, as a scaling of the metric can generate the same result for any other spherical space. The same observation holds in the hyperbolic setting. For these reasons, we work in $d$-dimensional spherical and hyperbolic space of sectional curvature $1$ and $-1$, and denote them by $\Sph^d$ and $\Hyp^d$, respectively.

Spherical space is naturally modelled by a $d$-dimensional sphere in $(d+1)$-dimensional Euclidean space, where distances are measured along the sphere.
Due to its positive curvature, spherical space grows more slowly than Euclidean space, especially at larger distances (e.g., the area of a disk of radius $r$ is $2\pi(1-\cos r)$ rather than $\pi r^2$).
In contrast, hyperbolic space grows more quickly than Euclidean space (e.g., the area of a disk of radius $r$ is now $2\pi(\cosh r - 1)$, which is $\Theta(e^r)$ for superconstant $r$).
It is commonly visualised using models that map it to Euclidean space with heavily distorted distances, similar to e.g.\ the stereographic and gnomonic projection used for spheres.
See for example \Cref{fig:binarytiling}, which shows a tiling of the hyperbolic plane with tiles that are, importantly, isometric.
Note the strong similarity to a binary tree.

\begin{figure}
    \centering
    \includegraphics[scale=.5]{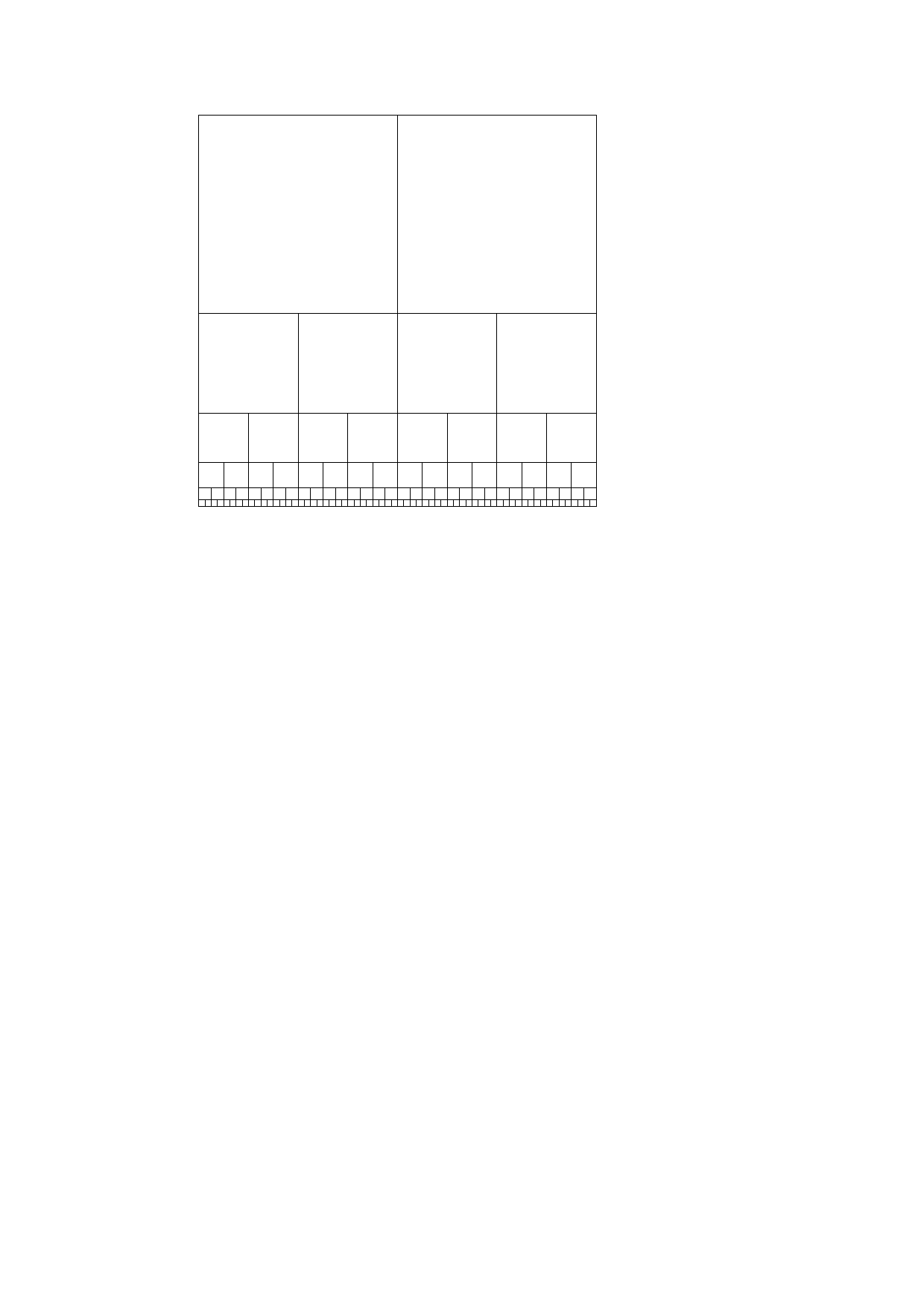}
    \caption{A patch of the binary tiling shown in the half-plane model. Note that all tiles are isometric.}
    \label{fig:binarytiling}
\end{figure}

One of the main motivations for studying curved spaces is that certain types of data can be much more accurately and efficiently modelled in a curved space than in (flat) Euclidean space.
In particular, constant-dimensional hyperbolic space already allows for exponential growth and unbounded doubling dimension, giving good embeddings for graphs with an inherent hierarchy, such as those based on the internet \cite{ShavittT04} and social networks \cite{VerbeekS14}.
This has attracted the attention of, for example, the machine learning \cite{NickelK18,GaneaBH18,NEURIPS2019_0415740e,PengVMSZ22}, graph visualisation \cite{lamping1995focus} and complex network modelling \cite{krioukov2010hyperbolic} communities.
From a more theoretical standpoint, negative curvature is one of the simplest ways a geometric space can be non-doubling.
Properly dealing with this requires new approaches that do not rely on doubling dimension and could inspire future work in other non-doubling spaces.
Additionally, every closed surface can be assigned constant curvature, making especially hyperbolic geometry a useful tool in computational topology \cite{VerdiereE10,ChangM22,VerdiereDD24}.

Both $\Sph^d$ and $\Hyp^d$ are locally Euclidean spaces, meaning that a small radius ball in each space can be embedded into $\Euc^d$ with very small distortion. On a larger scale however, they have a very different behaviour. In $\Sph^d$, we can still bound the doubling dimension by $\Oh(d)$, and thus doubling metric results apply, yielding spanners of size $\eps^{-\bigO{d}}n$. One would expect that the stronger Euclidean spanner and Steiner spanner constructions (of size $\Oh_d(\eps^{1-d} n)$ and $\Oh_d(\eps^{(1-d)/2}\log(1/\eps) n)$) should be possible in $\Sph^d$, however, proving it is non-trivial: the geometric transformations to convert between these cases introduce constant multiplicative error, thus a spanner property in $\Euc^d$ does not imply the spanner property in $\Sph^d$. Note that spanners of $\Sph^d$ with the same guarantees as the Euclidean constructions would be (nearly) optimal in $\Sph^d$, as the Euclidean lower-bound constructions of Bhore and Tóth~\cite{BhoreT22} can be ported to $\Sph^d$ inside a small neighbourhood. (Naturally, the Euclidean lower bounds hold in $\Hyp^d$ for the same reason.)

In $d$-dimensional hyperbolic space ($\Hyp^d$), the landscape is very different. First of all, $\Hyp^d$ is not a doubling space, so doubling space results do not apply. Second, as observed by~\cite{KrauthgamerL06,hyperquadtree}, there do not exist $t$-spanners for any $t<2$ of size less than $\binom{n}{2}$. In fact, many of the important techniques such as WSPDs and locality sensitive orderings are known to fail~\cite{hyperquadtree}. However, Krauthgamer and Lee~\cite{KrauthgamerL06} obtained a Steiner $(1+\eps)$-spanner with $\eps^{-\Oh_d(1)}n$ edges in a more general setting of visual geodesic Gromov-hyperbolic spaces. Their technique is based on the observation that over shorter distances, hyperbolic space has bounded doubling dimension and thus one can use doubling space techniques, while at larger distances hyperbolic space is sufficiently ``tree-like'' and one can obtain additive Steiner spanners. In an \emph{additive} (Steiner) spanner, the distance in the spanner is at most $\ell$ longer than the hyperbolic distance. Recently, Park and Vigneron~\cite{additivespanner} gave a Steiner spanner in $\Hyp^d$ with additive error $\Oh(k\log d)$ and $2^{\Oh(d)}n\lambda_k(n)$ edges, where $\lambda_k(n)$ is an extremely slow-growing $k$-th row inverse Ackermann function. Finally, Van Wordragen and Kisfaludi-Bak~\cite{hyperquadtree} gave a dynamic Steiner $(1+\eps)$-spanner in $\Hyp^d$ with $\Oh(\eps^{-d}\log(1/\eps)n)$ edges based on the construction of a new hyperbolic quadtree.
Their quadtree is shown in \Cref{fig:quadtrees}.
In short, it splits cells that are sufficiently small the same way as a Euclidean quadtree, while splitting larger cells into one top cell and a possibly unbounded number of bottom cells.
From their Steiner spanner, \cite{hyperquadtree} also obtain linear size $(2+\eps)$-spanners, leaving the matter of $2$-spanners open.

\begin{figure}
    \centering
    \includegraphics{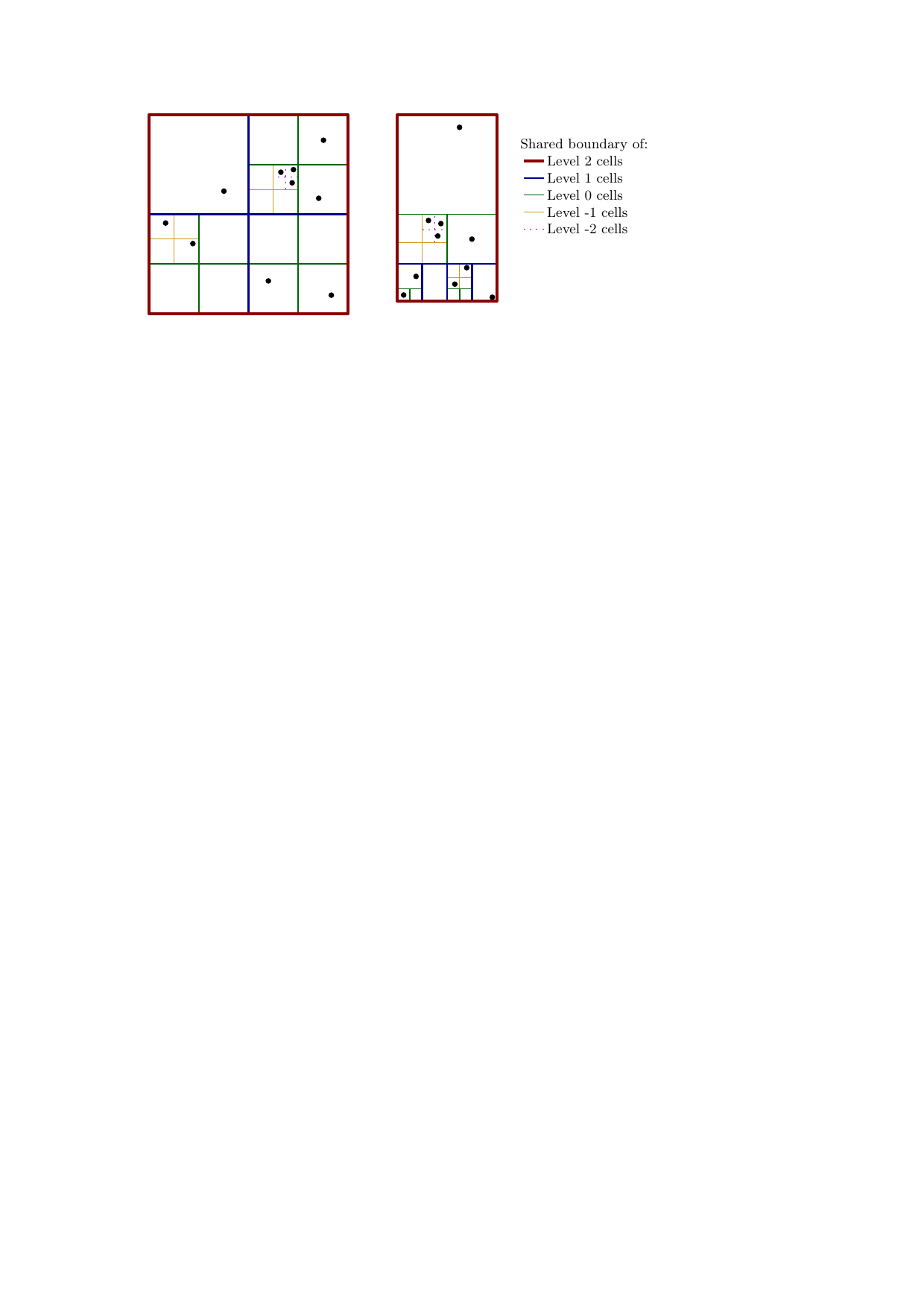}
    \caption{A Euclidean quadtree and the hyperbolic quadtree of \cite{hyperquadtree}.}
    \label{fig:quadtrees}
\end{figure}

\begin{question}\label{q:twospan}
    Are there $2$-spanners of size $\Oh_d(n)$ in $\Hyp^d$?
\end{question}

Hyperbolic spaces can exhibit very different behaviours than Euclidean spaces due to their non-doubling nature at larger scales. The tree-likeness of hyperbolic spaces has been used to obtain stronger structural and algorithmic results for several geometric problems~\cite{hyperSepSODA,hyperTSP,structureindependence} in $\Hyp^d$ than what is possible in~$\Euc^d$. It is thus natural to ask the following question.

\begin{question}\label{q:hypersteiner}
    Are there (Steiner) spanners in $\Hyp^d$ that have even stronger guarantees than their Euclidean counterparts for points that are pairwise distant?
\end{question}

The existence of additive Steiner spanners already answers this question affirmatively, but finding the best trade-off between the additive error of a Steiner spanner and its edge count remains open.

Outside the specific hyperbolic setting, there is growing interest in obtaining spanners in other non-doubling spaces, in particular, metric spaces induced by planar, minor-free, or bounded treewidth graphs as well as polyhedral terrains or surfaces. In recent work, the technique of \emph{tree covers} has gained momentum.
A tree cover can be thought of as a set of weighted trees whose union gives a (Steiner) spanner.
More precisely, set $\cT$ is a tree cover for a point set $P$ in metric space $X$ if for any pair of points in $P$ there is a tree in $\cT$ where their distance is at most $1+\eps$ times larger than their distance in $X$, while their distance in all trees of $\cT$ is at least that in $X$.
In geometric settings, the trees are typically required to be geometric graphs, i.e.\ any Steiner points are points in $X$ and the edge weights match the distances in $X$.
Chang~et~al.~\cite{ChangCLMST23} gave (Steiner) tree covers in planar and treewidth $t$ metrics, which yield Steiner $(1+\eps)$-spanners of size $\bigO{\eps^{-3} \log\frac{1}{\eps}n}$ for planar metrics and $2^{(t/\eps)^{\bigO t}}n$ for treewidth-$t$ metrics. Even more recently, Bhore~et~al.~\cite{BhoreKK0LPT25} obtained a Steiner $(1+\eps)$-spanner on planar metrics and on $2$-dimensional polyhedral surfaces of bounded genus with $\bigO{\frac{1}{\eps} \log\frac{1}{\eps} \cdot \log\frac{\alpha(n)}{\eps} \cdot n}$ edges, where $\alpha(n)$ is the inverse Ackermann function. These settings vastly generalize the finite metrics induced by point sets in $\Hyp^2$ and thus the trade-offs in these settings are not directly comparable.
However, the hyperbolic plane does have many of the important properties of more general planar metrics: for example, there is a natural lower bound of $2$ for the stretch of (non-Steiner) spanners of subquadratic size both in planar metrics and in $\Hyp^2$~\cite{BhoreKK0LPT25,hyperquadtree}.

\paragraph*{Our contribution}

Our main contribution is the following theorem.

\begin{theorem}\label{thm:main}
    Let $\eps \in (0,\frac12]$ and $P$ be a set of $n$ points in $d$-dimensional Euclidean, hyperbolic or spherical space.
    There is a Steiner $(1+\eps)$-spanner for~$P$ with $\bigOd{\eps^{(1-d)/2} \log \frac{1}{\eps} \cdot n}$ edges.
    We can maintain it dynamically, such that each point insertion or deletion takes $\bigOd{\eps^{(1-d)/2} \log \frac{1}{\eps} \cdot \log n}$ time.
\end{theorem}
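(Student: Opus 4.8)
The plan is to handle all three geometries through a single quadtree-based construction, exploiting that $\Euc^d$, $\Sph^d$ and $\Hyp^d$ are \emph{locally Euclidean}. I would use a randomly shifted compressed quadtree in $\Euc^d$, the hyperbolic quadtree of \cite{hyperquadtree} in $\Hyp^d$ (its small cells subdivide exactly like a Euclidean quadtree, while a large region becomes one bounded-diameter top cell together with its bottom cells), and a recursive subdivision of the whole compact space $\Sph^d$ into spherical cells. In each case the hierarchy has $\bigOd n$ non-empty cells after compression, admits a navigation structure of depth $\bigO{\log n}$ in which an update touches $\bigO{\log n}$ nodes, and has the property that a cell of diameter $r$ embeds into $\Euc^d$ with multiplicative distortion $1 + \bigO{r^2}$ (because the curvature is $0$ or $\pm 1$); so below a fixed constant scale the curved cases are, up to a harmless constant, Euclidean, and the analysis carried out in $\Euc^d$ transfers verbatim.

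To each cell $C$ of diameter $s$ I would attach a Steiner \emph{gadget}: a representative of $C$, together with $\bigOd{\eps^{(1-d)/2}}$ Steiner points organised by an angular $\Theta(\sqrt\eps)$-net of directions and by the $\bigO{\log\frac1\eps}$ dyadic subscales between $\eps s$ and $s$, wired with $\bigOd{\eps^{(1-d)/2}\log\frac1\eps}$ edges so that from the representative one can hop towards any target inside $C$, at each hop roughly halving the residual distance while keeping the direction to within $\bigO{\sqrt\eps}$. The single geometric fact needed is that such a hop costs a relative detour of only $\bigO\eps$; in $\Euc^d$ this is the familiar bound proportional to the square of the perpendicular deviation divided by the residual distance, and I would prove a version uniformly valid in $\Sph^d$ (for residual distance up to $\pi - \Omega(1)$) and in $\Hyp^d$ (for residual distance $\bigO 1$). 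Chaining $\bigO{\log\frac1\eps}$ hops inside $C$ and then, if needed, $\bigO 1$ further cells to land on the target, the relative errors telescope to $\bigO\eps$; summing gadget sizes over the $\bigOd n$ cells and over the $\bigO 1$ random shifts then yields a Steiner $(1+\bigO\eps)$-spanner with $\bigOd{\eps^{(1-d)/2}\log\frac1\eps\cdot n}$ edges, and a constant rescaling of $\eps$ finishes the bound. In $\Hyp^d$ the pairs at distance $\omega(1)$ — where the detour bound degrades to being proportional to the deviation itself and angular nets become useless — are instead served by the tree-like structure of \cite{hyperquadtree}, whose large-scale part alone has $\bigOd{\log\frac1\eps\cdot n}$ edges and which I would splice onto the gadgets at the threshold scale.

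For the dynamic claim: inserting or deleting a point is located in the navigation structure in $\bigO{\log n}$ time; the compressed quadtree changes at $\bigO 1$ cells, but since the gadget of a cell depends on which of its descendants are non-empty, the $\bigO{\log n}$ cells on the affected root path may need their gadgets rewired, at $\bigOd{\eps^{(1-d)/2}\log\frac1\eps}$ edges each, for a total of $\bigOd{\eps^{(1-d)/2}\log\frac1\eps\cdot\log n}$; because every edge is owned by $\bigO 1$ cells the bookkeeping stays consistent, and the $\Hyp^d$ tree part is maintained exactly as in \cite{hyperquadtree}.

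I expect the main obstacle to be the curvature-robust detour estimate and, more delicately, the interface in $\Hyp^d$ between the angular-net gadget (valid only up to scale $\bigO 1$) and the tree-like large-scale structure, arranged so that a shortest path in the spanner can cross from one regime to the other while still accumulating only $\bigO\eps$ relative error overall. A secondary, purely combinatorial difficulty is pinning the edge count at $\log\frac1\eps$ rather than $\log(\mathrm{spread})$ while the gadgets are attached per cell — which is exactly what forces the iterated halving routing, its telescoping error analysis, and, in the dynamic setting, the $\bigO 1$-ownership invariant on every edge.
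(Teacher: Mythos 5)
There are genuine gaps, and the most serious one is the hyperbolic regime above constant scale. You confine your angular-net gadget to residual distance $\bigO{1}$ in $\Hyp^d$ (correctly: the quadratic-detour estimate provably degenerates to the triangle inequality at larger distances), but you then delegate all pairs at distance $\omega(1)$ to ``the tree-like structure of \cite{hyperquadtree}, whose large-scale part alone has $\bigOd{\log\frac1\eps\cdot n}$ edges.'' No such structure is available in that prior work: its spanner costs $\bigO{\eps^{-d}\log\frac1\eps\cdot n}$ edges overall, and the intermediate regime of distances between $1$ and $\frac1\eps$ is exactly where the present paper has to do its hardest work. The paper needs a new lca-type shifting statement (\Cref{lem:lca}), an additive detour bound $|px|+|xq|\le|pq|+\delta^2$ across a separating hyperplane at distance at least $2$ (\Cref{lem:Hsplit2}), the exponential convergence of geodesics to a canonical point of each cell (\Cref{lem:lambert}), and, for $d=2$, a \emph{non-bipartite} construction in which Steiner points are wired to Steiner points along $\bigO{d}$ trees (\Cref{thm:Hmain}); the bipartite lower bound (\Cref{lem:bipartitelowerbound}) shows that any scheme in which input points only attach to per-cell Steiner sets must pay $\Omega(n/\eps)$ in $\Hyp^2$, so ``splicing onto the gadgets at the threshold scale'' cannot be a routine interface step — it is where the claimed bound would otherwise fail.

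The second gap is that your per-cell gadget does not obviously serve arbitrary pairs inside the cell. An angular $\Theta(\sqrt\eps)$-net of directions at dyadic subscales \emph{anchored at one representative} supports near-geodesic routing only for paths starting at that representative; for a pair $p,q$ with $|pq|=\Theta_d(\diam(C))$, after one hop you are at a position that depends on $p$ and $q$, so halving the residual while keeping direction within $\bigO{\sqrt\eps}$ would require Steiner points forming a $\sqrt\eps\sigma$-net of \emph{positions} throughout $C$ at every scale $\sigma$, which is $\eps^{-\Theta(d)}$ points, not $\eps^{(1-d)/2}\log\frac1\eps$. You also do not say how input points attach to gadgets with the claimed degree. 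The paper avoids this by a different mechanism: Steiner points are $\sqrt\eps\cdot\diam$-coverings of the $(d-1)$-dimensional \emph{boundaries} of subcells a constant number of levels below $C$, so that some Steiner point lies within perpendicular distance $\sqrt\eps\,\Theta(|pq|)$ of the segment $pq$ and \Cref{lem:split} gives the $1+\eps$ detour directly with two hops, with edges added only for pairs adjacent in $\bigO{d}$ locality-sensitive orders (this is also what makes updates $\bigO{d^2\log n}$ plus reporting — your plan of rewiring gadgets along a root path additionally needs the compressed hierarchy to have logarithmic depth, which it does not in general). Finally, the constant-distortion ``transfers verbatim'' remark is not sound for a $(1+\eps)$ guarantee; your separately promised curvature-robust detour estimate is indeed necessary (the paper proves it intrinsically in all three geometries), and the intrinsic spherical quadtree you posit is constructed in the paper via two stereographic projections with an explicit distortion analysis and a special top cell, not by assertion.
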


Our edge count nearly matches the lower bound $\bigOm{\eps^{(1-d)/2}n}$ given by Bhore and Tóth~\cite{BhoreT22} for Euclidean Steiner spanners, up to the factor $\log\frac{1}{\eps}$. Our $\Oh_d$ notation hides a factor of $d^{\bigO{d}}$. Note that the Euclidean lower bound is also a valid lower bound in $\Sph^d$ and $\Hyp^d$, as both spaces are locally Euclidean, meaning that any hard instance in $\Euc^d$ can be embedded both in $\Sph^d$ and $\Hyp^d$ with arbitrarily small distortion.

Thus, we nearly settle \Cref{q:tradeoff} in the three geometries cases of constant dimension, up to the $\log 1/\eps$ term.  In Euclidean space our edge count matches a recent result of Chang~et~al.~\cite{ChangC0MST24}, however, our technique is completely different and arguably simpler. The main advantage is that our spanner can be maintained dynamically when points are added to $P$ or deleted from $P$, using $\Oh_{d,\eps}(\log n)$ time updates. As a result of the dynamicity and the simple structure, our spanner can readily act as a data structure for classic query problems such as dynamic approximate nearest neighbours and (dynamic approximate) bichromatic closest pair.
From the Steiner spanner constructions, one can also directly get (dynamic) Steiner tree covers of size $\bigOd{\eps^{(1-d)/2} \log \frac{1}{\eps}}$, which generalise the result of \cite{ChangC0MST24}.

The main challenge in establishing \Cref{thm:main} lies in the hyperbolic variant. While our spanner for $\Sph^d$ can be relatively easily derived from our Euclidean construction using geometric transformations, our spanner for $\Hyp^d$ is substantially different. In our approach, we use Euclidean (doubling space) ideas for small distances and tree-based ideas (constant-additive spanner) for sufficiently large distances, but in the hyperbolic setting, this leaves a gap for point pairs at distances between $1$ and $\frac{1}{\eps}$. This is the most challenging regime and requires mixing ideas from both approaches, as well as new ideas.  We heavily build on the techniques of~\cite{hyperquadtree}, and end up improving the size of the hyperbolic Steiner spanner from $\Oh_d(\eps^{-d}\log(1/\eps) n)$ (given in \cite{hyperquadtree}) to $\Oh_d(\eps^{(1-d)/2}\log(1/\eps) n)$.

Note that the spanners of \Cref{thm:main} are bipartite in case of $\Euc^d$ and $\Sph^d$, and we are also able to design bipartite Steiner spanners in $\Hyp^d$ for any $d\geq 3$ with the same properties as stated in \Cref{thm:main}. Surprisingly, we prove that a bipartite Steiner spanner of size $o(n/\eps)$ does not exist in $\Hyp^2$.
As a corollary of this lower bound, we obtain an $\Omega(n\log n)$ lower bound for the edge count of a hyperbolic (non-Steiner) $2$-spanner in the plane.
We prove that there exist $2$-spanners in $\Hyp^d$ of size $\Oh_{d,\eps}(n\log n)$, matching our lower bound, and settling \Cref{q:twospan}.

\begin{restatable}{theorem}{twospanner}
\label{thm:twospanner}
    Let $P \subset \Hyp^d$ be a set of $n$ points.
    We can construct a $2$-spanner for $P$ with $d^{\bigO{d}} \cdot n \log n$ edges in $d^{\bigO{d}} \cdot n \log n$ time.
\end{restatable}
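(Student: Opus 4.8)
The plan is to build the $2$-spanner as the union of two edge sets, splitting the pairs of points according to whether their distance is below or above a threshold $D_0$, a constant (possibly depending polynomially on $d$) fixed at the end of the analysis.

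For the \emph{close} pairs, those $u,v\in P$ with $\distHd(u,v)\le D_0$, I would use that $\Hyp^d$ is locally Euclidean: the metric ball of radius $D_0$ around any point is bi-Lipschitz to a Euclidean $d$-ball with distortion bounded in terms of $D_0$ only, so such a ball has doubling dimension $\bigOd 1$. Working inside the small-scale (Euclidean-like) part of the hyperbolic quadtree $\cQ$ of~\cite{hyperquadtree}---whose cells shrink geometrically and which, at scale $\Theta(D_0)$, has $\bigOd 1$-size neighbour lists---a standard spanner construction on the $\bigOd n$ bounded-diameter neighbourhoods that occur at that scale yields a $(1+\eps_0)$-spanner of all close pairs for a constant $\eps_0\le 1$, using $\bigOd 1$ edges per cell and $d^{\bigO d}\,n$ edges overall. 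As $1+\eps_0\le 2$, the close pairs are settled, and this part carries no $\log n$ factor.

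For the \emph{far} pairs, $\distHd(u,v)>D_0$, I would route through $\cQ$. Pick for every cell $A$ of the (compressed) quadtree a representative $p_A\in P\cap A$, and put an edge between $p_A$ and the representative of the parent of $A$. Because there is no geometric decay among ``large'' cells, a pure parent--child path can be far too long, so I would also add \emph{binary jump pointers}: for each cell $A$ and each $0\le j\le\lceil\log_2 n\rceil$, an edge from $p_A$ to the representative of the $2^j$-th proper ancestor of $A$. This gives $\bigO{\log n}$ edges per cell, hence $d^{\bigO d}\,n\log n$ edges, and $\cQ$, the representatives and the pointers can be built in $d^{\bigO d}\,n\log n$ time. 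To bound the stretch of a far pair, let $C_u,C_v$ be the leaf cells of $u,v$ and $C$ their lowest common ancestor. Since $\Hyp^d$ is $\delta$-hyperbolic with an absolute constant $\delta$ and $\cQ$ is tree-like, $C$ lies close to the geodesic $[u,v]$, so $\distHd(u,C)+\distHd(C,v)\le\distHd(u,v)+\bigOd 1$. Climbing from $p_{C_u}$ up to $p_C$ (and symmetrically from $p_C$ down to $p_{C_v}$) is realised by $\bigO{\log\distHd(u,v)}$ jump-pointer edges; as the cells along an ancestor chain stay within $\bigOd 1$ of a single geodesic ray, these edge lengths telescope up to $\distHd(u,C)+\bigOd{\log\distHd(u,v)}$. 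Adding the two $\bigOd 1$-length hops $u\rightsquigarrow p_{C_u}$ and $p_{C_v}\rightsquigarrow v$, the spanner path has length at most $\distHd(u,v)+\bigOd{\log\distHd(u,v)}$; since $\log t/t$ is decreasing, this is at most $2\,\distHd(u,v)$ once $D_0$ exceeds the hidden $\bigOd\cdot$ constant. Together with the close-pair edges this gives a $2$-spanner with $d^{\bigO d}\,n\log n$ edges.

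The hard part, I expect, is the far-pair analysis in the ``large-cell'' regime of $\cQ$---exactly the regime in which $\Hyp^d$ fails to be doubling. One must show that the quadtree's lowest-common-ancestor cell genuinely tracks the true hyperbolic geodesic up to a $\mathrm{poly}(d)$ additive error (a ``thin bigon'' argument adapted to the particular cell decomposition of~\cite{hyperquadtree}, not just the abstract $\delta$-hyperbolicity), and that the drift accumulated along a chain of jump pointers really is $\bigOd 1$ per cell rather than growing; this is where the technology of~\cite{hyperquadtree} is needed in full. A secondary point is making the two regimes meet cleanly at $D_0$---pairs that are hyperbolically close but separated at a high level of $\cQ$ must still be caught by the close-pair construction, which constrains how the quadtree's scale parameter and its neighbour lists at scale $\Theta(D_0)$ are set---and checking that all hidden constants really collapse to $d^{\bigO d}$.
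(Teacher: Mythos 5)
Your small-distance half is fine in spirit (the paper does the same thing: partition by level-$(4+\log\log d)$ cells and apply the doubling-type $(1+\eps)$-spanner of \Cref{thm:doublingspanner} with $\eps=\tfrac12$), but the far-pair analysis has a genuine gap, in fact two. First, with a single quadtree the lowest-common-ancestor cell of $u$ and $v$ need \emph{not} lie anywhere near the geodesic $uv$: $\delta$-hyperbolicity of $\Hyp^d$ says nothing about where the quadtree LCA sits, and two nearby points straddling a persistent cell boundary (e.g.\ the ``main branch'' of the binary tiling) have an LCA arbitrarily high above their geodesic, so climbing to it and back gives unbounded stretch. This is exactly what the paper's \Cref{lem:lca} is for, and it only holds for \emph{some} shift among the $\bigO{d}$ shifted quadtrees, so you would at minimum need to run your construction on all shifts and argue per pair about the good shift. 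Second, and more fatally, even in the good shift your route passes through an \emph{arbitrary representative} $p_C\in P\cap C$ of the LCA cell $C$. Your bound $\distHd(u,C)+\distHd(C,v)\le\distHd(u,v)+\bigOd 1$ is a statement about the distance to the \emph{cell}; the distance to its representative can exceed this by $\Theta(\diam(C))$, and at the LCA level $\diam(C)$ is of order $d\sqrt d\cdot\distHd(u,v)$ (or worse), so the detour is a constant-factor-in-$d$ multiplicative loss, not an additive $\bigOd 1$ or $\bigOd{\log\distHd(u,v)}$ term. The same problem breaks the telescoping claim along the jump-pointer chain: intermediate ancestor cells are huge, and nothing forces their representatives to lie near the vertical ray above $u$.

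The paper's proof is built precisely to avoid this. It first constructs a \emph{bipartite Steiner} spanner for the far pairs with $d^{\bigO d}n\log n$ edges via a centroid decomposition of the tree $T^{4+\log\log d}_i(P)$ (this is where the $\log n$ comes from, playing the role of your jump pointers), placing a $\tfrac14$-covering of Steiner points on the centroid cell's boundary; \Cref{lem:lambert} (exponential convergence of geodesics toward the point $s_C$) guarantees that every far pair is connected to a common Steiner point within distance $\tfrac14$ of its geodesic -- a guarantee no arbitrary representative provides. It then de-Steinerises as in \Cref{thm:2spanner}, replacing each Steiner point $s$ by the \emph{closest} input point $s'$ connected to it, and the factor exactly $2$ (rather than $2+\text{const}$) is recovered by a bespoke hyperbolic-trigonometric argument (Lambert quadrilateral plus the hyperbolic Pythagorean theorem) that crucially uses both $|ss'|\le\min\{|sp|,|sq|\}$ and $|pq|\ge 8$. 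Without an analogue of these two ingredients -- Steiner points pinned to where geodesics must pass, and a closest-point replacement rule -- your representative-plus-jump-pointer routing cannot be pushed below stretch $\bigOd 1$, let alone to stretch $2$.
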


By incorporating so-called \emph{transitive closure spanners} on certain trees obtained from hyperbolic quadtrees, we are also able to get Steiner spanners with $\eps$ \emph{additive} error for any $\eps>0$ whose size is very close to linear in $n$: it has size $\Oh_{\eps,d}(n\log \alpha(n))$, where $\alpha$ is the inverse Ackermann function.

\begin{restatable}{theorem}{additivespanner}\label{thm:additive}
    Let $\eps \in (0, \frac{1}{2}]$ and $P$ be a set of $n$ points in $\Hyp^d$.
    An $\eps$-additive Steiner spanner for~$P$ with $d^{\bigO{d}} \eps^{(1-d)/2} \log\frac{\alpha(n)}{\eps} \cdot n$ edges can be constructed in $d^{\bigO{d}} \eps^{(1-d)/2} \log\frac{\alpha(n)}{\eps} \cdot n \log n$ time.
\end{restatable}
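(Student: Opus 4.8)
The plan is to separate the point pairs by their hyperbolic distance. For a pair $u,v\in P$ with $\distHd(u,v)\le 1$, the Steiner $(1+\eps)$-spanner of \Cref{thm:main} already contains a path of weight at most $(1+\eps)\distHd(u,v)\le\distHd(u,v)+\eps$, and it uses only $\bigOd{\eps^{(1-d)/2}\log\tfrac1\eps\cdot n}$ edges, which is within budget since $\log\tfrac1\eps\le\log\tfrac{\alpha(n)}\eps$. So it suffices to add, for every pair with $\distHd(u,v)>1$, a Steiner path of weight at most $\distHd(u,v)+\eps$, at a total cost of $\bigOd{\eps^{(1-d)/2}\log\tfrac{\alpha(n)}\eps\cdot n}$ further edges.

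For the far pairs I would work with the (compressed) hyperbolic quadtree of \cite{hyperquadtree}. For $u,v$ far apart their geodesic enters and leaves the lowest common ancestor cell within $\bigO1$ of its apex, so the route has the usual shape: climb from $u$ along the chain of ancestor cells to the LCA cell, cross over to the $v$-side, and descend to $v$. Two improvements over the known $\bigO1$-additive constructions are needed. First, the cross-over must cost only $\bigO\eps$: I would equip every cell that can serve as an LCA with a $\sqrt\eps$-net of Steiner portals in a bounded neighbourhood of its apex, over $\bigO{\log\tfrac1\eps}$ dyadic scales --- this is exactly the well-separated-pair-style Steiner gadget that already underlies \Cref{thm:main}, it costs $\bigOd{\eps^{(1-d)/2}\log\tfrac1\eps}$ portals and edges per cell, and it ensures that whatever direction the $u$- and $v$-sides approach the LCA from, there are portals matching them so that the defect of the turn is $\bigO\eps$ (the defect of a hyperbolic ``turn'' by angle $\theta$ between two long geodesic segments is $\Theta(\theta^2)$, so a $\sqrt\eps$-net suffices). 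Summed over the $\bigO n$ cells this is $\bigOd{\eps^{(1-d)/2}\log\tfrac1\eps\cdot n}$.

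Second, the climb and descent are staircases that only shadow a geodesic ray, and even after compression a root-to-leaf chain of the quadtree can have $\Theta(n)$ cells, whose naive traversal accumulates both too many edges over all source points and --- worse --- too much additive error, since the staircase overshoots the shadowed geodesic by a constant factor per branching cell. Here I would route the climb toward the chosen $u$-side LCA portal along the geodesic it shadows (the discrepancy to the true $u$--$v$ geodesic is exponentially small away from the LCA and $\bigO1$ near it, hence absorbed by the LCA net), and replace cell-by-cell traversal by a \emph{transitive closure spanner} on the trees obtained from the quadtree, in the spirit of \cite{additivespanner}: taking the row index $k$ of the inverse-Ackermann hierarchy to be $\Theta(\alpha(n))$ yields a shortcut structure with $\bigO{n\log\alpha(n)}$ edges (since $\lambda_k(n)=\Theta(\log\alpha(n))$ for this $k$) that joins every ancestor--descendant pair of cells by a path of $\bigO{\alpha(n)}$ shortcut edges carrying their true hyperbolic lengths; crucially, one shows that along a quadtree spine the shortcut endpoints stay close enough to the shadowed geodesic --- with turn defects that decay geometrically along the route --- that the whole shortcutted climb overshoots by only $\bigO\eps$ in total, the $\bigO{\log\tfrac1\eps}$ ``large'' joints being additionally refined through the $\sqrt\eps$-nets. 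Combining, the route $u\leadsto(\text{spine})\leadsto(\text{LCA net, }u\text{-side})\to(\text{LCA net, }v\text{-side})\leadsto(\text{spine})\leadsto v$ has weight $\distHd(u,v)+\bigO\eps$; rescaling $\eps$ by a constant gives additive error exactly $\eps$, the edge count is $\bigOd{\eps^{(1-d)/2}\log\tfrac1\eps\cdot n}+\bigO{n\log\alpha(n)}=\bigOd{\eps^{(1-d)/2}\log\tfrac{\alpha(n)}\eps\cdot n}$, and the quadtree, the per-cell nets and the shortcut structure are computed in $\bigO{n\log n}$ time up to the per-cell factor.

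The main obstacle I anticipate is exactly the calibration in the last step: proving that a quadtree spine is a near-geodesic whose shortcutting introduces only geometrically decaying (hence summable) turn defects, and dovetailing this with the combinatorics of the inverse-Ackermann shortcut hierarchy so that $\bigO{\alpha(n)}$ hops, $\bigO{n\log\alpha(n)}$ shortcut edges, and $\bigO\eps$ additive error hold simultaneously. A secondary subtlety is the interface around $\distHd(u,v)\approx1$ and in the ``medium distance'' regime $[1,\tfrac1\eps]$, where neither the purely local argument nor the purely tree-like argument applies in isolation and one must reuse the mixed analysis developed for \Cref{thm:main}.
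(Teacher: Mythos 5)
Your overall architecture does match the paper's: small distances via the multiplicative spanner of \Cref{thm:doublingmain}, far pairs routed up an ancestor chain of the hyperbolic quadtree with Thorup-style transitive-closure shortcuts (\Cref{lem:transclosure}) to get $\bigO{\alpha(n)}$ hops, and a $\sqrt{\eps}$-scale net near the LCA exploiting the quadratic defect of \Cref{lem:Hsplit2}. However, the step you yourself flag as the main obstacle is a genuine gap, and the mechanism you propose to close it does not work. You want the shortcutted climb to overshoot by only $\bigO{\eps}$ because the ``turn defects decay geometrically along the route.'' But the shortcut endpoints must be Steiner points placed at fixed, pair-independent locations (one per cell, as in \Cref{lem:lambert}), and for a fixed per-cell placement the deviation of such a point from the pair-specific geodesic does \emph{not} decay along the chain: a point $p$ can sit at a constant fraction of the cell width away from the cell centre at \emph{every} level (think of a binary expansion that is badly centred at all scales), so each visited Steiner point can deviate by $\Theta(\eps)$ at its own scale. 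Even with the quadratic defect this accumulates to $\Theta(k\eps^2)$ over $k$ hops, and after shortcutting to $\bigO{\alpha(n)}$ hops you still only get $\bigO{\alpha(n)\,\eps}$ (or $\bigO{\alpha(n)\,\eps^2}$), not $\bigO{\eps}$; no summable geometric series is available.

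The paper closes this gap differently and more simply: it does not hope for decay, but rescales the granularity of the coarse trees, constructing $G_3(P,\eps/\alpha(n))$ (cells of height $2\log\frac{\alpha(n)}{\eps}$), so that \Cref{lem:lambert} guarantees a \emph{uniform} per-Steiner-point deviation of at most $\eps/\alpha(n)$; then the $\bigO{\alpha(n)}$-hop shortcuts of \Cref{lem:transclosure} give total accumulated error $\bigO{\eps}$, and the $\bigO{\log\frac{\alpha(n)}{\eps}}$ error left around the LCA is reduced to $\bigO{\eps}$ by rerunning the covering construction of \Cref{thm:Hmain}. This rescaling is exactly where the $\log\frac{\alpha(n)}{\eps}$ factor in the edge bound comes from, which your accounting misses. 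Two smaller inaccuracies: Thorup's Algorithm~L gives an $\bigO{\alpha(n)}$-hop transitive-closure spanner with $2n-1$ edges (your claim that $\lambda_k(n)=\Theta(\log\alpha(n))$ for $k=\Theta(\alpha(n))$ is off, though your looser $\bigO{n\log\alpha(n)}$ count would still fit the budget), and the construction-time claim is asserted rather than derived; neither is fatal, but the error analysis of the climb needs the rescaling argument to be a proof.
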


While no direct lower bounds are known for additive hyperbolic spanners, this stronger variant of an additive spanner is in stark contrast with the Euclidean setting where no subquadratic-size additive (Steiner) spanners can exist. In terms of \Cref{q:hypersteiner}, this is further evidence that pairwise distant points in $\Hyp^d$ are structurally simpler than their Euclidean counterparts. It is an interesting open question whether one can improve the edge count by removing the $\log \alpha(n)$ term from the bound of \Cref{thm:additive}.

Finally, we apply our spanners to two different problems. First, we observe that our spanner construction easily extends to certain closed Riemannian manifolds of constant curvature: in particular, closed constant-curvature orientable surfaces and higher-dimensional manifolds obtained as a quotient of the three base geometries using a finite isometry group.

Next, we demonstrate the usefulness of our spanners by establishing \emph{dynamic} data structures for approximate nearest neighbours (ANN) and bichromatic closest pair.
In the $(1+\eps)$-approximate bichromatic closest pair problem, we are given two point sets $R$ and $B$ and wish to find a pair $(r,b) \in R \times B$ whose distance is at most $1+\eps$ more than that of the closest pair between $R$ and $B$.
The dynamic version requires that the approximate bichromatic closest pair can be recalculated efficiently after point insertions and removals in $R$ and $B$.
Here, we improve over the result of \cite{GaoH24} in Euclidean space, which uses $\bigOd{\eps^{1-d} \log(1/\eps) n}$ space and handles updates in $\bigOd{\eps^{1-d} \log(1/\eps) \log n}$ time, with a data structure that uses $\bigOd{\eps^{(1-d)/2} \log(1/\eps) n}$ space and handles updates in $\bigOd{\eps^{(1-d)/2} \log(1/\eps) \log n}$ time.
We also improve on the result of \cite{hyperquadtree} in hyperbolic space and to the best of our knowledge give the first data structure for approximate bichromatic closest pair that is explicitly for spherical space.
The same holds for (dynamic) ANN in hyperbolic and spherical space.

\subsection{Technical overview and key ideas}

All of our results are built on quadtrees. A quadtree is a simple hierarchical decomposition that is often used for geometric approximation in Euclidean space. More precisely, each of our spanners are built on $\Oh(d)$ different shifted quadtrees. It is a well-known fact that in Euclidean space, one can find $\Oh(d)$ shifts with the following key property.
\begin{quote}
    (*) Any pair of points $p,q$ is contained in a cell of diameter $|pq|\cdot \Oh(d\sqrt{d})$.
\end{quote}

To get a set of shifted quadtrees in spherical space, we take two stereographic projections and construct shifted Euclidean quadtrees in both of these projections.
For a pair of points that are not too distant, one of the stereographic projections will always give constant distortion compared to Euclidean space, and Property (*) can be achieved in one of the shifts for this projection.
In hyperbolic space we use the quadtree of \cite{hyperquadtree} that already comes with the required guarantee.
For small enough distances, hyperbolic space also has a constant distortion compared to Euclidean space and as a result, the quadtrees share most of the properties of the Euclidean quadtree in constant-size neighbourhoods.

While constant distortion is enough to get constant-approximations, it is not enough for a $(1+\eps)$-approximation.
For this, we prove in \Cref{lem:split} that a simple approximation result in Euclidean space also carries through to spherical and hyperbolic space. The following lemma is the main reason why (at least in constant-diameter regions) we manage to get a dependency on $\eps^{(1-d)/2}$ instead of $\eps^{1-d}$.

\begin{restatable*}{lemma}{split}\label{lem:split}
    Let $p,q,x \in \mathbb{X}^d$ for $\mathbb{X}\in \{\Hyp,\Euc,\Sph\}$ and $y$ the closest point to $x$ on segment $pq$.
    Assume $\eps \in (0,1]$.
    If $|xy| \leq \sqrt \eps \min\{|py|,|qy|\}$, then (i) $|px| + |xq| \leq (1 + \eps) |pq|$ if\, $\mathbb{X} \in \{\Euc,\Sph\}$ and (ii) $|px| + |xq| \leq (1 + e^\Delta \eps) |pq|$ if\, $\mathbb{X} = \Hyp$ and $|pq| \leq \Delta$.
\end{restatable*}

With this lemma at hand, we can now give a Steiner $(1+\eps)$-spanner.
For any two points $p$ and $q$, we consider the shifted quadtree where the cell satisfying property (*) is found.
We ensure that within this cell, both $p$ and $q$ have some point nearby, say $p'$ and $q'$, respectively, such that $p'$ and $q'$ are both connected to a Steiner point that is chosen from a grid-like point set on a cell boundary separating $p'$ and $q'$.
This Steiner spanner proves \Cref{thm:main} for $\Sph^d$, $\Euc^d$ and small-diameter point sets in~$\Hyp^d$.

\begin{figure}
    \centering
    \includegraphics{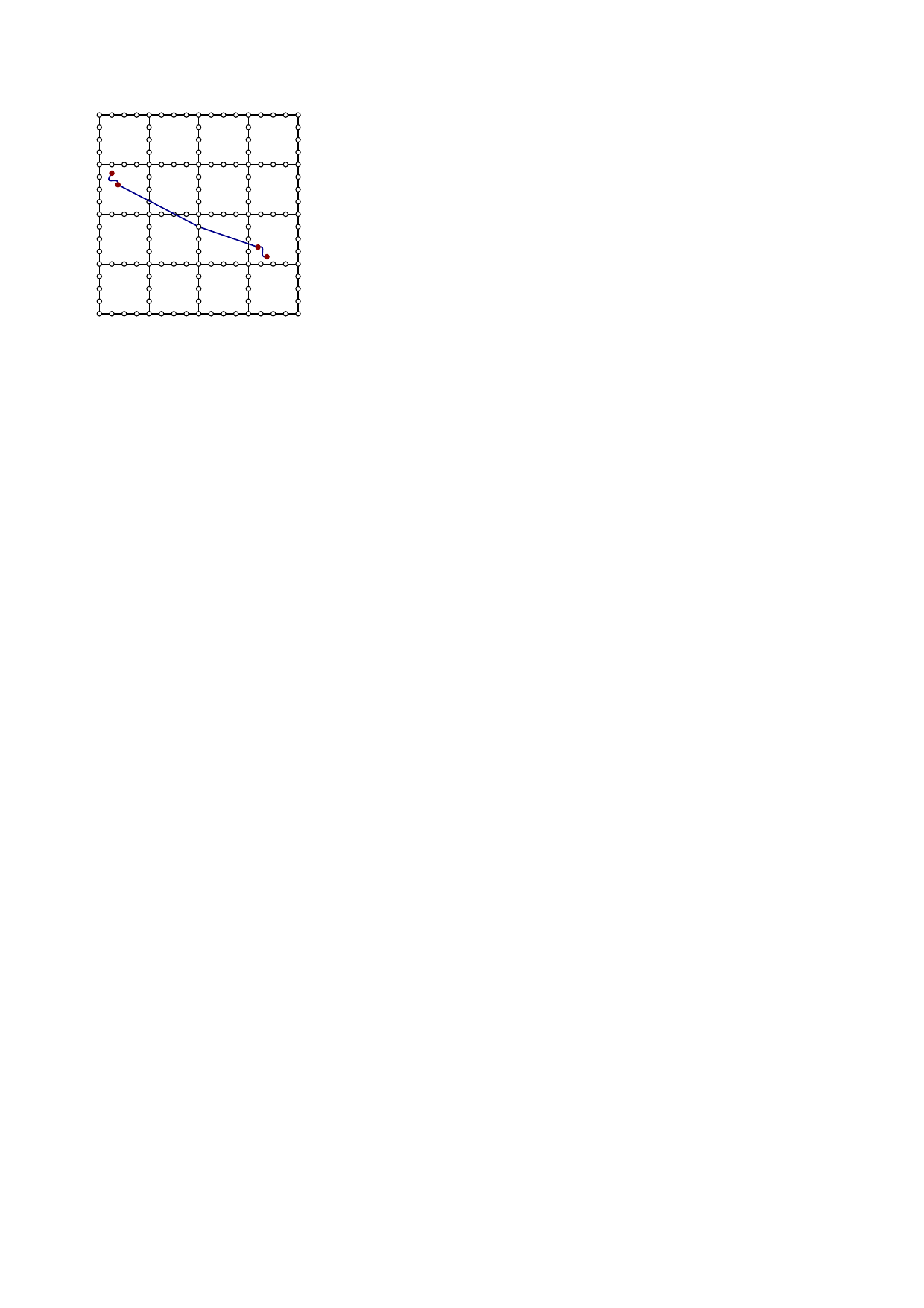}
    \qquad\qquad
    \includegraphics{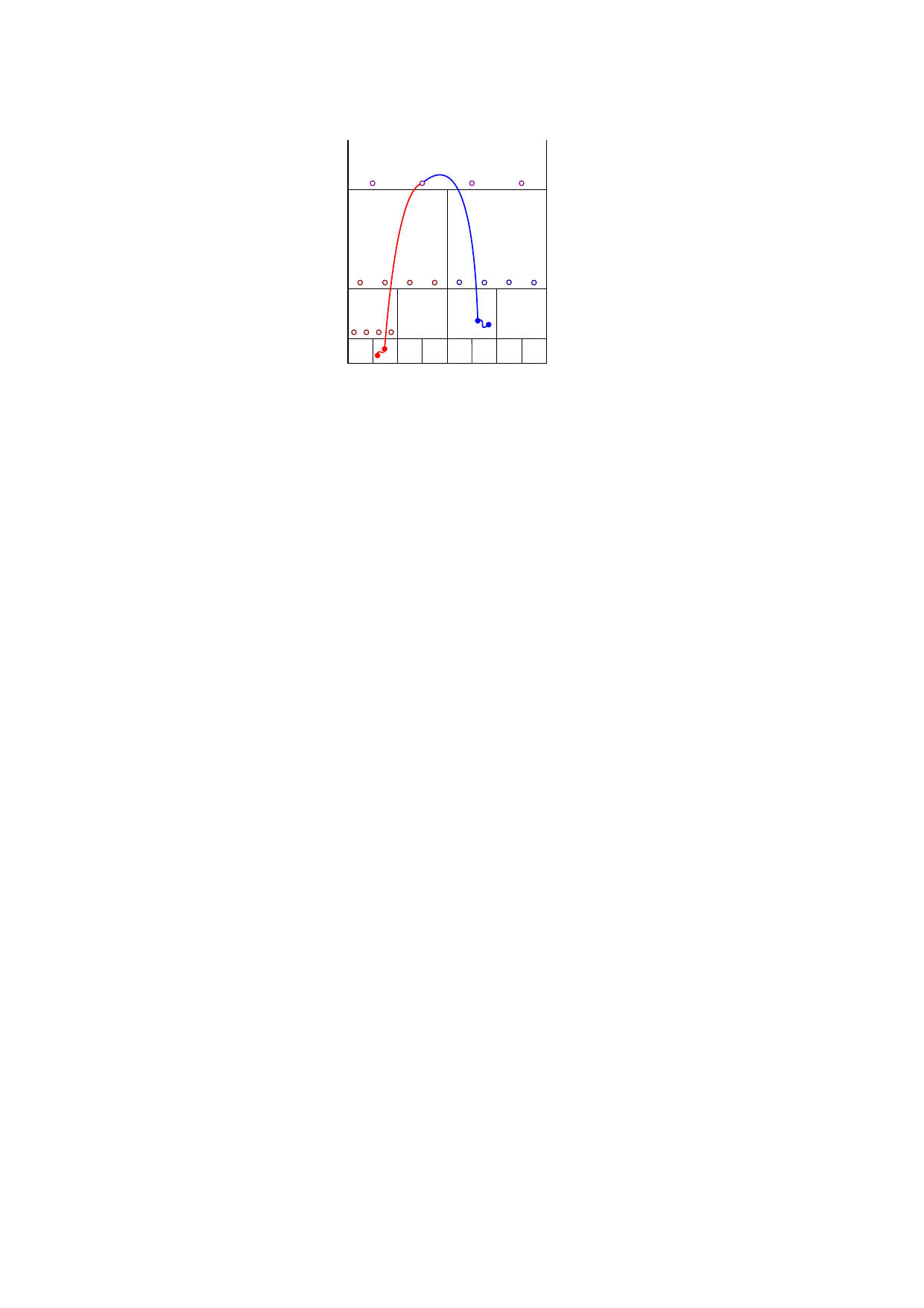}
    \caption{The Steiner spanner in $\Euc^2$ and the bipartite Steiner spanner in $\Hyp^2$.
    }
    \label{fig:steinerspanners}
\end{figure}

By construction this is a \emph{bipartite} Steiner spanner (i.e.\ every edge is between an input point and a Steiner point).
This makes it easy to turn it into a spanner without Steiner points.
For that, we take each Steiner point $s$ and connect all input points that were connected to $s$ to each other instead.
This maintains the same bound on the stretch, but adds more edges based on the degrees of the Steiner points.
In this case, the Steiner points have small degree, which leads to a total of $\bigOd{n/\eps^d}$ edges in \Cref{thm:doublingspanner}.

% Section 4
\subparagraph{Hyperbolic bipartite Steiner spanner.}
For super-constant hyperbolic distances, first of all the hyperbolic quadtree gets meaningfully different from a Euclidean quadtree.
The tiling given by level $0$ of the hyperbolic quadtree in $\Hyp^2$ resembles a binary tree. Cells of a given level $\ell$ correspond to subtrees of height $2^\ell$, therefore splitting a quadtree cell into its children corresponds to cutting the tree into subtrees at half its depth.
On top of this, many of the results that made the first construction work break down as the exponential expansion of hyperbolic space becomes an issue.
In particular, \Cref{lem:split} does not work for super-constant hyperbolic distances.
This is not a flaw with the proof: as the distances grow the hyperbolic Pythagorean theorem converges to the triangle inequality, which precludes us from proving something stronger for large hyperbolic distances.
At medium hyperbolic distances (between $1$ and $1/\eps$), we can give a result similar to \Cref{lem:split} which enables the dependence $\eps^{(1-d)/2}$ instead of $\eps^{1-d}$ when $d\geq 3$. 
\begin{restatable*}{lemma}{Hsplit}\label{lem:Hsplit2}
    Let $p,q \in \Hyp^d$ be separated from each other by hyperplane $H$ that has distance at least $2$ from both.
    For a point $x \in H$ at distance $\delta$ from $pq$, we get $|px| + |xq| \leq |pq| + \delta^2$.
\end{restatable*}
Note that when $|pq| = \bigTh{1}$ and $\delta = \sqrt\eps |pq|$ this is very similar result to \Cref{lem:split}, while when $|pq| = \bigTh{1/\eps}$ and $\delta = \eps |pq|$ we get the same guarantee from the triangle inequality; the most interesting regime of the lemma is in between these two values.
The hyperbolic bipartite Steiner spanner now follows from a similar construction as the one for smaller distances, with two major changes.
First, we replace \Cref{lem:split} with \Cref{lem:Hsplit2}.
Second, we only connect each point to its ``ancestors'' in the tiling (here one should think of ancestors in the binary tree), as opposed to all tiles, because in the hyperbolic quadtree there can be unboundedly many sibling tiles inside a parent cell.
With this Steiner spanner we can prove \Cref{thm:main} for all cases except for super-constant distances in $\Hyp^2$, where we get $\bigO{n \log(1/\eps)/\eps}$ instead of $\bigO{n \log(1/\eps) / \sqrt{\eps}}$ edges.
The coming lower bound shows that, surprisingly, having a weaker result for specifically $\Hyp^2$ is unavoidable unless we go for a very different, non-bipartite construction.

First, \Cref{thm:2spanner} gives a simple alternative procedure to turn a bipartite Steiner spanner into a spanner without Steiner points (as already observed by \cite{hyperquadtree}).
The procedure maintains the same edge bound, but in the worst case it doubles the stretch from $(1+\eps)$ to $2+2\eps$. As a result, we improve the $(2+\eps)$-spanner of~\cite{hyperquadtree}, but perhaps more importantly it lets us apply the coming spanner lower bound to bipartite Steiner spanners. Moreover, we will use the same strategy to make a (non-Steiner) $2$-spanner.

\subparagraph{Lower bound and 2-spanner.}
For the lower bound, we show that a $(2+\frac{1}{\ln n})$-spanner for a set of~$n$ points on a circle of radius $3 \ln n$ already requires $\bigOm{n \log n}$ edges.
This implies three things for these point sets:
\begin{itemize}
\item a $(2+\eps)$-spanner needs $\bigOm{n / \eps}$ edges, \item a bipartite Steiner $(1+\eps)$-spanner needs $\bigOm{n / \eps}$ edges (through \Cref{thm:2spanner}), and 
\item a $2$-spanner needs $\bigOm{n \log n}$ edges.
\end{itemize}
The argument goes roughly as follows.
First of all, the smallest and largest distance among the points differ only by a multiplicative factor close to $\frac{3}{2}$, which lets us conclude that the path between any two points $p$ and $q$ in the $(2+\frac{1}{\ln n})$-spanner can have at most $2$ hops. We denote by $m$ the middle vertex of the $2$-hop path.
Next, we observe that the location of $m$ is highly constrained:
the number of points on the circle separating $m$ from $p$ and $q$ must be at most a constant times larger than the number of points separating $p$ from $q$.
This observation lets us partition all pairs of points into $\bigTh{\log n}$ distance classes, each of which requires $\bigTh{n}$ edges, giving the $\bigOm{n \log n}$ lower bound.

We match this lower bound with a 2-spanner that has $\bigOd{n \log n}$ edges (in hyperbolic space of any dimension).
To get the 2-spanner, we use both ways of getting a spanner from a Steiner spanner.
For small distances we can get a $(1+\eps)$-spanner from \Cref{thm:doublingspanner}.
For larger distances, we first construct a new bipartite Steiner spanner that gives constant additive error and has $\bigOd{n \log n}$ edges.
We then turn this into a $2$-spanner, without the additional additive error: we bound the stretch more precisely using the specific geometric configuration.

\subparagraph{Non-bipartite Steiner spanners.}
To bypass the lower bound, we then work towards a non-bipartite Steiner spanner (i.e., one where Steiner points can be connected to Steiner points). This is our most involved construction, but arguably it is still relatively easy to describe.
The key to making this spanner work is to make use of the exponential divergence of lines in hyperbolic space: two long segments whose endpoints lie close together get exponentially close near their middle.
We capture this property in \Cref{lem:lambert}.
Directly from \Cref{lem:lambert}, we get \Cref{lem:verylargedist} which gives a quasi-isometric embedding of any hyperbolic point set into a \emph{union} of $\bigO{d}$ trees, similar to the much more involved quasi-isometric embedding of any hyperbolic metric space into a \emph{product} of $d$ trees \cite{EmbTreeProduct,EmbTreeProduct2}.
For our purposes, it implies a Steiner $(1+\bigO{\eps})$-spanner with only $\bigO{dn}$ edges for hyperbolic point sets with minimum distance $\frac{1}{\eps}\log\frac{1}{\eps}$.
To obtain a Steiner spanner for smaller distances, we use a more elaborate placement of Steiner points combined with ideas from the bipartite Steiner spanner and the bound of \Cref{lem:Hsplit2}.
This finally proves that our main theorem \Cref{thm:main} holds in all cases.

We can also get a Steiner spanner with \emph{additive} error $\eps$ by using the same general idea as in \Cref{thm:Hmain}. One of the issues here is that we accumulate some error in each hop along a spanner path, and we do not have a bound on the number of hops a spanner path might need to take. This problem can be mitigated by using a so-called transitive closure spanner to reduce the number of hops, and get the same error no matter how (hop-)distant two points are.

\subparagraph{Dynamic data structures.}
We can maintain our three multiplicative Steiner spanners dynamically by using the so-called Z-order or L-order for each of our shifted quadtrees. These are specific orderings of the input point set based on post-order traversals of the respective quadtree, i.e., our data structure is essentially $\bigO{d}$ differently ordered lists of our $n$ input points, as seen in locality-sensitive orderings~\cite{locsens,GaoH24}.

Comparing two points based on such an order is equivalent to asking which comes first if we add both to a quadtree and do a depth-first traversal of it, but these comparisons can be done in $\bigO{d}$ time without explicitly computing the quadtree.
The multiplicative Steiner spanner constructions are such that for any input point, the edges connected to it purely depend on its neighbours in the Z- or L-order, which allows for efficient updates.

\subparagraph{Organisation.} We explain the quadtrees and their main properties in \Cref{sec:prelims}, and introduce some key notation. \Cref{sec:doublingspace} has our first spanner construction for the Euclidean, spherical, and constant-diameter hyperbolic space settings. In \Cref{sec:hyperbolic} we tackle the large-scale hyperbolic setting, and this section contains most of our technical contributions. In \Cref{sec:applications} we list some direct consequences of our spanners, including spanners in some quotient spaces and closed surfaces, as well as data structures for dynamic approximate nearest neighbours and bichromatic closest pair. Finally, \Cref{sec:conclusion} concludes the paper by proposing some intriguing open problems.

\section{Preliminaries}\label{sec:prelims}
Throughout the paper, we will use $pq$ to denote the shortest curve (\emph{segment}) connecting $p$ and $q$ in the current geometry.
In particular, this means it may not appear as a straight line segment in the used model of that geometry.
Additionally, we use $|pq|$ for the length of this segment, again measured in the current geometry.
However, since $\Sph^d$ and $\Hyp^d$ are not normed spaces, the norm $\|\cdot\|$ of a vector in $\Reals^d$ is always unambiguously its Euclidean norm.
We will frequently need both the base-$2$ and the natural logarithm.
To that end, we reserve $\log$ for the former and use $\ln$ for the latter.

Our algorithmic results rely on compressed quadtree operations and therefore require access to floor and bitwise operations.
Apart from that, they hold both in the real RAM and word RAM model.
In particular, we do not need (hyperbolic) trigonometric functions despite these showing up in the distance formulas for spherical and hyperbolic space, since we only ever compare two distances between pairs of input points to each other. (In case of word RAM, we need to assume that points of $\Hyp^d$ are given with coordinates in the half-space model, as we do not have the capacity to convert between different models of hyperbolic geometry.)
In $\Sph^d$, any reasonable representation\footnote{Note that unlike $\Hyp^d$, the conversions between different representations of $\Sph^d$ and distance computations can typically be carried out with exponentially small error, which is sufficient for constructing $(1+\eps)$-spanners.} of points can work with our algorithm. We will have the points represented in polar coordinates.

\paragraph{Spherical geometry.}
Spherical geometry $\Sph^d$ can be modelled by the unit sphere in $\Euc^{d+1}$, where distances are measured along the sphere.
The segment (shortest curve) connecting two points is always an arc of the great circle through those points.

A common way to represent $\Sph^d$ is by a \emph{stereographic projection}, which maps all except one point of $\Sph^d$ to $\Euc^d$.
For this, first take $\Sph^d$ as a unit sphere in $\Euc^{d+1}$, then fix one point $o \in \Sph^d$ and let $H$ be the Euclidean hyperplane tangent to the point opposite $o$.
For any point $q \in \Sph^d \setminus \{o\}$, its stereographic projection is now given by the intersection of $oq$ with $H$.

\paragraph{Hyperbolic geometry.}
In this paper, we use the half-space model of hyperbolic geometry.
This assigns any point $p \in \Hyp^d$ coordinates $(x(p), z(p)) \in \Reals^{d-1} \times \Reals_{> 0}$ in the upper half-space, which also lets us refer to the positive $z$ direction as ``up''.
Important to note is that hyperbolic distances are not given by the Euclidean distances in the model, which also means that the shortest curve $pq$ between points $p,q \in \Hyp^d$ is often not a Euclidean segment.
In general, $pq$ is modelled by an arc from a Euclidean circle perpendicular to the Euclidean hyperplane $z=0$ (in particular its centre lies at $z=0$) and $pq$ has hyperbolic length
\[ |pq| = 2\arsinh\sqrt{\frac{\|x(p) - x(q)\|^2 + (z(p) - z(q))^2}{4z(p)z(q)}}, \]
but if $x(p) = x(q)$ then $pq$ is a vertical Euclidean line segment and $|pq| = \left|\ln\frac{z(p)}{z(q)}\right|$.
Note here that applying any isometry of $\Euc^{d-1}$ to $x(p)$ and $x(q)$ yields the same distance, as does multiplying all of $x(p), x(q), z(p)$ and $z(q)$ by the same positive factor.
Vertical Euclidean hyperplanes are also hyperbolic hyperplanes (as with vertical segments), but horizontal ones are not.
These are \emph{horospheres}, but we will not directly use any property of horospheres other than them being distinct from hyperplanes.

\paragraph{Hyperbolic quadtree.}
Here we describe the quadtree of \cite{hyperquadtree} while introducing new notation that will be useful throughout this paper.
Instead of being based on hypercubes, its cells are \emph{cube-based horoboxes}.
In the half-space model, a cube-based horobox $C$ is a Euclidean axis-parallel box with corners $(\x(C),\, \zd(C))$ and $(\xm(C),\, \zu(C))$ with its \emph{width} given by $\frac{\xm(C) - \x(C)}{\zd(C)} = (w(C),\dots,w(C))$ and \emph{height} $h(C) = \log\frac{\zu(C)}{\zd(C)}$.
This definition ensures that any two cube-based horoboxes with the same width and height are isometric.
Just like any box can tile Euclidean space, any horobox can tile hyperbolic space.

Instead of being based on the grid (the tiling of $\Euc^d$ with isometric hypercubes) like the Euclidean quadtree, the hyperbolic quadtree is based on the \emph{binary tiling}.
To construct a binary tiling, we start with the cube-based horobox $C$ with $\x(C) = (0, \dots, 0)$, $\zd(C) = 1$, $w(C) = \frac{1}{\sqrt{d-1}}$ and $h(C) = 1$.
Now we apply the isometry $T_{\sigma, \tau}(x,z) = \sigma \cdot (x + \tau, z)$ for all values $\tau = a / \sqrt{d-1}$ with $a \in \mathbb Z^{d-1}$ and $\sigma = 2^b$ with $b \in \mathbb Z$.
This gives a tiling that covers all of $\Hyp^d$ with isometric tiles; see \Cref{fig:binarytiling}.
Note that the two-dimensional tiling resembles a binary tree, hence the name.
In general the $d$-dimensional tiling bears resemblance to a $(d-1)$-dimensional Euclidean quadtree, where each ``layer'' of the tiling at a fixed $z$ value is a level of this quadtree.

We formalise this tree structure by thinking of the binary tiling as the infinite arborescence $T^0$, which has the tiles as its vertices and a directed edge from any tile to the tile directly above it.
We can then also generalise this to $T^\ell$ for $\ell \in \mathbb N$.
The tiling for $T^\ell$ is constructed by taking the cube-based horobox $C$ with again $\x(C) = (0, \dots, 0)$ and $\zd(C) = 1$, but now $w(C) = 2^{2^\ell-1} / \sqrt{d-1}$ and $h(C) = 2^\ell$.
Then, we apply $T_{\sigma, \tau}$ for all values $\tau = a \cdot w(C)$ with $a \in \mathbb Z^{d-1}$ and $\sigma = 2^{b \cdot h(C)}$ with $b \in \mathbb Z$.
The arborescence $T^\ell$ follows as before, where it is worth noting that the in-degree of vertices of $T^\ell$ has grown from $2^{d-1}$ to $2^{(d-1) \cdot 2^\ell}$.

The tiles are chosen exactly so that each tile of $T^{\ell+1}$ is the union of a tile of $T^\ell$ with its children.
Thus, the tilings for $T^\ell$ can be used to define the hyperbolic quadtree for levels $\ell \geq 0$.
Levels $\ell < 0$ are now constructed by splitting each cell at level $\ell+1$ up with a Euclidean quadtree split.
See \Cref{fig:quadtrees} for an illustration.
We will use the following properties of the hyperbolic quadtree from \cite{hyperquadtree} (with the source in \cite{hyperquadtree} of the property in parentheses):
\begin{lemma}[\cite{hyperquadtree}]\label{lem:hyperquadtree}
There exist quadtrees in $\Hyp^d$ with the following properties.
    \begin{enumerate}[label=(\roman*)]
        \item\label{lem:Hshift}
        For any $\Delta \in \Reals_+$, there is a set of at most $3d+3$ infinite hyperbolic quadtrees such that any two points $p,q \in \mathbb H^d$ with $\distHd(p,q) \leq \Delta$ are contained in a cell with diameter $\bigO{d\sqrt{d}} \cdot \distHd(p,q)$ in one of the quadtrees.
        (Theorem~1)
        
        \item\label{lem:Hcelldiam}
        If $C$ is a cell at level $\ell$, then $\diam(C) = \bigTh{2^\ell}$.
        (Theorem~7)

        \item\label{lem:Hdegree}
        A quadtree cell $C$ has $\max(2^d,2^{\bigO{ d \cdot \diam(C) }})$ children.
        (Theorem~7)

        \item\label{lem:Lorder}
        For two points $p,p' \in \Hyp^d$, we can check which comes first in the L-order for $\cQ^d_\infty$ by using $\bigO{d}$ floor, logarithm, bitwise logical and standard arithmetic operations.
        (Lemma~12)

        \item\label{lem:Hstarshaped}
        For any $p,q \in \Hyp^d$ and any hyperbolic quadtree cell $C$ that contains both, if $p$ lies below $z = \zd(C) \cdot 3^{h(C)/2}$ then the geodesic between $p$ and $q$ is completely contained in~$C$.
        (Lemma~18)
    \end{enumerate}
\end{lemma}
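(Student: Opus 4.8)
To establish the five properties recorded in the lemma I would handle each separately, and in each case treat the horobox levels $\ell \ge 0$ and the Euclidean levels $\ell < 0$ apart. I would prove (ii) first, since (iii) and (i) both rely on it. For $\ell \ge 0$ a level-$\ell$ cell is a cube-based horobox of height $h = 2^\ell$ and per-coordinate width $w = 2^{2^\ell-1}/\sqrt{d-1}$, so I would bound its diameter directly: its vertical extent has hyperbolic length $h\ln 2 = \bigTh{2^\ell}$, its bottom face has hyperbolic diameter $2\arsinh(\sqrt{d-1}\,w/2) = 2\arsinh(2^{2^\ell-2}) = \bigTh{2^\ell}$, and since the distance formula for $|pq|$ is increasing in $\|x(p)-x(q)\|$ and decreasing in $z(p),z(q)$, no pair inside the cell exceeds $\bigTh{2^\ell}$. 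For $\ell < 0$ a cell is an axis-parallel dyadic box obtained by $|\ell|$ Euclidean splits of a level-$0$ cell, with Euclidean side lengths $\bigTh{2^\ell\zd(C)}$ inside the slab $z \in [\zd(C), 2\zd(C)]$, on which the hyperbolic and rescaled Euclidean metrics agree up to a factor $2$; hence $\diam(C) = \bigTh{2^\ell}$ again. For (iii), when $\ell \ge 1$ I would use that a tile of $T^\ell$ decomposes into a ``top'' tile of $T^{\ell-1}$ together with its $T^{\ell-1}$-children, so it has $1 + 2^{(d-1)2^{\ell-1}}$ children; converting $2^{\ell-1}$ to $\diam(C)$ via (ii) gives $2^{\bigO{d\cdot\diam(C)}}$, while for $\ell \le 0$ the split is an ordinary $2^d$-way Euclidean split, and the maximum of the two is the claimed bound.

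For (v), in the half-space model, for $p,q$ in a level-$\ell$ cell $C$ with $\ell \ge 0$, I would argue that the geodesic $pq$ lies in the vertical $2$-plane through $p$ and $q$, that its horizontal projection is the Euclidean segment from $x(p)$ to $x(q)$ and hence stays inside the convex $x$-range of $C$, and that its $z$-coordinate never drops below $\min(z(p),z(q)) \ge \zd(C)$; so the only quantity to control is the maximum $z$. Writing $a = \|x(p)-x(q)\|$, $z_1 = z(p)$, $z_2 = z(q)$, the supporting circle of $pq$ has apex height $R$ with $R^2 = \tfrac{a^2}{4} + \tfrac{z_1^2 + z_2^2}{2} + \tfrac{(z_1^2 - z_2^2)^2}{4a^2}$, and this apex is attained on the arc precisely when $a^2 > |z_1^2 - z_2^2|$ (otherwise $pq$ is $z$-monotone with top $\max(z_1,z_2) \le \zu(C)$). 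In the apex case I would plug in $z_1 < \zd(C)\,3^{h/2}$, $z_2 \le \zu(C) = 2^h\zd(C)$, $a \le 2^{h-1}\zd(C)$ (the Euclidean diameter of the $x$-range of $C$), use the apex inequality $a^2 > |z_1^2 - z_2^2|$ to bound the last term by $|z_1^2 - z_2^2|/4$, and then invoke $3^h \le \tfrac34 4^h$ for $h \ge 1$ to conclude $R^2 < 4^h\zd(C)^2 = \zu(C)^2$ — this is exactly the calculation that forces the threshold constant $3^{h/2}$ — so $pq \subseteq C$. For $\ell < 0$ the cell is near-Euclidean and the geodesic deviates from its chord only to second order, so I would dispatch that case by the near-Euclidean comparison.

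I expect (i) to be the main obstacle. The plan is to exploit that the hyperbolic quadtree has two independent shift directions: the horizontal $x$-grid, which at a fixed $z$-scale behaves like a $(d-1)$-dimensional Euclidean quadtree, and the vertical $z$-scale, where a shift multiplies all coordinates by $2^\theta$, $\theta \in [0,1)$. Given $p,q$ with $r := \distHd(p,q) \le \Delta$, I would split on the size of $r$: if $r$ is below a fixed constant the pair lives in the Euclidean ($\ell < 0$) part and I would port the classical Euclidean shift lemma using $\bigO{d}$ horizontal shifts; if $r$ is larger I would pick the level $\ell$ with $2^\ell = \bigTh{r}$, choosing the constant large enough that the relative $x$-width $2^{2^\ell-1}/\sqrt{d-1}$ of a $T^\ell$-tile exceeds $2\sinh(r/2) \ge \|x(p)-x(q)\|/\max(z(p),z(q))$. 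Then a constant number of vertical shifts puts $z(p),z(q)$ into a common $T^\ell$-tile's $z$-range, the $\bigO{d}$ Euclidean horizontal shifts put $x(p),x(q)$ into a common tile's $x$-range, and that tile contains both points and has diameter $\bigTh{2^\ell} = \bigO{d\sqrt{d}}\cdot r$ by (ii). Packaging the small and large regimes together and the vertical and horizontal shifts together should give at most $3(d+1)$ quadtrees. The delicate points — and the reason this is the hard part — are making one quadtree family work across all scales $\le \Delta$, verifying the relative-width inequality so that a $\bigTh{r}$-diameter tile genuinely swallows the pair, and keeping the shift count linear in $d$.

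Finally, for (iv) I would observe that the L-order is the order in which points are first visited by a fixed depth-first traversal of $\cQ^d_\infty$, so comparing $p$ and $p'$ reduces to locating their lowest common ancestor cell and comparing, in that cell's fixed child order, the child subtrees containing the two points. On the $\ell < 0$ part this is the textbook comparison by the most significant differing bit of the interleaved coordinates, costing $\bigO{d}$ bit extractions, each realizable with a floor, a logarithm, a bitwise operation and arithmetic. On the $\ell \ge 0$ part I would use the explicit formulas for the $T^\ell$-tile containing a point (in $\lfloor\log_2 z\rfloor$ and in floors of scaled $x$-coordinates) together with the fact that tile indices at consecutive levels are related by bit-truncation (halving in $z$ and in each $x$-coordinate), so the coarsest shared tile is again found by a most-significant-differing-bit computation, after which one compares inside it; the whole comparison uses $\bigO{d}$ floor, logarithm, bitwise-logical and arithmetic operations, as claimed.
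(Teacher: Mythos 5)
You should note first that the paper does not prove this lemma at all: it is imported wholesale from the prior work \cite{hyperquadtree}, with each item attributed to a specific statement there (Theorem~1, Theorem~7, Lemma~12, Lemma~18). So your attempt is necessarily a different route --- a from-scratch reproof rather than a citation. Within that attempt, your sketches of (ii), (iii) and (v) are essentially sound; in particular your apex-height computation $R^2 = \tfrac{a^2}{4} + \tfrac{z_1^2+z_2^2}{2} + \tfrac{(z_1^2-z_2^2)^2}{4a^2}$, combined with $a \leq 2^{h-1}\zd(C)$, $z_1^2 < 3^h \zd(C)^2$, $z_2 \leq \zu(C)$ and the apex condition $a^2 > |z_1^2-z_2^2|$, does yield $R < \zu(C)$ and correctly explains where the threshold $3^{h(C)/2}$ comes from. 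Item (iv) is plausible as sketched.

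The genuine gap is in (i), which is also the deepest of the five properties. Your ``relative-width inequality'' compares $\|x(p)-x(q)\|/\max(z(p),z(q))$ against the tile width $2^{2^\ell-1}/\sqrt{d-1}$, but that width is measured relative to the tile's \emph{bottom} $\zd$, which can lie a factor $2^{2^\ell}$ below the points' own heights. By construction of the binary tiling, the Euclidean $x$-extent of a level-$\ell$ tile measured at the \emph{top} of its $z$-band is always only a constant fraction of $\zu$, independent of $\ell$; so if $p$ and $q$ sit near the top of their band, no choice of a larger level $\ell$ makes the tile wide relative to them, and ``a constant number of vertical shifts puts $z(p),z(q)$ into a common $T^\ell$-tile's $z$-range'' is not sufficient --- the pair must additionally land in the \emph{lower} portion of the band. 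This is exactly the content that the actual argument supplies via the three-shift one-dimensional lemma (\Cref{lem:1Dshift}, which guarantees one point lies in the lower third of its cell) together with a bound tying the geodesic's apex height to $\|x(p)-x(q)\|$ (cf.\ \Cref{lem:lcadiff} and the way both are combined in the proof of \Cref{lem:lca}). You flag this as a ``delicate point'' but do not resolve it, so as a standalone proof of part~\ref{lem:Hshift} the proposal is incomplete; the same combination of vertical and horizontal shifts also has to be argued to work uniformly for all pairs with $\distHd(p,q)\leq\Delta$ from one fixed family of $3d+3$ quadtrees, which your level-$\ell$-depends-on-$r$ phrasing leaves open.
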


Since we will make frequent use of trees and arborescences, we also introduce some related notation.
Given two cells $C, C' \in V(T)$, we let $\lca(C, C', T)$ denote their lowest common ancestor in $T$.
Given a point $p \in \Hyp^d$, we let $\cell(p,T)$ denote the cell of $T$ containing $p$, and given a second point $q \in \Hyp^d$ we use the shorthand $\lca(p,q,T) = \lca(\cell(p, T), \cell(q, T), T)$.

Additionally, the structure of the quadtree and the tilings $T^\ell$ make it useful to occasionally look at the projections $\pi_x$ and $\pi_z$, which map points to only their $x$- or $z$-coordinates.
We also use the modified mappings $\tilde\pi_x(p) = \sqrt{d-1} \cdot \pi_x(p)$ and $\tilde\pi_z(p) = \log\pi_z(p)$ from \cite{hyperquadtree}, since they can map hyperbolic quadtree cells to nicer $(d-1)$-dimensional and one-dimensional Euclidean quadtree cells.

\section{Euclidean, spherical and small-scale hyperbolic geometry} \label{sec:doublingspace}
We first restrict to Euclidean, spherical and small-diameter hyperbolic geometry.
Intuitively, these are all ``close enough'' to Euclidean geometry to allow for the same general techniques (for example, they have bounded doubling dimension).

\subsection{Quadtrees}
As the basis for all our results, we use \emph{infinite quadtrees} tailored to specific geometries, defined as follows:
\begin{enumerate}[label=\textbf{QT\arabic*}]
    \item\label{it:qt} An infinite quadtree has some bounded region as its \emph{root} and some $L \in \mathbb Z$ as its highest level.
    For any given level $\ell \leq L$, the infinite quadtree subdivides the root into regions called \emph{cells}.
    Each cell at level $\ell$ is the disjoint union of its \emph{child cells} at level $\ell-1$, giving the quadtree its tree structure.
    \item\label{it:celldiam} A cell at level $\ell$ has diameter between $2^\ell / \sqrt{c_{\diam}d}$ and $2^\ell \cdot \sqrt{c_{\diam}d}$, where $c_{\diam}$ is a universal constant.
    \item\label{it:qtdegree}
    An infinite quadtree has \emph{bounded growth} if any cell has at most $2^{dk} \cdot d^{c_\text{growth} d}$ descendant cells $k$ levels below it, where $c_\text{growth}$ is a universal constant.
    \item\label{it:smallbcell} We call a cell $C$ a \emph{small-boundary cell} when for any $\eps \in (0,1]$ there is an $\eps\diam(C)$-covering on its boundary (i.e.\ a set of points such that the balls of radius $\eps\diam(C)$ around these points cover the boundary of $C$) of size at most $d^{c_{\text{cover}} d}/\eps^{d-1}$, where $c_{\text{cover}}$ is a universal constant.
\end{enumerate}

\paragraph{Construction.}
In the Euclidean case, we use the standard infinite quadtree.
For our purposes, we define it such that at any level $\ell \in \mathbb Z$ the cells form a $d$-dimensional grid of side length $2^\ell / \sqrt{d}$, aligned such that the origin is a grid vertex.
It is well-known that this has all described properties.
We also use the following lemma, proven in this form by Chan, Har-Peled and Jones \cite{locsens}.
\begin{lemma}[Lemma 3.7 of \cite{locsens}]\label{lem:locsensshift}
    Consider any two points $p,q \in [0,1)^d$, and let $\cQ$ be the infinite quadtree of $[0,2)^d$.
    For $D = 2\lceil d/2 \rceil$ and $i = 0, \dots, D$, let $v_i = (i / (D+1), \dots, i / (D+1))$.
    Then there exists an $i \in \{0, \dots, D\}$, such that $p + v_i$ and $q + v_i$ are contained in a cell of $\cQ$ with side length $2(D+1) \|p-q\|$.
\end{lemma}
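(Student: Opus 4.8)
The plan is the classical ``shifted dyadic grids along the diagonal'' argument. Assume $p\neq q$, write $r=\|p-q\|$, and let $\ell$ be the smallest integer with $2^{\ell}\ge (D+1)r$; set $s=2^{\ell}$, so that $(D+1)r\le s\le 2(D+1)r$ and hence $|p_j-q_j|\le r\le s/(D+1)$ for every coordinate $j$. (If this $\ell$ would exceed the root level $L=1$, take $\ell=1$ instead; then every shifted copy lies in the single root cell $[0,2)^d$ and we are done. Since $\ell=1$ is trivial for the same reason, assume $\ell\le 0$.) For $x\in[0,2)^d$ the level-$\ell$ cell of $\cQ$ containing $x$ is the product of the intervals $[\lfloor x_j/s\rfloor s,(\lfloor x_j/s\rfloor+1)s)$, so $p+v_i$ and $q+v_i$ share a cell exactly when, for every $j$, no integer multiple of $s$ lies in the half-open interval between $p_j+\tfrac{i}{D+1}$ and $q_j+\tfrac{i}{D+1}$. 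Call $i$ \emph{bad for $j$} when such a multiple exists; it suffices to find an $i$ that is bad for no coordinate.

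The crux is a scale-independent spreading property of the diagonal offsets, and it is exactly here that $D+1=2\lceil d/2\rceil+1$ being \emph{odd} is used. I would first show that reducing $\tfrac{0}{D+1},\dots,\tfrac{D}{D+1}$ modulo $s=2^{\ell}$ gives precisely the arithmetic progression $0,\tfrac{s}{D+1},\tfrac{2s}{D+1},\dots,\tfrac{Ds}{D+1}$ inside $[0,s)$. Multiplying through by $(D+1)2^{-\ell}$, this is the statement $\{\,i\cdot 2^{-\ell}\bmod (D+1):i=0,\dots,D\,\}=\{0,\dots,D\}$, which holds because $2^{-\ell}$ is a unit in $\mathbb Z/(D+1)\mathbb Z$. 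Then, fixing $j$ and writing $a=\min(p_j,q_j)$ and $\delta=|p_j-q_j|$, a direct check of the half-open cell convention shows that $i$ is bad for $j$ iff $\bigl(a+\tfrac{i}{D+1}\bigr)\bmod s$ lies in the window $[\,s-\delta,\ s\,)$, which has length $\delta\le s/(D+1)$. By the spreading property the numbers $\bigl(a+\tfrac{i}{D+1}\bigr)\bmod s$ form an arithmetic progression of common difference $s/(D+1)$ (a cyclic rotation of the previous one by $a\bmod s$), and a half-open interval of length at most $s/(D+1)$ meets at most one of its terms; hence at most one shift is bad for coordinate $j$.

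Summing over the $d$ coordinates, at most $d$ of the $D+1=2\lceil d/2\rceil+1\ge d+1$ shifts are bad for some coordinate, so some $v_i$ is good for all of them, and then $p+v_i$ and $q+v_i$ lie in a common level-$\ell$ cell, whose side length $s$ is at most $2(D+1)\|p-q\|$. I expect the only subtle step to be the coordinate-wise case analysis in the second paragraph: under the ``which cell owns a boundary point'' convention one has to pin down exactly which shifts are cut by a grid hyperplane so that the bad window has length exactly $\delta$ and the count is genuinely $\le 1$ even in the borderline case $\delta=s/(D+1)$. Everything else is elementary bookkeeping; the entire conceptual content is the observation that an odd number of equally spaced diagonal shifts stays equally spaced modulo every dyadic scale.
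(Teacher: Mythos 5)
Your proof is correct. Note that the paper does not prove this lemma at all—it is imported verbatim from Chan, Har-Peled and Jones \cite{locsens}—and your argument is essentially the standard one behind that result: because $D+1$ is odd, $2^{-\ell}$ is invertible modulo $D+1$, so the diagonal shifts remain equally spaced (gap $s/(D+1)$) modulo every dyadic scale $s=2^\ell$, each coordinate can therefore kill at most one of the $D+1\ge d+1$ shifts, and pigeonhole gives a surviving shift at the scale $s\le 2(D+1)\|p-q\|$. Your handling of the boundary conventions (the half-open separation criterion, the window $[s-\delta,s)$ of length exactly $\delta$, and the trivial case $\ell\ge 1$ via the root cell) is sound, so there is no gap.
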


In the hyperbolic case, we use the quadtree of \cite{hyperquadtree} described in Section~\ref{sec:prelims}.

For spherical geometry we define a new quadtree.
For that, first fix an arbitrary pair of antipodal points $o^+, o^-$ and consider the stereographic projection $\stereo$ from $o^-$ onto the tangent hyperplane of $o^+$.
Take a ball $B \subset \Sph^d$ around $o^+$ of radius $\frac{2}{3} \pi$ (note that this distance is measured along the sphere), then let $R$ be the minimum axis-aligned hypercube enclosing $\stereo(B)$, scaled up from its centre by a factor two for later purposes.
We get a quadtree for $\stereo^{-1}(R)$ by splitting up $R$ as a Euclidean quadtree, then projecting every cell back to $\Sph^d$ with $\stereo^{-1}$.
We will see that $\stereo^{-1}|_R$ only introduces a mild distortion, which means that all necessary properties still hold with slightly worse constants.
If we do the same construction with $o^+$ and $o^-$ swapped we get a second quadtree that together with the first covers the entire space.
Finally, we add a cell at the top level of both quadtrees that by itself covers the entire sphere $\Sph^d$.
This is necessary to ensure that any two points on the sphere will always share some quadtree cell, but the cell does violate \ref{it:qt} by not being the union of its children, so we will mention it as a special case where relevant.

\begin{figure}
    \centering
    \includegraphics{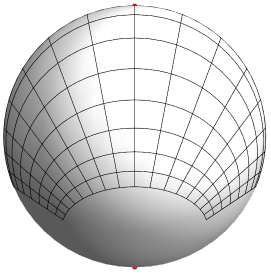}
    \caption{Grid on $\Sph^2$.}
    \label{fig:spheregrid}
\end{figure}

We remark that our quadtrees require some upper bound on the maximum distance $\Delta$ of point pairs for technical reasons, however, the value of this maximum distance does not appear in any of our size or running time bounds. In a practical setting one can typically derive such a $\Delta$ based on the representation of the input coordinates, even in a dynamic setting.

\begin{lemma}\label{lem:shift}
    Let $\mathbb{X}\in \{\Hyp,\Euc,\Sph\}$ and $X$ be a ball
    in $\mathbb{X}^d$.
    There is a set of $\bigO{d}$ infinite quadtrees such that for any two points $p,q \in X$, segment $pq$ is contained in a cell with diameter $\bigO{d \sqrt d} \cdot |pq|$ in one of the quadtrees.
    When $\mathbb{X} \in \{\Euc,\Sph\}$ or when $\mathbb{X}=\Hyp$ and
    the highest quadtree level is at most $4 + \log\log d$,
    these quadtrees have
    bounded growth
    and consist of small-boundary cells.
\end{lemma}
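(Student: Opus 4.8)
The plan is to prove \Cref{lem:shift} by handling the three geometries essentially separately, reducing spherical and small-scale hyperbolic to the Euclidean case via a bounded-distortion map, and invoking the appropriate shifted-quadtree lemma in each.

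\textbf{Euclidean case.} For $\Euc^d$, fix the ball $X$, which we may assume (after a scaling isometry) sits inside a unit cube. I would take the $D+1 = 2\lceil d/2\rceil + 1 = \bigO{d}$ shifted copies of the standard infinite quadtree described by \Cref{lem:locsensshift}: for shift $v_i$, use the quadtree whose grid is translated by $-v_i$, so that ``$p+v_i$ and $q+v_i$ lie in a common cell of $\cQ$'' is the same as ``$p$ and $q$ lie in a common cell of the $i$-th shifted quadtree''. By \Cref{lem:locsensshift} there is an $i$ for which this common cell has side length $2(D+1)\|pq\| = \bigO{d}\|pq\|$, hence by \ref{it:celldiam} diameter $\bigO{d\sqrt d}\|pq\| = \bigO{d\sqrt d}\,|pq|$. (One must be slightly careful at the boundary of $X$, but scaling $X$ to lie well inside $[0,1)^d$ as in the lemma handles this.) That the standard Euclidean quadtree satisfies \ref{it:qt}, has bounded growth \ref{it:qtdegree}, and consists of small-boundary cells \ref{it:smallbcell} is classical: a cube has $2^{dk}$ descendants $k$ levels down (so $c_{\text{growth}}=0$ works), and the boundary of a $d$-cube of diameter $\Delta$ is covered by $\bigO{d}(1/\eps)^{d-1}$ balls of radius $\eps\Delta$ by gridding each of its $2d$ facets.

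\textbf{Spherical case.} For $\Sph^d$, I would use the two stereographic-projection quadtrees defined just above the statement, one from $o^-$ and one from $o^+$, each obtained by pulling a Euclidean quadtree on the enclosing hypercube $R$ back through $\stereo^{-1}$. The key quantitative input is that $\stereo^{-1}|_R$ is bi-Lipschitz with a \emph{universal constant}: on a ball of spherical radius $\tfrac23\pi$ (equivalently, on $R$, which is the doubled minimal hypercube around $\stereo(B)$) the stereographic map distorts distances by at most an absolute factor $c_0$. Given $p,q \in X$, at least one of $p,q$ lies within spherical distance $\tfrac23\pi$ of $o^+$ or of $o^-$ — in fact both of them lie close to one of the two poles whenever $|pq|$ is not near-maximal, and when $|pq|$ is near $\pi$ the top ``whole-sphere'' cell covers them with the crude but acceptable diameter bound $\pi = \bigO{1}\cdot|pq|$. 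In the main case, both $p,q$ lie in the good region of one projection, say from $o^-$; apply the Euclidean shift lemma (\Cref{lem:locsensshift}) to $\stereo(p),\stereo(q)$ in $R$, obtaining a Euclidean cell of diameter $\bigO{d\sqrt d}\|\stereo(p)\stereo(q)\|$, and pull back: the $\stereo^{-1}$-image has spherical diameter at most $c_0 \bigO{d\sqrt d}\,c_0|pq| = \bigO{d\sqrt d}\,|pq|$, still $\bigO{d\sqrt d}\cdot|pq|$. Bounded growth and small-boundary cells for these quadtrees follow from the Euclidean properties of $R$'s quadtree, with constants worsened only by the universal factor $c_0$ (and $c_0^{d-1}$ absorbed into $d^{c_{\text{cover}}d}$). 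The total number of quadtrees is $2(D+1)+2 = \bigO{d}$.

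\textbf{Hyperbolic case.} For $\Hyp^d$ the shift property is exactly \Cref{lem:hyperquadtree}\ref{lem:Hshift}: given the ball $X$, pick $\Delta$ to be its diameter (finite since $X$ is a ball), which yields a set of $3d+3 = \bigO{d}$ hyperbolic quadtrees in which any $p,q \in X$ share a cell of diameter $\bigO{d\sqrt d}\,|pq|$. For the structural claims, restrict to highest quadtree level $L \le 4 + \log\log d$: then by \Cref{lem:hyperquadtree}\ref{lem:Hcelldiam} every cell has diameter $\bigO{2^L} = \bigO{\log d}$, so by \Cref{lem:hyperquadtree}\ref{lem:Hdegree} each cell has $2^{\bigO{d\log d}} = d^{\bigO d}$ children, and more generally $2^{dk}\cdot d^{\bigO d}$ descendants $k$ levels down, giving \ref{it:qtdegree}. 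For \ref{it:smallbcell} and \ref{it:celldiam}: on a constant-diameter region, hyperbolic space is bi-Lipschitz to Euclidean space with a constant depending only on the bound $\bigO{\log d}$ on the diameter; since that bound grows only polylogarithmically in $d$, the distortion factor is $d^{o(1)}$, so the Euclidean covering estimate for box-like cells transfers with the cover size still $d^{c_{\text{cover}}d}/\eps^{d-1}$ after choosing $c_{\text{cover}}$ large enough, and similarly the diameter-vs-level relation \ref{it:celldiam} holds with a universal $c_{\diam}$. (One should note that the cube-based horoboxes at small levels are genuine Euclidean boxes up to this mild distortion, and at the quadtree's sub-tiling levels they are split exactly like Euclidean cubes, so the boundary-covering argument is essentially the Euclidean one.)

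\textbf{Main obstacle.} The routine part is the Euclidean case, which is a direct application of \Cref{lem:locsensshift}. The real work is controlling the distortion constants: in the spherical case one must verify that restricting to spherical radius $\tfrac23\pi$ (and doubling $R$) makes the stereographic map bi-Lipschitz with a constant that is genuinely universal — not depending on $d$ — so that it can be hidden inside the $\bigO{d\sqrt d}$ and $d^{c_\bullet d}$ notations without corrupting the exponents; and in the hyperbolic case one must check that capping the top level at $4 + \log\log d$ keeps the diameter, and hence the Euclidean-vs-hyperbolic distortion, at $d^{o(1)}$ rather than polynomial in $d$, which is exactly why that peculiar threshold appears in the statement. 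I expect the bookkeeping to confirm that all the universal constants fold cleanly into the stated asymptotics, and that the whole-sphere ``extra cell'' for $\Sph^d$ is the only place where we must explicitly argue outside the \ref{it:qt} framework.
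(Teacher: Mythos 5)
The central gap is segment containment. The lemma asserts that the \emph{segment} $pq$ lies in a small cell, but in both your spherical and hyperbolic arguments you only establish that the two \emph{endpoints} share a cell. In $\Euc^d$ this suffices because cells are convex, but the pulled-back spherical cells $\stereo^{-1}(C)$ and the hyperbolic horobox cells are not geodesically convex. For $\Hyp^d$ the paper closes this by combining \Cref{lem:hyperquadtree}\ref{lem:Hshift} with the star-shapedness property \Cref{lem:hyperquadtree}\ref{lem:Hstarshaped} (which needs one endpoint to lie in the lower part of the cell, arranged via the vertical shifts of \Cref{lem:1Dshift}); your write-up never invokes anything of the sort. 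For $\Sph^d$ the problem is sharper: stereographic projection does not map great-circle arcs to straight segments, so applying \Cref{lem:locsensshift} to $\stereo(p),\stereo(q)$ yields a Euclidean cell containing the two projected points but not necessarily the projected segment. The paper instead takes the spherical ball with diameter $pq$ (which contains the segment), bounds the Euclidean diameter of its image by $4|pq|$, and applies the ball-covering shift lemma of Chan (Lemma~3.3 of \cite{Chan1998}, $d+2$ shifts) rather than the two-point \Cref{lem:locsensshift}. Without an argument of this kind the statement as written is not proved, and segment containment is genuinely used downstream (e.g.\ to place the midpoint of $pq$ and the Steiner points in \Cref{thm:doublingmain} and in the hyperbolic constructions).

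There are also quantitative claims that are inaccurate, though they happen not to break the asymptotics. The distortion of $\stereo^{-1}$ on all of $R$ is not an absolute constant: the paper's bound is $\frac{1}{1+12d}|ab| \leq |\stereo^{-1}(a)\stereo^{-1}(b)| \leq |ab|$, and a universal constant ($4$) holds only on $\stereo(B)$; this $\Theta(d)$ factor is precisely what forces the $d$-dependent constants in \ref{it:celldiam} and \ref{it:smallbcell}. Similarly, capping the hyperbolic quadtree at level $4+\log\log d$ keeps cell diameters at $\bigO{\log d}$, so the hyperbolic-to-Euclidean distortion is $e^{\bigO{\log d}} = d^{\bigO{1}}$ --- polynomial in $d$, not $d^{o(1)}$ as you assert; it still suffices because the resulting $(d^{\bigO{1}})^{d-1}$ blow-up of the covering is absorbed into $d^{\bigO{d}}$. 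Finally, the bounded-growth claim in $\Hyp^d$ needs the telescoping product over levels (cells further below $R$ have fewer children), as carried out in the paper, rather than a one-line assertion. These points are repairable, but the missing segment-containment argument is a real gap in the approach.
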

\begin{proof}
    In the Euclidean case, it is known that the quadtree has maximum degree $2^d$ and consists of convex cells.
    Each cell $C$ is a small-boundary cell, because for each boundary hyperplane the $(d-1)$-dimensional grid with $\frac{1}{\eps}$ points along each axis will give an $\eps\diam(C)$-covering.
    We can rescale $X$ so that it fits in the unit hypercube $[0,1)^d$; then we get the requested set of quadtrees from Lemma~\ref{lem:locsensshift}.
    \medskip

    In the hyperbolic case, we use the quadtree of \cite{hyperquadtree}.
    In particular, let $R'$ be the smallest hyperbolic quadtree cell that covers $X$ and let $R$ be a parent cell of $R'$ such that $R'$ is its child with minimal coordinates.
    We use the quadtree with root cell $R$ and the shifts of this quadtree given by \Cref{lem:hyperquadtree}\ref{lem:Hshift}; by construction each of these quadtrees will cover $X$.
    As given by \Cref{lem:hyperquadtree}\ref{lem:Hcelldiam}, cells of these quadtrees have diameter $\bigTh{2^\ell}$, which implies property \ref{it:celldiam}, so they are indeed valid quadtrees under our definition.
    Combining \Cref{lem:hyperquadtree}\ref{lem:Hshift} with \Cref{lem:hyperquadtree}\ref{lem:Hstarshaped} gives the lemma's main result.

    Furthermore, if $R$ has level at most $4 + \log\log d$, then $\diam(R) = \bigO{\log d}$.
    By \Cref{lem:hyperquadtree}\ref{lem:Hdegree}, $R$ has $\max\{2^d, d^{\bigO{d}}\}$ children and, importantly, a cell at $i$ levels below $R$ has $\max\{2^d, d^{\bigO{d / 2^i}}\}$ children.
    Thus, the number of descendants at $k$ levels below $R$ is (using $\exp_2(x)$ for $2^x$)
    \begin{align*}
        \prod_{i=0}^{k-1} \max\{2^d, d^{\bigO{d / 2^i}}\}
        &= \exp_2\left( \sum_{i=0}^{k-1} \max\{d, \bigO{d \log d} / 2^i\} \right) \\
        &\leq \exp_2( dk + \bigO{d \log d} ) \\
        &= 2^{dk} \cdot d^{\bigO d}.
    \end{align*}
    
    Additionally, the diameter bound lets us prove that the hyperbolic distance in $R$ only differs from an appropriately scaled Euclidean distance by a factor $d^{\bigO1}$.
    Given points $(x,z), (x',z') \in R$ at Euclidean distance $\delta_\Euc$, their hyperbolic distance is $\delta_\Hyp = 2\arsinh\left(\frac{\delta_\Euc}{2\sqrt{zz'}}\right)$.
    Thus, $\delta_\Hyp \leq \delta_\Euc / \zd(R)$ and
    \[ \delta_\Euc / \zu(R) \leq 2e^{\delta_\Hyp/2} \leq e^{\bigO{\log d}} \delta_\Hyp = d^{\bigO1} \delta_\Hyp. \]
    Seeing as also $\zu(R) / \zd(R) \leq 2^{\bigO{\log d}} = d^{\bigO1}$, this means that $\delta_\Hyp \leq \delta_\Euc / \zd(R) \leq d^{\bigO1} \delta_\Hyp$.
    Therefore we have distortion $d^{\bigO1}$ compared to $\delta_\Euc / \zd(R)$, which means that for some constant $c$ a Euclidean $\frac{\eps\diam(C)}{d^c}$-covering along the face will also be a hyperbolic $\eps\diam(C)$-cover.
    Such a cover has $\left( \eps / 2^{\bigO{\log d}} \right)^{1-d} = d^{\bigO{d}} \eps^{1-d}$ points, so we have small-boundary cells.
    \medskip

    For the spherical case, we first prove that the stereographic projection only introduces small distortion.
    \begin{claim*}
        For points $a,b \in R$ we have $\frac{1}{1 + 12d} |ab| \leq |\stereo^{-1}(a)\stereo^{-1}(b)| \leq |ab|$.
        If $a,b \in \stereo(B)$ this improves to $\frac{1}{4} |ab| \leq |\stereo^{-1}(a)\stereo^{-1}(b)| \leq |ab|$.
    \end{claim*}
    \begin{claimproof}
        Use coordinates $(p_1, \dots, p_{d+1}) \in \Euc^{d+1}$ with $\|p\| = 1$ for points $p \in \Sph^d$ and let $o^+ = (0, \dots, 0, -1)$.
        Then,
        $\stereo^{-1}(a) = \frac{4}{\|a\|^2 + 4} (a_1, \dots, a_d, \frac{1}{4}\|a\|^2 - 1)$.
        From this we can calculate that the chord between $\stereo^{-1}(a) \in \Sph^d$ and $\stereo^{-1}(b) \in \Sph^d$ has length $\frac{4|ab|}{\sqrt{(4 + \|a\|^2)(4 + \|b\|^2)}}$, which is a lower bound on the spherical distance $|\stereo^{-1}(a)\stereo^{-1}(b)|$.
        Here,
        $\stereo(B)$ has radius $2 \tan(\pi / 3) = 2\sqrt{3}$, which bounds $\|a\|$ and $\|b\|$ and thus implies $|\stereo^{-1}(a)\stereo^{-1}(b)| \geq \frac{1}{4} |ab|$ when $a,b \in \stereo(B)$.
        When only $a,b \in R$, then the largest norm we can get is $4\sqrt{3d}$.
        That directly gives $|\stereo^{-1}(a)\stereo^{-1}(b)| \geq \frac{1}{1 + 12d} \cdot |ab|$.

        For the upper bound, we calculate the spherical distance from the chord length as $|\stereo^{-1}(a)\stereo^{-1}(b)| = 2 \arcsin \frac{2|ab|}{\sqrt{(4 + \|a\|^2)(4 + \|b\|^2)}}$.
        Note that we get the largest spherical distance $|\stereo^{-1}(a)\stereo^{-1}(b)|$ for a given Euclidean distance $|ab|$ when $ab$ goes through the origin, since we can always move $ab$ towards the origin in a way that decreases both $\|a\|$ and $\|b\|$.
        For the upper bound, we can therefore assume $ab$ indeed goes through the origin $o = \stereo(o^+)$, which lets us split it into segments $ao$ and $ob$.
        Now,
        \[
            |\stereo^{-1}(a)o^+|
            = 2\arcsin\frac{\|a\|}{\sqrt{4 + \|a\|^2}}
            = 2\arctan\frac{\|a\|}{2}
            \leq \|a\|,
        \]
        which implies the same for $ob$ and thus means $|\stereo^{-1}(a)\stereo^{-1}(b)| \leq |ab|$.
    \end{claimproof}

    In the stereographic projection, a quadtree cell $C$ at level $\ell$ has $\diam(C) = 2^\ell \diam(R) = 2^\ell \cdot 4\sqrt{3d}$.
    Thus, $2^\ell \cdot \frac{4 \sqrt{3d}}{1 + 12d} \leq \diam(\stereo^{-1}(C)) \leq 2^\ell \cdot 4\sqrt{3d}$, which means that our spherical quadtree satisfies the diameter property.
    Additionally, we can get a $\frac{\eps \diam(C)}{1 + 12d}$-covering for $C$, which gets projected back to an $\eps \diam(C)$-covering of $\stereo^{-1}(C)$, meaning it is a small-boundary cell.
    The spherical quadtree inherits the maximum degree of the Euclidean quadtree, thus it satisfies all required properties.

    What remains is to find an appropriate set of these quadtrees for the lemma.
    For this, let $R'$ be $R$ scaled down by a factor two.
    Lemma~3.3 of \cite{Chan1998} now gives a set of $d+2$ shifts so that any point $p \in R'$ will have distance $2^\ell / (2d+4)$ from the cell boundary of a cell with side length $2^\ell$ in one of the shifted quadtrees that contains it.
    In other words, any ball of radius $r$ will be contained in a cell with diameter at most $r \cdot (4d+8)\sqrt{d}$ in some shifted quadtree.
    For any segment $pq \subset B$, consider the intersection $S \subseteq \Sph^d$ of $B$ and the ball with diameter $pq$.
    Since $S \subseteq B$ we have $\diam(\stereo(S)) \leq 4\diam(S) \leq 4|pq|$, so we can cover $\stereo(S)$ with a Euclidean ball of radius $r \leq 4|pq|$, which as mentioned will be contained in a cell with diameter at most $r \cdot (4d+8)\sqrt{d} = 4(4d+8)\sqrt{d} |pq|$ in some shifted quadtree.
    This proves the lemma for segments in the ball of radius $\frac{2}{3} \pi$ around $o^+$.
    To prove it for all of $\Sph^d$, we first repeat the quadtree construction but with $o^+$ and $o^-$ swapped.
    The lemma now holds for segments $pq$ where all points on $pq$ are within distance $\frac{2}{3}\pi$ from the same point $o^+$ or $o^-$.
    If there is a pair of points $a,b \in pq$ where $|ao^+| > \frac{2}{3}\pi$ while $|bo^-| > \frac{2}{3}\pi$, then necessarily $|ab| > \frac{1}{3} \pi$.
    Therefore also $|pq| > \frac{1}{3} \pi$ and we can take the cell that covers all of $\Sph^d$ (and has diameter $\pi$) to prove the lemma.
\end{proof}

\begin{lemma}\label{lem:quadtree_operations}
    These quadtrees support the following operations:
    \begin{enumerate}
        \item Splitting a cell into its children takes $\bigO{d}$ time per child.
        \item Constructing the covering for a small-boundary cell $C$ takes $d^{\bigO{d}}/\eps^{d-1}$ time.
        \item Given a pair of points $p,p'$ and an infinite quadtree, consider its smallest cell $C$ that contains $p$ and $p'$.
        \begin{enumerate}
            \item We can find $C$ in $\bigO{d}$ time.
            \item
            There is a fixed order on the children of $C$ where we can determine in $\bigO{d}$ time if the child containing $p$ comes before or after the child containing $q$.
        \end{enumerate}
    \end{enumerate}
\end{lemma}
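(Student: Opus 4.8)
The plan is to check the four operations for each of the three quadtree families. Operations~1 and~2 are essentially bookkeeping over the constructions already carried out in the proof of \Cref{lem:shift}, while 3(a) and 3(b) are the usual compressed-quadtree primitives: standard for $\Euc^d$ and $\Sph^d$, and, for $\Hyp^d$, a repackaging of the machinery of \cite{hyperquadtree}. For operation~1, in $\Euc^d$ a cell is an axis-parallel cube and a child is obtained by choosing the lower or upper half in each of the $d$ coordinates, so a single child is described by $\bigO d$ numbers computed from the parent by $\bigO d$ arithmetic, irrespective of how many children we will ultimately list. The spherical quadtree is a Euclidean quadtree in the stereographic picture — we store every cell as the Euclidean box $\stereo(\cdot)$ and only ever apply $\stereo^{-1}$ to \emph{points} — so splitting is identical. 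In $\Hyp^d$ a cell below level $0$ is split as in $\Euc^d$, and a cell at level $\ell+1\ge 1$ is partitioned into one top cell of level $\ell$ together with a (possibly large) family of level-$\ell$ tiles of $T^\ell$ stacked below it; each such child is a cube-based horobox determined by its $\x,\xm,\zd,\zu$, obtainable from the parent's data by $\bigO d$ arithmetic once the index of the bottom tile is supplied. For operation~2 we read off the covering from the proof of \Cref{lem:shift}: in $\Euc^d$ place a $(d{-}1)$-dimensional grid with $\lceil 1/\eps\rceil$ points per axis on each of the $2d$ facets ($d^{\bigO{1}}\eps^{1-d}$ points, $\bigO d$ time each); in $\Sph^d$ take the Euclidean $\tfrac{\eps\diam(C)}{1+12d}$-covering of $\partial\stereo(C)$ and push it through $\stereo^{-1}$ ($d^{\bigO d}\eps^{1-d}$ points, $\bigO d$ each); in $\Hyp^d$, by the distortion bound established inside \Cref{lem:shift} a Euclidean $\tfrac{\eps\diam(C)}{d^{c}}$-covering of the relevant face is a hyperbolic $\eps\diam(C)$-covering, again $d^{\bigO d}\eps^{1-d}$ points, $\bigO d$ each; in all cases the cost is $d^{\bigO d}/\eps^{d-1}$.

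For operations 3(a) and 3(b) in $\Euc^d$ and $\Sph^d$: after subtracting the relevant shift (from \Cref{lem:locsensshift} for $\Euc^d$, from \cite{Chan1998} for $\Sph^d$) and rescaling so that the root is $[0,1)^d$, the smallest cell of the quadtree containing $p$ and $p'$ is the dyadic box whose level is governed by the most significant bit position at which the binary expansions of the coordinates of $p$ and $p'$ first disagree. Computing this over the $d$ coordinates is a coordinatewise bitwise XOR followed by a highest-set-bit query (a floor of a base-$2$ logarithm) and a minimum, so $\bigO d$ operations, and it simultaneously identifies $C$. We fix the order on the $2^d$ children of $C$ to be the lexicographic order of their $d$-bit ``which half'' vector; the child containing a given point is read off from one bit of its (scaled) coordinates in $\bigO d$ time, and two such bit vectors are compared in $\bigO d$ time. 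For $\Sph^d$ we apply $\stereo$ to $p$ and $p'$ first; if one of them leaves the box $R$, then within this quadtree the only cell containing it is the auxiliary cell covering all of $\Sph^d$, so that is $C$ and operation 3(b) on it is trivial, and otherwise we are exactly in the Euclidean situation.

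The main obstacle is the hyperbolic case of 3(a) and 3(b), because the quadtree is not a uniform grid: under the reparametrisation $(\tilde\pi_x,\tilde\pi_z)$ of \cite{hyperquadtree} a cell at level $\ell\ge 0$ is a box of side $2^{2^\ell-1}$ in the $\tilde\pi_x$-coordinates but only $2^\ell$ in the $\tilde\pi_z$-coordinate — widths grow doubly-exponentially while heights grow only singly-exponentially — whereas cells below level $0$ are dyadic sub-boxes of a level-$0$ box. Finding the smallest common cell therefore splits into first deciding which regime it lies in (a comparison of $\tilde\pi_z$-ranges of the two points) and then, within that regime, locating the first level at which the images of $p$ and $p'$ separate; each step is again a floor/logarithm/bitwise computation over the $d$ transformed coordinates, hence $\bigO d$ time. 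For 3(b), below level $0$ we reuse the Euclidean sign-vector comparison, and at levels $\ell\ge 0$ we declare the top cell to be the first child and order the bottom tiles lexicographically by their integer $\tilde\pi_x$-position in $\mathbb Z^{d-1}$, so that deciding which bottom tile contains a point and comparing two of them are $\bigO d$ integer-vector comparisons. In fact \Cref{lem:hyperquadtree}\ref{lem:Lorder} already packages exactly this kind of comparison — the L-order is a depth-first traversal order of the hyperbolic quadtree — so 3(b) is immediate from it, and 3(a) follows with a little extra care, since the smallest common cell is precisely the L-order lowest common ancestor $\lca(p,p',\cdot)$. The only delicate point is making the aspect-ratio change across levels $\ell\ge 0$ explicit, which is exactly what the $(\tilde\pi_x,\tilde\pi_z)$ reparametrisation and \ref{lem:Lorder} are built to absorb.
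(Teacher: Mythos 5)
Your proposal is correct and follows essentially the same route as the paper: splitting and coverings are read off from the grid-based constructions in \Cref{lem:shift} (with the spherical case handled entirely in the stereographic picture), the Euclidean/spherical case of operation~3 uses the standard compressed-quadtree bit operations, and the hyperbolic case rests on \Cref{lem:hyperquadtree}\ref{lem:Lorder} together with the $\tilde\pi_x,\tilde\pi_z$ reparametrisation. The only part you leave as a sketch is the hyperbolic 3(a) procedure, which the paper makes explicit by computing $t=\lca(x,x',\tilde\pi_x(T^0))$ in a $(d-1)$-dimensional Euclidean quadtree and then taking the first level at which $\min\{z,z'\}$ and $\max\{t,z,z'\}$ share a one-dimensional quadtree cell, a detail rather than a different approach.
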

\begin{proof}
    All results for Euclidean quadtrees also carry over to spherical quadtrees, since all operations can be done as a combination of the corresponding Euclidean operation with $\stereo$ and/or $\stereo^{-1}$.
    \begin{enumerate}
        \item
        In both Euclidean and hyperbolic space, each cell is described with a Cartesian product of $d$ intervals, each of which can be calculated in $\bigO{1}$ time from the parent cell's corresponding interval.
        
        \item
        In both the Euclidean and the hyperbolic case, Lemma~\ref{lem:shift} gave an $\eps\diam(C)$-covering using Euclidean grids.
        We can generate these using simple arithmetic in $\bigO{1}$ time per point.
        
        \item
        In Euclidean space, both statements are proven for example as Corollary~2.17 of \cite{GeomAppAlg}.
        In hyperbolic space, (b) is given by Lemma~\ref{lem:hyperquadtree}\ref{lem:Lorder} and (a) follows from a similar procedure.
        First, define $\tilde\pi(p)=(\tilde\pi_x(p), \tilde\pi_z(p))$; we will find $C$ using $(x,z)=\tilde\pi(p)$ and $(x',z')=\tilde\pi(p')$.
        If $\lfloor x/2^z \rfloor = \lfloor x'/2^{z'} \rfloor$ and $\lfloor z \rfloor = \lfloor z' \rfloor$ then $p$ and $p'$ lie in the same level-$0$ cell, which is split in a Euclidean way.
        Thus, we find the smallest cell containing $(x / 2^{\lfloor z \rfloor}, z)$ and $(x' / 2^{\lfloor z' \rfloor}, z')$ in a $d$-dimensional Euclidean quadtree.
        
        Otherwise, let $t = \lca(x, x', \tilde\pi_x(T^0))$ (which can be calculated from a $(d-1)$-dimensional Euclidean quadtree).
        Then, let $\ell$ be the first level where $\min\{z,z'\}$ and $\max\{t,z,z'\}$ are in the same cell of a one-dimensional Euclidean quadtree.
        Finally, $C$ is the cell at level $\ell$ containing $p$. \qedhere
    \end{enumerate}
\end{proof}

\subsection{Steiner spanner}
The key geometric lemma needed to reduce the number of Steiner points in the setting of $\Euc^d$, $\Sph^d$ and constant-diameter chunks of $\Hyp^d$ is the following lemma. The lemma is inspired by (and generalizes) Lemma~4.2 of \cite{LeS19}.

\split
\begin{proof}
    Consider the right-angled triangle $pxy$.
    In each case, we will prove that $|px| \leq (1 + \eps) |py|$ (or $|px| \leq (1 + 2^\Delta \eps) |py|$ in the hyperbolic case).
    We can prove $|qx| \leq (1 + \eps) |qy|$ (or $|qx| \leq (1 + 2^\Delta \eps) |qy|$) with the same reasoning and these two bounds jointly prove the lemma.

\begin{itemize}
    \item[$\Euc$.]
    By the Pythagorean theorem,
    \[ |px|^2 = |xy|^2 + |py|^2 \leq (1 + \eps) |py|^2. \]
    Thus $|px| \leq \sqrt{1 + \eps} \cdot |py| \leq (1 + \eps) |py|$.
    
    \item[$\Sph$.]
    The spherical law of cosines gives the following analogue to the Pythagorean theorem:
    \begin{equation}\label{eq:sphericalpyth}
        \cos|px| = \cos|xy| \cdot \cos|py|.
    \end{equation}
    We will first assume that $|py| < \pi/2$ so that $\cos|py| > 0$ and every instance of $\cos$ has its argument in the range $[0,\pi]$, making it strictly decreasing.
    Thus, it will be enough to prove that $\cos|px| \geq \cos((1+\eps)|py|)$.
    Here, $\cos|px| = \cos|xy| \cdot \cos|py|$ by Equation~\ref{eq:sphericalpyth} and by cosine's angle sum identity
    \[\cos((1+\eps)|py|) = \cos|py| \cdot \cos(\eps|py|) - \sin|py| \cdot \sin(\eps|py|).\]
    We divide both by $\cos|py|$ so that the inequality to be proven becomes
    \begin{equation}\label{eq:sphericaltoshow}
        \cos|xy| \geq \cos(\eps|py|) - \tan|py| \cdot \sin(\eps|py|).
    \end{equation}
    On the left side, we use that $\cos t \geq 1 - \frac{t^2}{2}$ for $t \in \Reals$ to get $\cos|xy| \geq \cos(\sqrt{\eps}|py|) \geq 1 - \frac{\eps|py|^2}{2}$.
    On the right side, we use that $\tan t \geq t$, that $\sin t \geq \frac{2}{\pi} \cdot t$ and that $\cos t \leq 1$, for $t \in [0, \pi)$.
    This gives
    \[\cos(\eps|py|) - \tan|py| \cdot \sin(\eps|py|) \leq 1 - \frac{2\eps|py|^2}{\pi}.\]
    Therefore, Equation~\ref{eq:sphericaltoshow} is implied by $1 - \frac{\eps|py|^2}{2} \geq 1 - \frac{2\eps|py|^2}{\pi}$, which holds because $\frac{1}{2} \leq \frac{2}{\pi}$.
    
    The case where $|py| \geq \pi/2$ is straightforward:
    if $(1+\eps) |py| > \pi$ then $|px| \leq \pi < (1+\eps) |py|$, while in case of $(1+\eps) |py| \leq \pi$ we have that $\cos$ is strictly decreasing, so we directly get
    \[\cos((1+\eps) |py|) \leq \cos|py| \leq \cos|py| \cdot \cos|xy| = \cos|px|,\]
    which also proves $|px| \leq (1+\eps)|py|$.
    
    \item[$\Hyp$.]
    The hyperbolic analogue to the Pythagorean theorem yields
    \begin{equation}\label{eq:hyperpyth}
         \cosh|px| = \cosh|xy| \cdot \cosh|py|.
    \end{equation}
    We will use a similar approach to the spherical case, except now we work towards $\cosh|px| \leq \cosh((1 + c\eps) |py|)$ with $c = \frac{e^\Delta}{2\Delta}$ and $|py| \leq \Delta$.
    Here, $\cosh|px| = \cosh|xy| \cdot \cosh|py|$ by Equation~\ref{eq:hyperpyth} and the sum of arguments identity of $\cosh$ gives \[\cosh((1 + c\eps) |py|) = \cosh|py| \cdot \cosh(c\eps |py|) + \sinh|py| \cdot \sinh(c\eps|py|).\]
    We divide both by $\cosh|py|$ so that the inequality to be proven becomes
    \begin{equation}\label{eq:hypertoshow}
        \cosh|xy| \leq \cosh(c\eps |py|) + \tanh|py| \cdot \sinh(c\eps|py|).
    \end{equation}
    On the left side, we use that $\cosh t \leq 1 + \frac{\cosh\Delta - 1}{\Delta^2} \cdot t^2$ for $t \in [0, \Delta]$ to get \[\cosh|xy| \leq \cosh(\sqrt{\eps}|py|) \leq 1 + \frac{\cosh\Delta - 1}{\Delta^2} \cdot \eps|py|^2.\]
    On the right side, we use that $\sinh t \geq t$ and $\cosh t \geq 1$ for all $t \geq 0$ and $\tanh t \geq \frac{\tanh\Delta}{\Delta} \cdot t$ for $t \in [0, \Delta]$.
    This gives
    \[\cosh(c\eps |py|) + \tanh|py| \cdot \sinh(c\eps|py|) \geq 1 + \frac{\tanh\Delta}{\Delta} \cdot c\eps|py|^2.\]
    Therefore, Equation~\ref{eq:hypertoshow} is implied by $1 + \frac{\cosh\Delta - 1}{\Delta^2} \cdot \eps|py|^2 \leq 1 + \frac{\tanh\Delta}{\Delta} \cdot c\eps|py|^2$, which holds because $\frac{\cosh(\Delta) - 1}{\Delta \tanh\Delta} \leq \frac{e^\Delta}{2\Delta} = c$.
    \qedhere
\end{itemize}
\end{proof}

\paragraph{Constructing a bipartite Steiner spanner $G_1(\eps,P)$.}
Given a cell $C$, we define $\desc(C, a)$ as the set of descendants of $C$ in its quadtree at the highest level where all their diameters are at most $a \diam(C)$.
Let $P_i$ denote the list of points in $P$ sorted by the order corresponding to the $i^\text{th}$ quadtree from Lemma~\ref{lem:shift}.
We now define the graph $G_1(\eps,P)$ that we will later prove to be a Steiner spanner.
For any pair of points $p,q$ adjacent in some $P_i$, we will consider each cell $C$ of quadtree $i$ where $p,q \in C$ and $\diam(C) \leq \frac{1}{\eps} |pq|$.
For each such cell $C$ we will construct a set $C_S$ of Steiner points.
First consider $\cC = \desc(C, \frac1{6c_{\text{shift}}d \sqrt d})$, where $c_{\text{shift}}$ is the constant hidden by the big-O notation in Lemma~\ref{lem:shift}.
Here, cells in $\cC$ lie at most $\log(6c_{\diam}d \cdot c_{\text{shift}}d \sqrt d) = \bigO{\log d}$ levels below $C$, thus $|\cC| = d^{\bigO{d}}$ by \ref{it:qtdegree}.
For each cell $C' \in \cC$, we define a set of Steiner points $\cov(C')$ according to a $\sqrt\eps\diam(C')$-covering of its boundary, then let $C_S = \bigcup_{C' \in \cC} \cov(C')$.
Because each such cell $C' \in \cC$ is a small-boundary cell, $|C_S| = |\cC| \cdot |\cov(C')| = d^{\bigO{d}} \cdot \eps^{(1-d)/2}$.
For each considered cell $C$, we add $C_S$ to $G_1(\eps,P)$ as well as an edge from $p$ and $q$ to each $s \in C_S$.
By construction, $G_1(\eps,P)$ is bipartite and each input point has degree $d^{\bigO{d}} \cdot \eps^{(1-d)/2} \log \frac{1}{\eps}$.

\begin{lemma}\label{lem:lowdiam_implicit}
    There is a data structure of size $\bigO{dn}$ that reports in $d^{\bigO{d}} \cdot \eps^{(1-d)/2} \log \frac{1}{\eps} + \bigO{d^2 \log n}$ time the edges that are added or removed for $G_1(\eps,P)$ when a point in $P$ is added or removed.
\end{lemma}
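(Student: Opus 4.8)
The plan is to store, for each of the $\bigO{d}$ quadtrees from Lemma~\ref{lem:shift}, the point set $P$ as the sorted list $P_i$ in a balanced binary search tree keyed by the quadtree order; by Lemma~\ref{lem:quadtree_operations}(3b) each comparison between two points costs $\bigO{d}$ time, so each such tree supports insertion, deletion, and predecessor/successor queries in $\bigO{d \log n}$ time, giving the claimed $\bigO{dn}$ total space. The crucial structural observation, which I would state first, is that in the definition of $G_1(\eps,P)$ the edges incident to a point $r \in P$ are \emph{local in each order}: an edge is generated only by a pair $p,q$ that are adjacent in some $P_i$, so $r$'s edges are determined entirely by $r$ together with its immediate predecessor and successor in each of the $\bigO{d}$ orders. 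Hence when we insert or delete $r$, the only pairs whose edge set changes are (i) the pairs $(\text{pred}_i(r), r)$ and $(r, \text{succ}_i(r))$ that appear or disappear, and (ii) the pair $(\text{pred}_i(r), \text{succ}_i(r))$ that disappears (on insertion) or appears (on deletion), across all $i$. That is $\bigO{d}$ pairs in total.

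For each affected pair $(p,q)$ I would recompute its contribution to $G_1$ from scratch: by Lemma~\ref{lem:quadtree_operations}(3a) we find in $\bigO{d}$ time the smallest cell $C_0$ of quadtree $i$ containing both, and then we walk up the ancestor chain of $C_0$, visiting exactly those ancestor cells $C$ with $\diam(C) \le \tfrac1\eps |pq|$. By property \ref{it:celldiam} there are only $\bigO{\log\tfrac1\eps} + \bigO{1}$ such cells (each level up at least doubles the diameter, up to the constant $\sqrt{c_{\diam}d}$ slack), and moving from a cell to its parent is an $\bigO{d}$-time operation on the interval representation. For each such $C$ we build $\cC = \desc(C,\tfrac1{6c_{\text{shift}}d\sqrt d})$, which by \ref{it:qtdegree} has $d^{\bigO{d}}$ cells lying $\bigO{\log d}$ levels below $C$ and can be enumerated by repeated cell splitting in $d^{\bigO{d}}$ time via Lemma~\ref{lem:quadtree_operations}(1); then for each $C' \in \cC$ we compute the $\sqrt\eps\,\diam(C')$-covering of its boundary in $d^{\bigO{d}} \eps^{(1-d)/2}$ time by Lemma~\ref{lem:quadtree_operations}(2), and list the edges from $p$ and $q$ to each covering point. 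Summing over the $\bigO{\log\tfrac1\eps}$ relevant cells $C$ gives $d^{\bigO{d}}\eps^{(1-d)/2}\log\tfrac1\eps$ time to (re)generate the edge set of one pair, and since the covering points of a cell are produced deterministically, the edge set of a pair is reproducible, so we can both add the new pairs' edges and remove the vanished pairs' edges by recomputing them. Over all $\bigO{d}$ affected pairs this contributes $d^{\bigO{d}}\eps^{(1-d)/2}\log\tfrac1\eps$, and the $\bigO{d}$ BST updates and predecessor/successor queries contribute $\bigO{d}\cdot\bigO{d\log n} = \bigO{d^2\log n}$, matching the stated bound.

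The main obstacle — and the only point requiring real care rather than bookkeeping — is justifying that we never need to touch edges belonging to pairs other than the $\bigO{d}$ pairs listed above; that is, that the edge set is genuinely a function of consecutive-pair data in the orders. This follows because each pair contributes edges only through cells it is jointly contained in, and inserting or removing $r$ changes, for a fixed quadtree, only which points are \emph{adjacent} in $P_i$ (it does not change any cell, nor which cells contain a pair not involving $r$); so for any pair $p,q$ with $r \notin \{p,q\}$, the pair is adjacent in $P_i$ after the update iff it was adjacent before, except for the single pair $(\text{pred}_i(r),\text{succ}_i(r))$ whose adjacency status flips. One should also note the degenerate cases (when $r$ has no predecessor or no successor in some $P_i$, i.e.\ it is an extreme element) and the case $|P| \le 1$, all of which are handled by simply skipping the absent pairs. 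Finally, reporting rather than applying the changes is exactly what the lemma asks for, so no auxiliary edge storage is needed beyond the $\bigO{d}$ ordered lists; this keeps the space at $\bigO{dn}$.
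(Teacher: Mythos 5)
Your proposal is correct and follows essentially the same route as the paper: $\bigO{d}$ balanced search trees keyed by the quadtree orders (with $\bigO{d}$-time comparisons from Lemma~\ref{lem:quadtree_operations}), the observation that an update only affects the $\bigO{d}$ consecutive pairs involving the updated point plus the pairs $(\text{pred}_i,\text{succ}_i)$, and regenerating each affected pair's edges by replaying the construction of $G_1(\eps,P)$; your version just spells out the per-pair cost accounting that the paper leaves implicit. The only detail worth adding is that to get $\bigO{dn}$ rather than $\bigO{d^2 n}$ space the $d$ coordinates of each point should be stored once and referenced by pointers in the $\bigO{d}$ trees, as the paper notes.
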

\begin{proof}
    As data structure, we maintain the points $P$ in sorted order for each order given by one of the $\bigO{d}$ quadtrees of \Cref{lem:shift}.
    This can for example be done with a self-balancing binary search tree.
    To use $\bigO{dn}$ instead of $\bigO{d^2n}$ storage, we only store each point's coordinates once and then refer to it with pointers.
    The method of reporting changes now follows fairly directly from the construction of $G_1(\eps,P)$.
    Adding points to or removing them from the $\bigO{d}$ sorted sets takes $\bigO{d^2 \log n}$ time, as \Cref{lem:quadtree_operations} lets us compare points in $\bigO{d}$ time.
    If such an update causes points $p$ and $q$ to become adjacent in order $i$, then we can use the operations of \Cref{lem:quadtree_operations} to follow the construction of $G_1(\eps,P)$ and find the edges this adds in $d^{\bigO{d}} \cdot \eps^{(1-d)/2} \log \frac{1}{\eps}$ time.
    If an update causes some points $p$ and $q$ to no longer be adjacent, we can do the same operations to find the edges that should be removed.
\end{proof}

We can explicitly maintain $G_1(\eps,P)$ by maintaining the data structure of Lemma~\ref{lem:lowdiam_implicit} alongside an adjacency list representation of the graph.
Adding an edge or Steiner point to this representation takes $\bigO{d \log n}$ time, so this gives Theorem~\ref{thm:doublingmain}.
To prove that $G_1(\eps,P)$ is indeed a Steiner $(1+\eps)$-spanner, we first separate out \Cref{lem:reprapprox}, which will also be used later.

\begin{lemma}\label{lem:reprapprox}
    Let $G$ be a geometric graph on $P \cup S \subset \Hyp^d$, let $\eps \in (0,\frac{1}{2}]$ and let $c > 0$.
    Assume that for any pair $p,q \in P$ there are points $p',q' \in P$ with $|pp'| \leq \frac{\eps}{c+3}|pq|$ and $|qq'| \leq \frac{\eps}{c+3}|pq|$ that are connected with a path of length at most $(1+\frac{c\eps}{c+3})|pq|$ in $G$.
    Then, $G$ is a Steiner $(1+\eps)$-spanner for $P$.
\end{lemma}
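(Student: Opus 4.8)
The plan is to run a minimal‑counterexample argument, which one can also phrase as an induction on the rank of the pairwise distance $|pq|$ among the finitely many point pairs of $P$. Suppose $G$ were not a Steiner $(1+\eps)$‑spanner; then among all pairs $p,q\in P$ with $\dist_G(p,q)>(1+\eps)|pq|$ I would pick one with $|pq|$ minimum and aim for a contradiction. The point of choosing the closest violating pair is that we may then treat \emph{every} strictly closer pair as already satisfying the $(1+\eps)$‑bound, which is exactly what is needed to charge a detour through nearby representatives.

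First I would invoke the hypothesis on this pair $p,q$ to obtain $p',q'\in P$ with $|pp'|,|qq'|\le\frac{\eps}{c+3}|pq|$ that are joined in $G$ by a path of length at most $(1+\frac{c\eps}{c+3})|pq|$. The small but essential observation is that $\frac{\eps}{c+3}<1$, since $\eps\le\frac12$ and $c>0$; hence $|pp'|<|pq|$ (unless $p'=p$, in which case $\dist_G(p,p')=0$) and likewise $|qq'|<|pq|$. Thus $(p,p')$ and $(q,q')$ are strictly closer pairs, so by minimality they are not violating, giving $\dist_G(p,p')\le(1+\eps)\tfrac{\eps}{c+3}|pq|$ and $\dist_G(q,q')\le(1+\eps)\tfrac{\eps}{c+3}|pq|$.

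Then I would concatenate the three legs $p\rightsquigarrow p'$, $p'\rightsquigarrow q'$, $q'\rightsquigarrow q$ and apply the triangle inequality for the shortest‑path metric $\dist_G$, obtaining
\[
\dist_G(p,q)\ \le\ 2(1+\eps)\tfrac{\eps}{c+3}|pq|+\bigl(1+\tfrac{c\eps}{c+3}\bigr)|pq|\ =\ \Bigl(1+\tfrac{(c+2)\eps+2\eps^2}{c+3}\Bigr)|pq|.
\]
Finally I would use that $2\eps^2\le\eps$ for $\eps\le\tfrac12$, so that $(c+2)\eps+2\eps^2\le(c+3)\eps$ and the bracketed factor is at most $1+\eps$; this contradicts the choice of $p,q$, so no violating pair exists. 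The constant $c+3$ in the hypothesis is exactly the sharp one here: it is the smallest shift making the last estimate valid for all $\eps\le\tfrac12$, with equality at $\eps=\tfrac12$.

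I do not anticipate a real obstacle. The only thing to be careful about is the strict inequality $|pp'|<|pq|$ (and $|qq'|<|pq|$), since this is what licenses the appeal to minimality (or to the induction hypothesis); the remainder is the routine arithmetic displayed above, plus the trivial remark that $p'=p$ or $q'=q$ is harmless.
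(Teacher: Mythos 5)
Your proposal is correct and is essentially the paper's own argument: the paper proves the lemma by induction on the pair distance (assuming the bound for all strictly closer pairs, applying the hypothesis to get $p',q'$, and bounding $\dist_G(p,p')+\dist_G(p',q')+\dist_G(q',q)$ with the same arithmetic, using $(1+\eps)\eps\le\tfrac32\eps$ in place of your $2\eps^2\le\eps$), and your minimal-counterexample phrasing is just the contrapositive of that induction.
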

\begin{proof}
    This is a common result also used for the proofs of e.g.\ \cite[Theorem 3.12]{GeomAppAlg} and \cite[Lemma 20]{hyperquadtree}.
    We prove it by induction.
    Assume that we have proven $\dist_G(v,w) \leq (1 + \eps) |vw|$ for any pair $v,w \in P$ with $|vw| < |pq|$.
    As base case, this holds trivially for $|vw| = 0$.
    Now we will prove it for some pair $p,q \in P$.
    Take $p',q' \in P$ according to the lemma statement.
    From the induction hypothesis, $\dist_G(p,p') \leq (1 + \eps) |pp'|$, which we can further bound by $(1+\eps) \cdot \frac{\eps}{c+3}|pq| \leq \frac{3}{2} \cdot \frac{\eps}{c+3} |pq|$ using our assumptions.
    Analogously, $\dist_G(q,q') \leq \frac{3}{2} \cdot \frac{\eps}{c+3} |pq|$
    and therefore
    \[\dist_G(p,q) \leq \dist_G(p,p') + \dist_G(p',q') + \dist_G(q',q) \leq (1 + \eps) |pq|,\]
    concluding the proof.
\end{proof}

\begin{theorem}\label{thm:doublingmain}
    Let $\eps \in (0,\frac12]$ and $P$ be a set of $n$ points in $\Euc^d$ or $\Sph^d$, or in a
    quadtree cell in $\Hyp^d$ of level $4+\log\log d$.
    There is a Steiner $(1+\eps)$-spanner for~$P$ with Steiner vertex set $Q$, that is bipartite on $P$ and $Q$, and where each point $p \in P$ has degree $d^{\bigO{d}} \cdot \eps^{(1-d)/2} \log \frac{1}{\eps}$.
    We can maintain it dynamically such that each point insertion or deletion takes $d^{\bigO{d}} \cdot \eps^{(1-d)/2} \log \frac{1}{\eps} \cdot \log n$ time.
\end{theorem}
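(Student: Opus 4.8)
The plan is to show that the graph $G_1(\eps,P)$ constructed above satisfies the hypotheses of \Cref{lem:reprapprox} (with a suitable constant $c$), so that it is a Steiner $(1+\eps)$-spanner; the remaining claims about the bipartite structure, degrees, and dynamic maintenance were already established during the construction and in \Cref{lem:lowdiam_implicit}. Fix a pair $p,q\in P$. First I would reduce to a good pair of \emph{adjacent} points in some order $P_i$: walking along $P_i$ from (the position of) $p$ to (the position of) $q$, every consecutive pair $p',p''$ along the way lies in the smallest cell of quadtree $i$ containing $p$ and $q$, hence has $|p'p''|\le \diam(\text{that cell})$; combined with the shift property of \Cref{lem:shift} applied to the original pair $p,q$, this lets me pick an index $i$ and an adjacent pair in $P_i$ whose two endpoints $p',q'$ are each within $O(d\sqrt d)\cdot\frac{1}{\text{(large const)}}|pq|$ of $p$ and $q$ respectively — more precisely, I want $p',q'$ with $|pp'|,|qq'|\le\frac{\eps}{c+3}|pq|$, which is where the factor $\frac{1}{6c_{\text{shift}}d\sqrt d}$ in the definition of $\cC=\desc(C,\tfrac{1}{6c_{\text{shift}}d\sqrt d})$ and the "$\diam(C)\le\frac1\eps|pq|$" cutoff are calibrated. (The standard trick of "snapping to the closest already-inserted point" à la \cite{GeomAppAlg} handles the case $|pq|$ very small; for $|pq|$ not too small the adjacent pair in $P_i$ already works.)

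Next, for that adjacent pair $(p',q')$ I would locate the cell $C$ of quadtree $i$ that the construction actually processes: the smallest cell containing both with $\diam(C)\le\frac1\eps|pq|$, which by \Cref{lem:shift} also satisfies $\diam(C)=\Theta(d\sqrt d\,|p'q'|)$ (so it is genuinely processed, $\diam(C)$ is neither too big nor too small). Now $p'$ and $q'$ are separated by the boundary of some child of $C$, and since cells in $\cC=\desc(C,\tfrac1{6c_{\text{shift}}d\sqrt d})$ are the refinement of $C$ down to diameter $\le\tfrac1{6c_{\text{shift}}d\sqrt d}\diam(C)$, there is a cell $C'\in\cC$ whose boundary hyperplane $H$ separates $p'$ from $q'$, with $\diam(C')=\Theta(|p'q'|/(\text{const}))$; in particular the foot $y$ of the perpendicular from $p',q'$ onto — here I use the covering point $s\in\cov(C')$ on $H$ closest to the segment $p'q'$, which lies within $\sqrt\eps\diam(C')$ of $p'q'$. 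Because $\diam(C')$ is a small enough fraction of $|p'q'|$, the nearest point $y$ of $p'q'$ to $s$ has $|sy|\le\sqrt\eps\,\diam(C')\le\sqrt\eps\min\{|p'y|,|q'y|\}$, so \Cref{lem:split} applies and gives $|p's|+|sq'|\le(1+\eps')|p'q'|$ with $\eps'=\eps$ in the Euclidean and spherical cases, and $\eps'=e^{\Delta}\eps$ with $\Delta=O(\log\log d)$ (hence $e^\Delta=d^{O(1)}$) in the hyperbolic-cell case; in the latter case one replaces $\eps$ by $\eps/d^{O(1)}$ in the covering granularity from the outset (as \Cref{lem:shift} already does), at the cost of only a $d^{O(d)}$ factor in the Steiner-point count, which is absorbed. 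Since $G_1$ contains edges $p's$ and $sq'$, this is the desired short path between $p'$ and $q'$.

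Finally I assemble: $\dist_{G_1}(p,q)\le\dist_{G_1}(p,p')+\dist_{G_1}(p',q')+\dist_{G_1}(q',q)$, and feeding the bound $|p's|+|sq'|\le(1+\tfrac{c\eps}{c+3})|pq|$ (after relating $|p'q'|$ to $|pq|$ via the triangle inequality and the $\frac{\eps}{c+3}|pq|$ bounds on $|pp'|,|qq'|$) together with $|pp'|,|qq'|\le\frac{\eps}{c+3}|pq|$ into \Cref{lem:reprapprox} yields $\dist_{G_1}(p,q)\le(1+\eps)|pq|$; choosing $c$ a suitable constant (absorbing $c_{\text{shift}}$, $c_{\diam}$ and the $e^\Delta$ slack) makes all the constants in the covering granularity $\tfrac1{6c_{\text{shift}}d\sqrt d}$ consistent. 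The degree and size bounds follow since each processed cell contributes $|C_S|=d^{O(d)}\eps^{(1-d)/2}$ Steiner points and each input point is adjacent in $P_i$ to at most two others, across $O(d)$ orders and $O(\log\frac1\eps)$ relevant cells $C$ per adjacent pair (the cells with $\diam(C)\le\frac1\eps|pq|$ that still have $\diam(C)\ge\Omega(|pq|)$ form a chain of length $O(\log\frac1\eps)$); the dynamic bound is exactly \Cref{lem:lowdiam_implicit} plus $O(d\log n)$ per adjacency-list update. \textbf{The main obstacle} is the bookkeeping in the hyperbolic case: ensuring that the level cutoff $4+\log\log d$ makes $\diam(C)=O(\log d)$ so that \Cref{lem:shift} guarantees bounded growth and small-boundary cells \emph{and} that the $e^\Delta=d^{O(1)}$ distortion in \Cref{lem:split}(ii) is small enough that pre-shrinking $\eps$ by $d^{O(1)}$ keeps the Steiner count at $d^{O(d)}\eps^{(1-d)/2}$ — i.e. verifying that all the universal constants $c_{\diam},c_{\text{growth}},c_{\text{cover}},c_{\text{shift}}$ compose correctly without a hidden dependence on $\eps$.
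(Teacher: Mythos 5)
Your overall architecture (verify the hypotheses of \Cref{lem:reprapprox}, use the boundary coverings plus \Cref{lem:split}, absorb the hyperbolic $e^\Delta=d^{\bigO{1}}$ factor into the constant, reuse \Cref{lem:lowdiam_implicit} for dynamics) matches the paper, but your central reduction step has a genuine gap. You claim that by walking along $P_i$ from $p$ to $q$ one can extract an \emph{order-adjacent} pair $(p',q')$ with $|pp'|,|qq'|\leq\frac{\eps}{c+3}|pq|$. With only the $\bigO{d}$ shifted orders of \Cref{lem:shift} this is false: the points of $P$ lying between $p$ and $q$ in the order are only guaranteed to lie in the cell $C$ of diameter $\bigO{d\sqrt d}\cdot|pq|$, and they may all be far from both $p$ and $q$ (e.g.\ a cluster of points near the centre of $C$, away from the segment $pq$, forces every adjacent pair on the walk to have one endpoint in that cluster, so no adjacent pair spans from near $p$ to near $q$). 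A guarantee of the kind you invoke is exactly what locality-sensitive orderings provide, but those require $\eps^{-\bigO{d}}$ orderings, not $\bigO{d}$. The paper's proof never needs such a pair: it chooses $p'$ as the point of $p$'s cell in $\desc(C,\frac{\eps'}{c_{\text{shift}}d\sqrt d})$ that is extremal in the order towards $q$ (and $q'$ symmetrically), argues that $p'$ and $q'$ are \emph{each separately} connected to the Steiner set $C_S$ of the large cell $C$ via their own order-adjacency to a point outside their small cell, and routes $p'\to s\to q'$ through a single common Steiner point $s$; the legs $p\to p'$ and $q\to q'$ are then handled by the induction built into \Cref{lem:reprapprox}, not by adjacency in the order.

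There is a second, smaller gap in your application of \Cref{lem:split}: you take an arbitrary cell $C'\in\cC$ whose boundary separates $p'$ from $q'$, let $s\in\cov(C')$ be within $\sqrt{\eps}\diam(C')$ of the segment, and assert $|sy|\leq\sqrt\eps\min\{|p'y|,|q'y|\}$. This does not follow, because the separating boundary may pass arbitrarily close to $p'$, making $\min\{|p'y|,|q'y|\}$ much smaller than $\diam(C')$. The hypothesis of \Cref{lem:split} requires $y$ to be bounded away from \emph{both} endpoints, which the paper obtains by taking $C'$ to be the cell of $\cC$ containing the \emph{midpoint} of $pq$, so that $y$ lies within $\frac13|pq|$ of the midpoint and $\min\{|py|,|qy|\}\geq\frac16|pq|\geq\diam(C')$. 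Your degree accounting, the treatment of the hyperbolic distortion by shrinking $\eps$ by $d^{\bigO{1}}$, and the dynamic-maintenance claim are consistent with the paper, but without repairing the two points above the spanner property itself is not established.
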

\begin{proof}
    For this, we will prove that $G_1(P,\eps')$ is a Steiner $(1+\eps)$-spanner, when $\eps' = \eps / (c + 3)$ and $c = e^{16 \log d} = d^{16 \log e}$.
    The edge count and dynamic maintenance of $G_1(P,\eps')$ have already been proven; left to show is that we indeed have a Steiner spanner.
    Let $\dist_G$ denote the distance in $G_1(P,\eps')$.
    Given $p,q \in P$ we want to prove $\dist_G(p,q) \leq (1 + (c+3)\eps') |pq|$.
    Let $C$ be the smallest cell among the quadtrees where $p,q \in C$; by Lemma~\ref{lem:shift} we know $\diam(C) \leq c_{\text{shift}}d \sqrt d |pq|$ for some constant $c_{\text{shift}}$.
    As in the construction, let $\cC = \desc(C, \frac1{c_{\text{shift}}d \sqrt d})$.
    Thus, cells $C' \in \cC$ will have $\diam(C') \leq \frac{1}{6}|pq|$.
    The midpoint $m$ of $pq$ lies in such a cell $C' \in \cC$, and because $\diam(C') \leq \frac{1}{6}|pq|$ there will be a Steiner point $s \in \cov(C')$ that is both within distance $\frac{1}{6}|pq|$ from $m$ and within distance $\frac{1}{6}\sqrt{\eps'}|pq|$ of $pq$ itself.
    The distance from $m$ to the point $y \in pq$ closest to $s$ is at most $|ms| + |sy| \leq \frac{1}{3}|pq|$, meaning $\min\{|py|, |qy|\} \geq \frac{1}{6}|py|$.
    Thus, Lemma~\ref{lem:split} gives us that $|ps| + |sq| \leq (1 + c\eps')|pq|$.
    
    Now, let $p' \in P$ be the point in the same cell from $\cC_{\eps'} = \desc(C, \frac{\eps'}{c_{\text{shift}}d \sqrt d})$ as $p$ closest in the order to $q$, and $q'$ defined analogously.
    If one of $p$ and $q$ is not contained in any cell from $\cC_{\eps'}$ (because $C$ is the special top cell for $\Sph^d$), then assume without loss of generality that this is $q$ and set $q'=q$ instead.
    By construction, $|pp'| \leq \frac{\eps'}{c_{\text{shift}}d \sqrt d} \cdot \diam(C) \leq \eps' |pq|$.
    Additionally, $p'$ will be adjacent in the order to some point in a different cell of $\cC_{\eps'}$, which means it is connected to the Steiner points of $C$ and in particular to $s$.
    These same things hold for $|qq'|$ and $q$.
    Thus, $\dist_G(p',q') \leq |p'p| + |ps| + |sq| + |sq'| \leq (1 + 2\eps' + c\eps')|pq|$.
    This means we exactly satisfy the conditions of Lemma~\ref{lem:reprapprox}, making this a Steiner $(1 + \eps)$-spanner.
\end{proof}

As it turns out, the Steiner points in this Steiner spanner also have small degree, which gives a straightforward way to turn it into a spanner without Steiner points.

\begin{theorem}\label{thm:doublingspanner}
    Let $\eps \in (0,\frac12]$ and $P$ be a set of $n$ points in $\Euc^d$ or $\Sph^d$, or in a
    quadtree cell in $\Hyp^d$ of level $4+\log\log d$.
    In $d^{\bigO{d}} \cdot \eps^{-d} \log\frac1\eps \cdot n\log n$ time, we can construct a $(1+\eps)$-spanner (i.e., without Steiner points) for $P$ with maximum degree $d^{\bigO{d}} \cdot \eps^{-d} \log\frac1\eps$.
    \footnote{Note that we do not get optimal edge count $\bigOd{n / \eps^{d-1}}$, so one can likely improve on this by adapting for example \cite{GaoH24}.}
\end{theorem}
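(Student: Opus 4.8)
The plan is to take the bipartite Steiner spanner $G_1(\eps',P)$ used in the proof of Theorem~\ref{thm:doublingmain} (with the same choice $\eps'=\eps/(c+3)$, $c=e^{16\log d}$) and to remove its Steiner points by \emph{exploding} each one: delete the star centred at a Steiner point $s$ and instead add a clique on the input points that were adjacent to $s$. The first observation is that, by construction of $G_1$, all Steiner points in the set $C_S$ attached to a single processed cell $C$ have exactly the same set $N_C$ of $P$-neighbours, namely the endpoints of all order-adjacent pairs processed at $C$. Hence exploding the Steiner points is the same as adding, for every processed cell $C$, a clique on $N_C$; call the resulting Steiner-point-free graph $G_1'$.

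For the stretch I would appeal directly to the analysis already in the proof of Theorem~\ref{thm:doublingmain}: for every $p,q\in P$ it exhibits $p',q'\in P$ with $|pp'|,|qq'|\le\eps'|pq|$ that are both adjacent in $G_1(\eps',P)$ to a common Steiner point, which lies in $C_S$ for one processed cell $C$; thus $p',q'\in N_C$ and $p'q'$ is an edge of $G_1'$. The part of that proof invoking Lemma~\ref{lem:split} is no longer needed: by the triangle inequality the single edge $p'q'$ satisfies $|p'q'|\le|p'p|+|pq|+|qq'|\le(1+2\eps')|pq|\le(1+\tfrac{c\eps}{c+3})|pq|$ since $c\ge 2$, so Lemma~\ref{lem:reprapprox} (with this $c$) certifies that $G_1'$ is a $(1+\eps)$-spanner. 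The $\Sph^d$ top-cell corner case is handled exactly as in Theorem~\ref{thm:doublingmain}.

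The real work, and the step I expect to be the main obstacle, is bounding $|N_C|$ and hence the degree. Fix a processed cell $C$ at level $\ell$ in quadtree $i$, and let $\ell''=\ell-k''$ with $k''=\log\tfrac1\eps+\bigO{\log d}$ chosen (via property \ref{it:celldiam}) just small enough that every level-$\ell''$ cell has diameter strictly less than $\eps'\diam(C)$. If $p\in N_C$ with witnessing partner $q$ — so $p,q$ are order-$i$-adjacent, both in $C$, and $|pq|\ge\eps'\diam(C)$ because $\diam(C)\le\tfrac1{\eps'}|pq|$ — then $p$ and $q$ lie in distinct level-$\ell''$ descendants of $C$, and since the order-$i$ traversal visits every cell contiguously, $p$ must be the first or the last point of $P$ inside its own level-$\ell''$ cell. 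Thus each level-$\ell''$ cell contributes at most two points to $N_C$, and by property \ref{it:qtdegree} there are at most $2^{dk''}\cdot d^{\bigO d}=d^{\bigO d}\eps^{-d}$ such cells inside $C$, so $|N_C|=d^{\bigO d}\eps^{-d}$. Conversely, a fixed point $p$ lies in $N_C$ for at most $\bigO d$ choices of quadtree, times two order-neighbours, times $\bigO{\log\tfrac1\eps+\log d}$ cells $C$ per (quadtree, neighbour) pair — the last because the cells of a quadtree containing both $p$ and a fixed neighbour $q$ with $\diam\le\tfrac1{\eps'}|pq|$ form a chain of $\bigO{\log\tfrac1\eps+\log d}$ levels by property \ref{it:celldiam}. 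Hence $\deg_{G_1'}(p)\le d^{\bigO d}\eps^{-d}\log\tfrac1\eps$, which is the claimed maximum degree and also bounds the edge count by $d^{\bigO d}\eps^{-d}\log\tfrac1\eps\cdot n$.

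For the running time I would first build the $\bigO d$ sorted orders underlying $G_1(\eps',P)$ as in Lemma~\ref{lem:lowdiam_implicit}, in $\bigO{d^2n\log n}$ time, then for each point enumerate its $\bigO d\cdot\bigO{\log\tfrac1\eps+\log d}$ relevant cells $C$ (using the operations of Lemma~\ref{lem:quadtree_operations}) and add it to $N_C$, and finally emit, for each processed cell $C$, the clique on $N_C$. The total number of clique edges produced is $\sum_C|N_C|^2\le(\max_C|N_C|)\cdot\sum_C|N_C|$, and since $\sum_C|N_C|=\sum_p|\{C:p\in N_C\}|=d^{\bigO 1}\log\tfrac1\eps\cdot n$, this is $d^{\bigO d}\eps^{-d}\log\tfrac1\eps\cdot n$; assembling them into adjacency lists (with an optional deduplication to obtain a simple graph) costs an extra $\bigO{\log n}$ factor, for the claimed $d^{\bigO d}\eps^{-d}\log\tfrac1\eps\cdot n\log n$ total.
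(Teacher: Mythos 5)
Your proposal is correct and takes essentially the same route as the paper: the paper also forms the spanner by connecting each input point to the $P$-neighbours of its Steiner points (equivalently, a clique on $N_C$ per processed cell, since all Steiner points of $C_S$ share the same neighbourhood), gets the stretch for free from the triangle inequality, and bounds $|N_C|$ by counting descendant cells of diameter at least $\eps'\diam(C)$ via the bounded-growth property \ref{it:qtdegree}, times the $\bigO{d\log\frac d\eps}$ sets $C_S$ a point attaches to. Your version is if anything slightly more detailed than the paper's (e.g.\ the first-or-last-in-cell argument yielding at most two contributions per descendant cell, where the paper asserts one — an immaterial factor — and the explicit running-time accounting).
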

\begin{proof}
    To construct the $(1+\eps)$-spanner, start with the Steiner $(1+\eps)$-spanner of Theorem~\ref{thm:doublingmain}.
    We form the $(1+\eps)$-spanner by connecting each point $p \in P$ to the neighbours of the Steiner points $p$ is connected to in the Steiner $(1+\eps)$-spanner.
    It remains to prove that this gives the prescribed degree bound.
    Based on the construction, the Steiner points connected to $p$ come from $\bigO{d\log\frac{d}{\eps}}$ sets $C_S$, where Steiner points from the same set $C_S$ have the same neighbours.
    In particular, they are connected to at most one point for each descendant cell of $C$, and only if that descendant cell has diameter at least $\eps\diam(C)$.
    By \ref{it:qtdegree}, this means each Steiner point has degree $d^{\bigO{d}} / \eps^d$.
    Thus in total, the degree of $p$ is at most $d^{\bigO{d}} \cdot \eps^{-d} \log\frac1\eps$.
\end{proof}

\section{Large-scale hyperbolic geometry}\label{sec:hyperbolic}
We will now prove a similar Steiner spanner result for large distances in hyperbolic geometry.
Here, the exponential expansion of hyperbolic space becomes a problem and we need to exploit its tree-likeness, while still covering cell boundaries with Steiner points as in the construction of Section~\ref{sec:doublingspace}.
This gives Theorem~\ref{thm:Hbipartite} as the first result, which Lemma~\ref{lem:bipartitelowerbound} in fact proves near-optimal if we require the Steiner spanner to be bipartite.

However, it does not yet match the bounds of Theorem~\ref{thm:main} for $d=2$.
To further reduce the number of edges, we need to connect Steiner points to Steiner points, allowing the input points to have a smaller degree.
This finally gives Theorem~\ref{thm:Hmain}, which matches the Euclidean bounds and goes far below them for large distances.
Using the same ideas, we can also construct a Steiner spanner with arbitrarily small \emph{additive} error, which we do in Section~\ref{sec:additivespanner}.

\subsection{Bipartite Steiner spanner} \label{sec:bipartitespanner}
We first use the same general framework as in Section~\ref{sec:doublingspace}, but we will now also need to properly deal with the exponential expansion of hyperbolic space.
To do so, we will introduce alternatives to the shifting result of \Cref{lem:shift} and the covering result of \Cref{lem:split} that are more suitable for this setting.
To prove this alternative shifting result, we first need the following lemma:
\begin{lemma}\label{lem:lcadiff}
    Let $p,q \in \Hyp^d$ and $t$ be the highest point on $pq$.
    Then, the cell $C_t = \cell(t, T^0)$ has $\diam(\pi_x(C_t)) > \|x(p) - x(q)\| / 4$.
\end{lemma}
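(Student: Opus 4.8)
The plan is to reduce everything to the two-dimensional picture inside the vertical $2$-plane spanned by $p$ and $q$. If $x(p)=x(q)$ then $pq$ is a vertical Euclidean segment, $t$ is the higher of $p,q$, and $\diam(\pi_x(C_t))=\zd(C_t)>0=\|x(p)-x(q)\|/4$, so we may assume $x(p)\neq x(q)$. Vertical hyperplanes are totally geodesic, so $pq$ lies in the vertical $2$-plane over the line through $x(p)$ and $x(q)$, which is isometric to $\Hyp^2$ in the half-plane model; there $pq$ is a sub-arc of a Euclidean semicircle of some radius $r>0$ centred on the ideal boundary $\{z=0\}$, and $t$ is the point of this sub-arc with the largest $z$-coordinate. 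The key planar estimate to prove is
\[ z(t)\ \geq\ \tfrac12\,\|x(p)-x(q)\|, \]
after which the claim follows from the shape of the level-$0$ cells of $T^0$.

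To prove the planar estimate I split into two cases according to whether the apex of the semicircle (the point of height $r$) lies on the sub-arc $pq$. In the vertical plane write $u$ for the signed horizontal displacement from the centre of the semicircle, so a point at displacement $u$ has height $\sqrt{r^2-u^2}$, and let $u_p,u_q$ be the displacements of $p,q$; note $\|x(p)-x(q)\|=|u_p-u_q|$. If $0$ lies between $u_p$ and $u_q$ (the apex is on $pq$), then $z(t)=r$ while $|u_p-u_q|\le 2r$, giving $z(t)=r\ge\frac12\|x(p)-x(q)\|$. Otherwise $u_p,u_q$ have the same sign, WLOG both nonnegative, so along the sub-arc the height $\sqrt{r^2-u^2}$ is monotone decreasing and $t$ is the endpoint with the smaller displacement; calling that displacement $a$ and the other one $b$ we have $0\le a\le b\le r$, hence, using $r+a\ge r-a$,
\[ z(t)=\sqrt{r^2-a^2}=\sqrt{(r-a)(r+a)}\ \geq\ r-a\ \geq\ b-a\ =\ \|x(p)-x(q)\|. \]
Either way $z(t)\ge\frac12\|x(p)-x(q)\|$.

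It remains to translate this into the bound on $\diam(\pi_x(C_t))$. Every cell $C$ of the tiling $T^0$ is, by construction, a cube-based horobox of width $1/\sqrt{d-1}$ and height $1$; thus $\zu(C)=2\,\zd(C)$, and $\pi_x(C)$ is a $(d-1)$-dimensional axis-parallel cube of side $\zd(C)/\sqrt{d-1}$, so $\diam(\pi_x(C))=\sqrt{d-1}\cdot\zd(C)/\sqrt{d-1}=\zd(C)$. Since $t\in C_t$ we have $\zd(C_t)\le z(t)<\zu(C_t)=2\,\zd(C_t)$, i.e. $\zd(C_t)>z(t)/2$. Combining,
\[ \diam(\pi_x(C_t))=\zd(C_t)\ >\ \frac{z(t)}{2}\ \geq\ \frac{\|x(p)-x(q)\|}{4}, \]
as required.

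The only real subtlety is the case distinction on whether the top of the underlying geodesic semicircle lies on the sub-arc $pq$, since this controls whether $t$ is an endpoint of $pq$ or the topmost point of the full geodesic, plus the small amount of bookkeeping for the degenerate configurations (a vertical $pq$, or $p$ and $q$ symmetric about the apex); the inequalities themselves are elementary, and the factor-$2$ slack provided by the height-$1$ cells of $T^0$ is exactly what makes the final bound strict rather than merely non-strict.
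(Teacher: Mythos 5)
Your proof is correct and follows essentially the same route as the paper's: both reduce the claim to $\diam(\pi_x(C_t)) = \zd(C_t) \ge z(t)/2$ and then bound $z(t)$ against $\|x(p)-x(q)\|$ by distinguishing whether the apex of the geodesic semicircle lies on the sub-arc $pq$ or whether $t$ is an endpoint. The only cosmetic difference is in the endpoint case, which you settle by the explicit estimate $\sqrt{(r-a)(r+a)} \ge r-a \ge b-a$ (taking strictness from the half-open cell convention), whereas the paper argues via the boundary point of $\pi_x(S)$ and gets strictness from $x(p),x(q)$ lying in its interior.
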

\begin{proof}
    Since we are at level $0$, we have $w(C_t) = 1/\sqrt{d-1}$, meaning $\pi_x(C_t)$ is a $(d-1)$-dimensional Euclidean hypercube with side length $\zd(C_t)/\sqrt{d-1}$ and thus $\diam(\pi_x(C_t)) = \zd(C_t)$.
    Because $C_t$ must contain $t$, here $\zu(C_t) \geq z(t)$ which implies $\zd(C_t) \geq z(t) / 2$ as $h(C_t) = \log\frac{\zu(C_t)}{\zd(C_t)} = 1$ at level $0$.
    
    We will now bound $z(t)$ in terms of $\|x(p) - x(q)\|$.
    In the halfspace model, $pq$ is either a vertical Euclidean segment or an arc from a Euclidean circle $S$ with its origin at $z=0$.
    In the first case, $\|x(p) - x(q)\| = 0$ so the result holds trivially.
    Otherwise, first assume $t$ is distinct from $p$ and $q$.
    Then, $t$ must be the highest point of $S$, meaning $S$ has radius $z(t)$.
    Since $x(p)$ and $x(q)$ must lie in the interior of $\pi_x(S)$, now $\|x(p) - x(q)\| < 2z(t)$ which implies $\diam(\pi_x(C_t)) > \|x(p) - x(q)\| / 4$.
    Finally, consider $t=p$; the case $t=q$ is symmetric.
    Consider the distance from $x(p)$ to the closest boundary point $b$ of $\pi_x(S)$.
    The points $a \in \Hyp^d$ where $\|x(a) - b\| \leq z(a)$ are exactly those that lie above the Euclidean line through $(b, 0)$ and the highest point on $S$.
    In particular, this means $\|x(p) - b\| \leq z(p)$.
    Since $x(q)$ must be closer to $x(p)$ than $b$ is, $\|x(p) - x(q)\| < z(p)$, which implies $\diam(\pi_x(C_t)) > \|x(p) - x(q)\| / 2$.
\end{proof}
We also need the following lemma, which was used for the shifting result of \cite{hyperquadtree}.
\begin{lemma}[Lemma 9 of \cite{hyperquadtree}]\label{lem:1Dshift}
    Let $\cQ$ be a one-dimensional Euclidean quadtree whose largest cell is $[0,2)$.
    For any two points $p,q \in [0,1)$, there is a shift $\sigma \in \{0, \frac13, \frac23 \}$ such that when added to the quadtree $p + \sigma$ and $q + \sigma$ are contained in a cell of $\cQ$ with length $< 3|p-q|$ and one of the points is in the lower $\frac13$ of the cell.
\end{lemma}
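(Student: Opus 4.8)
The plan is to reduce the lemma, via one well-chosen cell size, to an elementary fact about how the three shifts $0,\tfrac13,\tfrac23$ sit modulo a power of two. First I would normalise: a cell of length $<3|p-q|$ can only exist if $p\ne q$, so assume $p\ne q$ and, without loss of generality, $p<q$, and set $\delta=q-p\in(0,1)$. Then I would commit to a target cell length $L=2^j$, the unique power of two with $\tfrac32\delta\le L<3\delta$ (it exists and is unique because this half-open interval has ratio exactly $2$), so that $L\le 2$ (as $\delta<1$) and, crucially, $\delta\le\tfrac23 L$. The point of this choice is the reduction: it suffices to find $\sigma\in\{0,\tfrac13,\tfrac23\}$ such that $p+\sigma$ lands in the \emph{lower third} of its length-$L$ dyadic cell $[kL,(k+1)L)$, i.e.\ $(p+\sigma)\bmod L<\tfrac L3$. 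Indeed $q+\sigma=(p+\sigma)+\delta<kL+\tfrac L3+\tfrac{2L}3=(k+1)L$, so $q+\sigma$ lies in the same cell; that cell has length $L<3\delta$, sits inside the root $[0,2)$ because $p+\sigma<2$, and contains $p+\sigma$ in its lower third by construction.

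It remains to produce such a $\sigma$, which I would do by splitting on the size of $L$. If $L=2$ (the only possibility with $L\ge 2$, which forces $\delta>\tfrac23$ since $L<3\delta$), then $q<1$ and $\delta>\tfrac23$ give $p=q-\delta<1-\tfrac23=\tfrac13<\tfrac23=\tfrac L3$, while $p+\sigma<2=L$ for every shift, so $\sigma=0$ already puts $p$ in the lower third $[0,\tfrac23)$ of the root. If instead $L=2^{-k}$ with $k\ge 0$, I would show that the three ``good sets'' $S_\sigma=\{p:(p+\sigma)\bmod L<\tfrac L3\}=\bigcup_{m\in\mathbb Z}[mL-\sigma,\,mL-\sigma+\tfrac L3)$ together tile $\mathbb R$, so one of them must contain $p$. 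Viewed on the circle $\mathbb R/L\mathbb Z$, each $S_\sigma$ is an arc of length $\tfrac L3$ with left endpoint $-\sigma\bmod L$, and three such arcs tile the circle exactly when $\{0,-\tfrac13,-\tfrac23\}\equiv\{0,\tfrac L3,\tfrac{2L}3\}\pmod L$. I would verify this congruence from $\tfrac13-\tfrac L3=\tfrac{2^k-1}{3\cdot 2^k}\in 2^{-k}\mathbb Z\iff 3\mid 2^k-1\iff k\text{ even}$ together with $\tfrac13-\tfrac{2L}3=\tfrac{2^{k-1}-1}{3\cdot 2^{k-1}}\in 2^{-k}\mathbb Z\iff 3\mid 2^k-2\iff k\text{ odd}$: one of the two always holds, so $\tfrac13\equiv\pm\tfrac L3\pmod L$, and with the analogous statement for $\tfrac23$ this gives the claimed congruence.

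I expect the two slightly non-obvious moves to be: (a) committing to $L\in[\tfrac32\delta,3\delta)$ rather than the naive $L\gtrsim\delta$ — the extra slack is precisely what keeps $q+\sigma$ in the same cell once $p+\sigma$ is pinned to the lower third; and (b) the modular computation above, which is really what singles out $0,\tfrac13,\tfrac23$ as the right three-element shift set in one dimension and explains why two shifts would not suffice. The rest is routine bookkeeping, with the only delicate points being the degenerate case $p=q$, the boundary value $\delta=\tfrac23$ (where $L=1$ is chosen and the tiling case applies), and keeping every interval inequality strict so that $p+\sigma$ and $q+\sigma$ genuinely land in the half-open cell $[kL,(k+1)L)$.
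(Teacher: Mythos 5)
Your proposal is correct. Note that the paper itself contains no proof of this statement -- it is imported verbatim as Lemma~9 of the cited quadtree paper -- so there is no in-paper argument to compare against; judged on its own, your proof is a valid, self-contained derivation. The two load-bearing steps both check out: (a) choosing the unique power of two $L\in[\tfrac32\delta,3\delta)$ guarantees $L\le 2$, that the containing dyadic interval of length $L$ is a genuine cell of $\cQ$ inside $[0,2)$, and that pinning $p+\sigma$ to $[kL,kL+\tfrac L3)$ forces $q+\sigma<(k+1)L$ since $\delta\le\tfrac23L$; and (b) for $L=2^{-k}$ the residues of $\tfrac13$ and $\tfrac23$ modulo $L$ are exactly $\{\tfrac L3,\tfrac{2L}3\}$ (according to the parity of $k$, since $2^k\equiv(-1)^k\pmod 3$), so the three good sets partition $\mathbb R/L\mathbb Z$ and some shift must succeed. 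Your separate treatment of $L=2$ is also needed and handled correctly, since there the tiling genuinely fails ($-\tfrac13\equiv\tfrac53\pmod 2$) but $\delta>\tfrac23$ forces $\min\{p,q\}<\tfrac13$, so the unshifted root already works. The only caveat is the degenerate case $p=q$, which you rightly exclude as the statement implicitly assumes distinct points.
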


One of the geometric insights behind our construction is captured by the following lemma, which essentially replaces Lemma~\ref{lem:shift}.
Here, we will start using the arborescences $T^\ell$ defined in \Cref{sec:prelims}.
In particular, we introduce $T^\ell_i$, which is the version of $T^\ell$ where all cells are shifted by the same transformation as the $i^\text{th}$ quadtree given by \Cref{lem:hyperquadtree}\ref{lem:Hshift}.
For a cell $C$ and $I \subseteq [0,1]$, we also define $C|_I$ as $\{ p \in C \mid \frac{\log z(p) - \log\zd(C)}{h(C)} \in I \}$, i.e.\ $C$ restricted to the points whose normalised distance from the bottom boundary falls in $I$.
\begin{restatable}{lemma}{lcashift}\label{lem:lca}
    Given a level $\ell \geq 4 + \log\log d$.
    For any two points $p,q \in \Hyp^d$ (with $|pq| \leq \Delta$), there is a shift~$i$ where $pq$ intersects the cell $C = \lca(p, q, T_i^\ell)$.
    In particular, $pq$ will intersect $C|_{[\frac{1}{3},1]}$.
\end{restatable}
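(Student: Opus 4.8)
The plan is to exploit the ``binary tree'' structure of the tilings $T^\ell$ together with the one-dimensional shifting result of \Cref{lem:1Dshift} applied in the $\tilde\pi_z$ direction, and a shift in the $\tilde\pi_x$ direction to control where the lowest common ancestor sits. Let $t$ be the highest point on the segment $pq$. First I would reduce everything to a statement about the $z$-coordinate: the cell $C = \lca(p,q,T_i^\ell)$ is, by the product-like structure of $T^\ell$, essentially determined by the smallest level $\ell' \geq \ell$ at which $p$, $q$ (equivalently $t$, since $t$ lies between them in $z$) are brought together. Because a cell at level $\ell'$ of $T^\ell$ has height $h = 2^{\ell'}$ and the levels are nested, the claim ``$pq$ intersects $C|_{[1/3,1]}$'' is really the claim that the highest point $t$ of $pq$ sits in the top two-thirds of $C$ in the $z$-direction, i.e.\ $\tilde\pi_z(t)$ lies in the upper $2/3$ of the level-$\ell'$ interval of the one-dimensional quadtree $\tilde\pi_z(T^\ell_i)$ containing it.

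Next I would set up the two ingredients. For the $z$-coordinate: apply \Cref{lem:1Dshift} (rescaled appropriately to the level-$\ell$ granularity) to the pair $\tilde\pi_z(p), \tilde\pi_z(q)$ — or rather to the single value $\tilde\pi_z(t)$ together with a nearby value — to obtain a shift $\sigma \in \{0, \tfrac13, \tfrac23\}$ under which $t$ lands in the lower third of its cell; reflecting/complementing then puts $t$ in the \emph{upper} two-thirds, which is exactly the normalised range $[\tfrac13,1]$ that $C|_{[\frac13,1]}$ describes. The subtlety is that we need $pq$ itself (not just $t$) to actually \emph{intersect} $C$: since $t$ is the topmost point of $pq$ and $p,q$ lie below it, once $C$ contains $t$ the segment enters $C$ from below, so $pq \cap C \neq \emptyset$ automatically, and the portion of $pq$ near $t$ lies in $C|_{[\frac13,1]}$ by the same bound. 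For the $x$-coordinate: I would use \Cref{lem:lcadiff} to control the horizontal extent. \Cref{lem:lca}... sorry, \Cref{lem:lcadiff} tells us that $\diam(\pi_x(\cell(t,T^0))) > \|x(p)-x(q)\|/4$, so $t$'s level-$0$ cell is already ``wide enough'' relative to the horizontal separation of $p$ and $q$; climbing up to level $\ell \geq 4 + \log\log d$ only widens it, so the cell at level $\ell$ (in the right $x$-shift, of which there are $\Oh(d)$ by \Cref{lem:hyperquadtree}\ref{lem:Hshift}) containing $t$ will also contain both $x(p)$ and $x(q)$ in its interior. Combining: choose the shift $i$ that simultaneously realises the good $x$-shift and the good $z$-shift $\sigma$; then $\cell(p,T_i^\ell)$ and $\cell(q,T_i^\ell)$ both lie inside the level-$\ell'$ cell $C$ around $t$ (for the $\ell'$ determined above), so $C = \lca(p,q,T_i^\ell)$, and we have just argued $pq \cap C|_{[\frac13,1]} \neq \emptyset$.

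The main obstacle I expect is bookkeeping the interaction between the $x$-shift and the $z$-shift: the $\Oh(d)$ shifts of \Cref{lem:hyperquadtree}\ref{lem:Hshift} are a fixed family, and I need to verify that this family (or a modest enlargement of it, still $\Oh(d)$ many and consistent with the ``$3d+3$'' bound) contains, for \emph{every} pair $p,q$, one shift that is good in the horizontal sense \emph{and} whose vertical offset is the $\sigma$ demanded by \Cref{lem:1Dshift}. This should work because the vertical shift only takes three values and the horizontal shifts already come in a product-over-coordinates form, so one can take the Cartesian product of the $x$-shift family with $\{0,\tfrac13,\tfrac23\}$ in $z$ without blowing up the count beyond $\Oh(d)$; but making sure the quantitative thresholds (the factor $4$ from \Cref{lem:lcadiff}, the factor $3$ from \Cref{lem:1Dshift}, and the requirement $\ell \geq 4 + \log\log d$ so that a level-$\ell$ cell is wide enough to absorb these constants for all $d$) all line up is the delicate part. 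A secondary point to handle carefully is the edge case where $t$ coincides with $p$ or $q$, but \Cref{lem:lcadiff} already treats exactly that case, so it transfers directly. Once these thresholds are checked, the conclusion $pq \cap C|_{[\frac13,1]}\neq\emptyset$ is immediate from $t$ being the highest point of $pq$.
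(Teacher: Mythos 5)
Your proposal assembles the right ingredients (the highest point $t$ of $pq$, the three vertical shifts behind \Cref{lem:1Dshift}, \Cref{lem:lcadiff}, and the $\bigO{d}$ horizontal shifts), and in that sense it is aimed at the paper's argument, but it rests on a misreading of what $\lca(p,q,T_i^\ell)$ is. $T_i^\ell$ is a \emph{single} tiling of $\Hyp^d$ by isometric cells of height $2^\ell$, viewed as an arborescence in which each tile points to the tile directly above it; the lowest common ancestor of $\cell(p,T_i^\ell)$ and $\cell(q,T_i^\ell)$ is therefore one tile of this fixed tiling, and in general it contains neither $p$ nor $q$. You instead treat $C$ as ``the smallest level $\ell'\ge\ell$ at which $p$ and $q$ are brought together'' (you speak of a ``cell at level $\ell'$ of $T^\ell$ with height $2^{\ell'}$'', and you conclude from ``$\cell(p,T_i^\ell)$ and $\cell(q,T_i^\ell)$ both lie inside the level-$\ell'$ cell around $t$'' that this cell is the lca). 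That is a quadtree-style lca, a different and much larger object; with that reading the lemma is nearly vacuous, since a cell containing both endpoints is trivially intersected by $pq$. The actual content of the statement is that the geodesic must pass through one specific bounded-size tile of the tiling (this is what later allows a small covering of Steiner points to be placed on that tile), so your ``reduction to a statement about the $z$-coordinate'' assumes away exactly the part that needs proof.

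Two quantitative steps are also wrong or missing. First, the vertical shift: you only ask for $t$ to land in the \emph{upper two-thirds} of its cell, but what is needed is that $t$ lies in the \emph{middle third}: the lca tile can sit up to $3+\tfrac32\log(d+1)$ level-$0$ cells \emph{above} $\cell(t,T^0)$ because of the horizontal separation of $p$ and $q$, so one needs at least a $\tfrac13$ fraction of the level-$\ell$ height as headroom above $t$ to guarantee that this climb stays inside $t$'s level-$\ell$ tile. If $t$ is allowed near the top of its tile, the lca tile is the next tile up, it does not contain $t$, and your step ``once $C$ contains $t$ the segment enters $C$ from below'' has nothing to stand on. Second, the horizontal step: from $\diam(\pi_x(\cell(t,T^0)))>\|x(p)-x(q)\|/4$ you infer that the level-$\ell$ cell containing $t$ has an $x$-projection containing both $x(p)$ and $x(q)$; this does not follow, since a lower bound of a quarter of the separation gives neither enough width nor the right alignment. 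The paper instead applies \Cref{lem:locsensshift} to the projected points to obtain, for a suitable horizontal shift, the \emph{upper} bound $2(d+1)\sqrt{d-1}\,\|x(p)-x(q)\|$ on the smallest projected cell containing both, and it is the comparison of this upper bound with the lower bound of \Cref{lem:lcadiff} that yields the $\bigO{\log d}$ level gap and the inequality $2^{-\ell}\bigl(3+\tfrac32\log(d+1)\bigr)\le\tfrac13$ — which is where $\ell\ge 4+\log\log d$ enters (the cell must be \emph{tall} enough, not wide enough). This comparison is absent from your proposal, so its quantitative core is missing. (A minor slip: $t$ does not lie ``between $p$ and $q$ in $z$''; it lies above both.)
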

\begin{proof}
    We use the same shifts as in \Cref{lem:hyperquadtree}\ref{lem:Hshift} (and thereby \Cref{lem:shift}).
    These are of the form $(x,z) \mapsto (\sigma x + \tau, \sigma z)$; a `vertical' shift that multiplies all coordinates by $\sigma$ followed by a `horizontal' shift that adds $\tau$ to the $x$-coordinates.
    In particular, we have all combinations of $\sigma = 2^{H \cdot i / 3}$ for $i \in \{0,1,2\}$ (the shifts from \Cref{lem:1Dshift}) and $\tau = \left( \frac{W \cdot j}{D+1}, \dots, \frac{W \cdot j}{D+1} \right)$ for $j \in \{0, \dots, D\}$ (the shifts from \Cref{lem:locsensshift}).

    First, let $t$ be the highest point on geodesic $pq$.
    According to \Cref{lem:1Dshift} we can take $\sigma$ so that $t$ lies in the middle $\frac{1}{3}$ of its cell, i.e.\ $t \in \cell(t, T^\ell_{\sigma,\tau})|_{[\frac{1}{3},\frac{2}{3}]}$ for any $\tau$.
    Now, project $p,q$ and each $T_{\sigma,\tau}^0$ onto the $x$-coordinate, so that we get $(d-1)$-dimensional points and $(d-1)$-dimensional infinite Euclidean quadtrees.
    By \Cref{lem:locsensshift} there must be a shift $\tau$ where the smallest cell $C \in T_{\sigma,\tau}^0$ such that $\pi_x(C)$ contains both $x(p)$ and $x(q)$ has $\diam(\pi_x(C)) \leq 2(d+1)\sqrt{d-1} \cdot \|x(p) - x(q)\|$.
    
    In $T_{\sigma,\tau}^0$, the lowest common ancestor of $p$ and $q$ is then either $C$ or a cell containing $p$ or $q$.
    In the latter case, we already know that $pq$ intersects $C$.
    We have also already ensured that $t$ must lie in the middle $\frac{1}{3}$ of the level-$\ell$ cell containing $C$, which proves the statement.
    
    Now, consider the case where $C = \lca(p, q, T_i^0)$ and let $C_t$ be the cell of $T_{\sigma,\tau}^0$ containing $t$.
    By Lemma~\ref{lem:lcadiff}, $\diam(\pi_x(C_t)) > \|x(p) - x(q)\| / 4$.
    Thus, $\diam(\pi_x(C)) / \diam(\pi_x(C_t)) < 8(d+1)\sqrt{d-1}$ which means they differ by less than $3 + \frac{3}{2}\log(d+1)$ levels of $T^0$.
    Going up these $3 + \frac{3}{2}\log(d+1)$ level-$0$ cells means we cross $2^{-\ell} \cdot (3 + \frac{3}{2}\log(d+1)) \leq \frac{1}{3}$ of the height of a level-$\ell$ cell, so because $t$ was in the middle $\frac{1}{3}$ of its level-$\ell$ cell it means we remain in that same cell.
    We also already know that $t$ lies in the upper $\frac{2}{3}$ of the level-$\ell$ cell.
\end{proof}

Note that when combined with a transitive closure spanner as in \cite{additivespanner}, Lemma~\ref{lem:lca} immediately gives a Steiner spanner with (for example) $\bigO{dn \log^* n}$ edges and additive error $\bigO{\log d}$.
However, we need error $\eps |pq|$, which can be much smaller than $\log d$.
To accomplish that, we will use groups of equally-spaced Steiner points, as in Section~\ref{sec:doublingspace}.
We first give an alternative to Lemma~\ref{lem:split} that works for larger hyperbolic distances and gives an additive error:
\Hsplit
\begin{proof}
    Let $y$ be the point on $pq$ closest to $x$.
    If $\delta \geq \sqrt{2}$ then the lemma already holds by the triangle inequality, so we can assume that $\delta < \sqrt{2}$.
    Now, again by the triangle inequality, $|py| \geq |px| - \delta > 2 - \sqrt{2} > \frac{1}{2} \ln 3$; we will use this bound later.
    First, we use the hyperbolic analogue to the Pythagorean theorem and that $\cosh t \leq e^{t^2 / 2}$ for $t \in \Reals$, giving
    \[
        \cosh|px| = \cosh\delta \cdot \cosh|py| \leq e^{\delta^2/2} \cosh|py|.
    \]
It remains to show that 
\begin{equation}\label{eq:toshow}
    e^{\delta^2/2} \cosh|py| \leq \cosh(\delta^2 + |py|),
\end{equation}
as $\cosh|px|\leq \cosh(\delta^2 + |py|)$ implies $|px|\leq \delta^2 + |py|$ because $\cosh$ is monotone increasing in the non-negative domain.

Equation~\eqref{eq:toshow} would follow if we can show $e^x \cosh a \leq \cosh(2x + a)$ for all $x \in \Reals$ and $a \geq \frac{1}{2} \ln 3$; let us prove this next.
    Note that $1 - e^{-3x}$ is concave while $3(e^x - 1)$ is convex, and both have the tangent line $y=3x$ at $x=0$.  Therefore,
 \[1 - e^{-3x} \leq 3x \leq 3(e^x - 1) \leq e^{2a} (e^x - 1)\]
After multiplying by $e^{x-a}$ we get:
    \begin{align*}
        e^{x-a} - e^{-2x-a} &\leq e^{2x+a} - e^{x+a} \\
        e^{x-a} + e^{x+a} &\leq e^{2x+a} + e^{-2x-a} \\
        e^x \cosh a &\leq \cosh(2x + a),
    \end{align*}
    which concludes the proof.
\end{proof}

\paragraph{Construction of a hyperbolic bipartite Steiner spanner $G_2(P,\eps)$.}
We now have all we need to construct a bipartite graph $G_2(P,\eps)$ and prove that it is a Steiner spanner.
To deal with small distances, we make use of Theorem~\ref{thm:doublingmain}:
for each shift of Lemma~\ref{lem:shift}, we partition the points $P$ based on level $4 + \log\log d$ of the quadtree, then follow the construction of Theorem~\ref{thm:doublingmain} for each of these sets.

For larger distances, we first introduce some new notation:
for any cell $C$, we let $T_i^\ell \cap C$ give the subgraph of $T_i^\ell$ induced by the cells of $T_i^\ell$ intersecting $C$.
Now consider each pair $p,q \in P$ adjacent in the order given by the $i^\text{th}$ quadtree from Lemma~\ref{lem:shift}.
Consider each cell $C$ of quadtree $i$ with $p,q \in C$ and $\diam(C) \leq \frac{1}{\eps} |pq|$, together with each shift $j$ given by \Cref{lem:lca}.
Depending on which is larger, let $\ell$ either be $4 + \log\log d$, or such that level $\ell$ is the first level where cells have diameter at most $\eps\diam(C)$; we will connect $p$ and $q$ to their ancestors in $C \cap T_j^\ell$.
More specifically, for each such ancestor $C'$ we define Steiner points $\cov(C')$ placed to form a $\sqrt{\eps\diam(C)}$-covering of $C'|_{\{\frac{1}{6}\}}$.
If $\diam(C') \leq \eps \diam(C)$ then $\cov(C')$ is a single point inside $C'$.
Otherwise, $\ell = 4 + \log\log d$ so $C'$ is a small-boundary cell and $|\cov(C')| = d^{\bigO{d}} \cdot (\eps \diam(C))^{(d-1)/2}$.
For each such cell $C'$, we add the Steiner points $\cov(C')$ to $G_2(P,\eps)$, as well as edges from $p$ and $q$ to these points.
That gives the following (intermediate) result, where $\pi_{[a,b]}$ clips its argument into interval~$[a,b]$, i.e.\ $\pi_{[a,b]}(x) := \max\{a, \min\{b, x\}\}$.

\begin{lemma}
    Let $\eps \in (0,\frac12]$ and $P$ be a set of $n$ points in $\Hyp^d$.
    There is a Steiner $(1+\eps)$-spanner for~$P$ with Steiner vertex set $Q$ that is bipartite on $P$ and $Q$, where each point $p \in P$ has degree $d^{\bigO{d}} \cdot \delta_p (\eps\delta_p)^{(1-d)/2} \log\frac{1}{\eps}$.
    Here, $\delta_p = \pi_{[1,\frac1\eps]}( \max_{q \in P} \dist(p,q) )$ for $d = 2$ and $\delta_p = \pi_{[1,\frac1\eps]}( \min_{q \in P \setminus\{p\}} \dist(p,q) )$ otherwise.
\end{lemma}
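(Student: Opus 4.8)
The plan is to verify that the graph $G_2(P,\eps)$ described just above the statement satisfies the spanner property and has the claimed degree bound, by splitting the analysis according to the distance $|pq|$ of an arbitrary pair $p,q \in P$. For $|pq| < 2^{4+\log\log d} = \Theta(\log d)$, i.e.\ a pair fitting in a single quadtree cell at level $4+\log\log d$, the distance is handled by the embedded copy of the construction from \Cref{thm:doublingmain}: by \Cref{lem:shift} and \Cref{lem:hyperquadtree}\ref{lem:Hshift}, $p$ and $q$ lie in a common cell of diameter $\bigO{d\sqrt d}\cdot|pq|$ in one of the shifts, and within the level-$(4+\log\log d)$ cell containing both, the sub-quadtree behaves essentially Euclidean (distortion $d^{\bigO 1}$, as established in the proof of \Cref{lem:shift}), so \Cref{lem:split}(ii) applies with $\Delta = \bigO{\log d}$ and \Cref{thm:doublingmain} gives the $(1+\eps)$-spanner property for these pairs with each such point having degree $d^{\bigO d}\eps^{(1-d)/2}\log\frac1\eps$.

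For the large-distance case $|pq| \geq 2^{4+\log\log d}$, I would mimic the proof of \Cref{thm:doublingmain} but replace its two geometric ingredients with the hyperbolic analogues. First apply \Cref{lem:lca}: there is a shift $j$ so that, writing $C^\ell = \lca(p,q,T_j^\ell)$ where level $\ell$ is chosen as in the construction (either $4+\log\log d$ or the first level with cell diameter $\leq \eps\diam(C)$, whichever is larger, with $C$ the smallest quadtree cell containing both $p$ and $q$), the geodesic $pq$ crosses $C^\ell|_{[\frac13,1]}$. Let $m$ be the highest point of $pq$; it lies in a cell $C'$ that is an ancestor of both $\cell(p,T_j^\ell)$ and $\cell(q,T_j^\ell)$ in $C\cap T_j^\ell$, hence $p$ and $q$ are both connected (in $G_2$) to the Steiner points $\cov(C')$ placed on a $\sqrt{\eps\diam(C)}$-covering of $C'|_{\{1/6\}}$. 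Pick the Steiner point $s \in \cov(C')$ closest to $m$; since $m \in C'|_{[\frac13,1]}$ and $s \in C'|_{\{1/6\}}$, the hyperplane $H = C'|_{\{1/6\}}$ separates $p$ from $q$ (because $m$ is the apex of the geodesic, $p$ and $q$ lie on opposite sides of any horizontal-ish separator below $m$), and because $C'$ is an ancestor that is large enough, $H$ has distance at least $2$ from both $p$ and $q$. The distance $\delta$ from $s$ to $pq$ is at most the covering radius plus the within-cell displacement, which is $\bigO{\sqrt{\eps\diam(C)}} = \bigO{\sqrt{\eps|pq|}\cdot\sqrt{d\sqrt d}}$; then \Cref{lem:Hsplit2} gives $|ps|+|sq| \leq |pq| + \delta^2 = (1 + \bigO{d\sqrt d\,\eps})|pq|$. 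Running this with a rescaled $\eps' = \eps/(c+3)$ for an appropriate $c = d^{\bigO 1}$ and invoking \Cref{lem:reprapprox} (with $p',q'$ chosen exactly as in the proof of \Cref{thm:doublingmain}, as the order-adjacent points in the relevant descendant cells, so that they too are connected to $s$) upgrades this to the $(1+\eps)$-spanner property.

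For the degree bound, fix $p$ and count the Steiner points it is joined to. Over all $\bigO{d}$ shifts and all $\bigO{\log\frac1\eps}$ relevant cells $C$ (those with $\frac12|pq| \le \diam(C) \le \frac1\eps|pq|$, where $p$ is order-adjacent to some $q$), in each $C\cap T_j^\ell$ the point $p$ has exactly one ancestor per level, across $\bigO{\log(1/\eps)}$ levels above $\cell(p,T_j^\ell)$, but only the topmost ancestor $C'$ (at level $4+\log\log d$, if that is where $\ell$ sits) contributes more than one Steiner point: it contributes $|\cov(C')| = d^{\bigO d}(\eps\diam(C))^{(d-1)/2}$ points, while all lower ancestors contribute a single point each. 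Since $\diam(C) = \bigTh{|pq|}$ for the relevant cells and $|pq|$ ranges over $[\delta_p, 2\delta_p]$ — using the definition of $\delta_p$ as the clamped maximum (for $d=2$) or minimum (for $d\ge 3$) distance from $p$, which controls the scale of the cells where $p$ participates as an order-adjacent endpoint — the dominant term is $d^{\bigO d}(\eps\delta_p)^{(d-1)/2}\cdot\log\frac1\eps$ Steiner points of the top type plus $d^{\bigO d}\log\frac1\eps$ of the singleton type, for a total degree $d^{\bigO d}\cdot\delta_p(\eps\delta_p)^{(1-d)/2}\log\frac1\eps$ after one tracks the extra factor $\delta_p$ coming from the number of distinct scales $C$ at which $p$ is active (about $\log(\delta_p)$ for $d=2$, absorbed into the $\delta_p$ factor; the precise bookkeeping here uses that the number of quadtree levels whose cells have diameter in $[\delta_p, 1/\eps]$ contributing nontrivially is $\bigO{\log(1/(\eps\delta_p))}$, and $\delta_p \ge 1$). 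The main obstacle is exactly this last bookkeeping: establishing that for each $p$ the set of cells $C$ in which $p$ is an order-adjacent endpoint is tightly controlled by $\delta_p$ (so that the $\delta_p$ and $(\eps\delta_p)^{(1-d)/2}$ factors come out correctly, distinguishing the $d=2$ ``max-distance'' regime from the $d\ge 3$ ``min-distance'' regime), which leans on the property that in the L-order a point is order-adjacent to points only within a bounded range of scales relative to its nearest/farthest neighbour — the rest is a routine transcription of the \Cref{thm:doublingmain} argument with \Cref{lem:Hsplit2} in place of \Cref{lem:split} and \Cref{lem:lca} in place of \Cref{lem:shift}.
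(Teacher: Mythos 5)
Your stretch analysis follows the paper's own route: small pairs via the embedded copy of \Cref{thm:doublingmain}, large pairs via \Cref{lem:lca}, the $\sqrt{\eps\diam(C)}$-covering of $C'|_{\{1/6\}}$, \Cref{lem:Hsplit2}, and finally \Cref{lem:reprapprox} with order-adjacent representatives. One slip there: you apply \Cref{lem:lca} and \Cref{lem:Hsplit2} to the segment $pq$ (picking the Steiner point near the apex of $pq$), whereas the edges in $G_2$ emanate from the representatives $p',q'$, so the lca cell and the distance/separation bounds must be stated for $p'q'$ (as the paper does, taking $C'=\lca(p',q',T_j^\ell)$ and noting $C'$ survives in $C\cap T_j^\ell$ because $p'q'\subset C$); this is fixable and you gesture at the fix, so I do not count it as a gap.

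The genuine gap is in the degree bound, which is precisely what carries the $\delta_p$-dependence and distinguishes this lemma from \Cref{thm:Hbipartite}. First, your accounting misreads the construction: for a fixed cell $C$ the level $\ell$ is a single value (the larger of $4+\log\log d$ and the first level with cell diameter at most $\eps\diam(C)$), so \emph{all} ancestors of $p$ in $C\cap T_j^\ell$ sit at that one level and either all receive a full covering or all receive a single Steiner point --- there is no mixture of one ``topmost'' ancestor with a large covering and lower ancestors with singletons, and your count built on that picture does not describe $G_2$. Second, the step you yourself flag as the main obstacle is false as stated: L-order adjacency does not confine $|pq|$ to $[\delta_p,2\delta_p]$ (for $d\geq 3$ the order-adjacent neighbour can be far from the nearest neighbour), and no such ``bounded range of scales'' property of the order is available or needed. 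The $\delta_p$-dependence has to come instead from the range of admissible cell diameters and monotonicity in $\diam(C)$ of the per-cell count: every cell $C$ used for a pair containing $p$ satisfies $\delta_p\leq\diam(C)\leq\frac1\eps|pq|$, the per-cell contribution in the covering regime is $d^{\bigO{d}}\eps^{(1-d)/2}\diam(C)^{(3-d)/2}$ (number of ancestors times covering size), which is non-increasing in $\diam(C)$ for $d\geq 3$ --- hence bounded by its value at $\diam(C)=\delta_p$, i.e.\ $\delta_p(\eps\delta_p)^{(1-d)/2}$, with the single-point regime contributing at most $\bigO{1/\eps}\leq\delta_p(\eps\delta_p)^{(1-d)/2}$ --- while for $d=2$ it is increasing and the upper cap on $\diam(C)$, governed by the clamped \emph{maximum} distance, is what enters; multiplying by the $\bigO{\log\frac1\eps}$ admissible levels and $\bigO{d^2}$ order/shift combinations gives the stated form. (To be fair, the paper itself dispatches the degree bound with ``follows directly from the construction,'' but your proposed bookkeeping is not a correct transcription of that construction and, as you concede, leaves the $\delta_p$ factors unestablished.)
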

\begin{proof}
    We will prove that $G_2(P,\eps')$ gives the claimed results when $\eps' = \eps / (c+3)$ and $c = 2 + c_{\text{shift}} d\sqrt{d}$.
    It follows directly from the construction that $G_2(P, \eps')$ is bipartite on $P$ and $Q$ and that each point in $P$ has the given degree.
    It remains to show that $G_2(P, \eps')$ is a Steiner $(1+\eps)$-spanner; this proof is similar to that of \Cref{thm:doublingmain}.
    Let $p, q \in P$ and $C$ be the smallest cell among the quadtrees where $p,q \in C$; by Lemma~\ref{lem:shift} we know $\diam(C) \leq c_{\text{shift}} d\sqrt{d} |pq|$ for some constant $c_{\text{shift}}$.
    Then, let $p' \in P$ be the point in the same cell from $\desc(C, \frac{\eps'}{c_{\text{shift}}d \sqrt d})$ as $p$ closest in the order to $q$, and $q'$ defined analogously.
    Let $\ell$ be the level as defined in the construction.
    Now, \Cref{lem:lca} gives a shift $j$ where $p'q'$ intersects $C' = \lca(p',q',T_j^\ell)$.
    From \Cref{lem:shift} we know that $p'q' \subset C$ and thus $C' \cap C \neq \emptyset$, which means that $C'$ also exists in $C \cap T_j^\ell$.
    Thus, the construction ensures that both $p'$ and $q'$ are connected to $\cov(C')$.
    If $\ell$ was such that $\diam(C') \leq \eps'\diam(C)$ then there is a single Steiner point $s \in \cov(C')$ and $|p's| + |sq'| \leq |p'q'| + 2\diam(C') \leq |p'q'| + c_{\text{shift}} d\sqrt{d}\eps'|pq|$.
    
    Otherwise, $\cov(C')$ is a $\sqrt{\eps'\diam(C)}$-covering on $C'|_{\{\frac{1}{6}\}}$.
    Lemma~\ref{lem:lca} implies that $p'q'$ intersects $C'|_{[\frac{1}{3},1]}$, so in particular it will also intersect $C'|_{\{\frac{1}{6}\}}$ and both $p'$ and $q'$ will have distance at least $\frac{1}{6} h(C') \ln2 \geq 2$ from that set.
    By construction, $p'q'$ has distance at most $\sqrt{\eps' \diam(C)}$ from some Steiner point $s \in \cov(C')$.
    Lemma~\ref{lem:Hsplit2} now implies that also in this case $|p's| + |sq'| \leq |p'q'| + \eps'\diam(C) \leq |p'q'| + c_{\text{shift}} d\sqrt{d}\eps'|pq|$.
    
    Additionally, $p$ and $p'$ share a cell of diameter at most $\frac{\eps'}{c_{\text{shift}}d \sqrt d} \cdot \diam(C) \leq \eps'|pq|$ and so do $q$ and $q'$.
    In particular this means that $|p'q'| \leq |pp'| + |pq| + |qq'| \leq (1 + 2\eps') |pq|$ and thus also $|p's| + |sq'| \leq (1 + c\eps') |pq|$.
    We now satisfy the conditions of Lemma~\ref{lem:reprapprox} to make $G_2(P,\eps')$ a Steiner $(1 + \eps)$-spanner.
\end{proof}

Perhaps worth noting is that for large distances ($\delta_p = 1/\eps$) the construction becomes similar to that of \cite{hyperquadtree} and consequently we get the same bound; our main improvement is for small and intermediate distances.
As in \Cref{sec:doublingspace}, we can maintain this Steiner spanner very efficiently.
\begin{lemma}\label{lem:bipartite_implicit}
    There is a data structure of size $\bigO{dn}$ that reports in $d^{\bigO{d}} \cdot \delta_p (\eps\delta_p)^{(1-d)/2} \log\frac{1}{\eps} + \bigO{d^2 \log n}$ time the edges that are added or removed for $G_2(P,\eps)$ when a point is added or removed.
\end{lemma}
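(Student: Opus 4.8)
The plan is to follow the template of the proof of \Cref{lem:lowdiam_implicit}. As the data structure, maintain the point set $P$ in sorted order with respect to each of the $\bigO{d}$ orders coming from the quadtrees of \Cref{lem:shift}: store each point's coordinates once, and keep $\bigO{d}$ self-balancing binary search trees holding pointers to these records, for a total of $\bigO{dn}$ space. By \Cref{lem:quadtree_operations}(3) two points can be compared in any of these orders in $\bigO{d}$ time, so inserting or deleting a point into/from all $\bigO{d}$ trees costs $\bigO{d^2 \log n}$ time and, in each order, changes only a constant number of adjacent pairs: when $p$ is inserted between its new predecessor $a$ and successor $b$, the pair $(a,b)$ stops being adjacent while $(a,p)$ and $(p,b)$ become adjacent (deletions are symmetric).

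Next I would observe that every edge of $G_2(P,\eps)$ is generated by an adjacent pair in one of the orders, and that the generating procedure for a pair $(u,v)$ — enumerate the cells $C$ of the relevant quadtree with $u,v \in C$ and $\diam(C) \le \frac1\eps|uv|$, the shifts $j$ of \Cref{lem:lca}, the level $\ell$, the ancestors $C'$ of $u$ and $v$ in $C \cap T_j^\ell$, and finally the Steiner points $\cov(C')$ covering $C'|_{\{1/6\}}$ — can be carried out using only the operations of \Cref{lem:quadtree_operations} (finding smallest containing cells, splitting cells, listing coverings of small-boundary cells) together with $\bigO{d}$-time arithmetic to compute the $T^\ell_j$ shifts. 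Hence, after the BST updates, I re-run this procedure for the $\bigO{d}$ pairs that stopped being adjacent (reporting their edges as removed) and for the $\bigO{d}$ pairs that became adjacent (reporting their edges as added); since the construction of $G_2$ is purely local to the adjacencies, these are exactly the edges in which $G_2(P,\eps)$ and its updated version differ. The small-distance part of $G_2$ — the copy of the construction of \Cref{thm:doublingmain} run inside each level-$(4+\log\log d)$ cell — is handled identically, since such a cell restricts each order to a contiguous sublist, so adjacency inside it is just adjacency in the full order clipped at the cell boundaries (exactly the situation covered by \Cref{lem:lowdiam_implicit}).

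For the running time, the cost of re-running the generating procedure for a single pair $(u,v)$ is, up to $d^{\bigO{d}}$ factors and the $\bigO{d\log\frac1\eps}$ enumerated (cell, shift) combinations, proportional to the number of Steiner points it produces, i.e.\ to the contribution of $(u,v)$ to the degrees of $u$ and $v$. Summing over the $\bigO{d}$ affected pairs at an update of $p$, this is dominated by a constant times the degree of $p$ in $G_2$, which the preceding lemma bounds by $d^{\bigO{d}} \cdot \delta_p (\eps\delta_p)^{(1-d)/2} \log\frac1\eps$; adding the $\bigO{d^2\log n}$ term from the BST updates yields the claimed bound.

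The main obstacle I anticipate is the last step, namely bounding the edges of the pair $(a,b)$ that ceases to be adjacent when $p$ is inserted (and the new pairs $(a,p),(p,b)$) by $p$'s degree up to constants, rather than by $a$'s or $b$'s degree. Here one uses that in a depth-first (Z- or L-)order the smallest cell containing $a$ and $b$ already contains $p$, and that $|ap|$ and $|pb|$ are within a constant factor of the distances governing which cells and levels the construction visits for $(a,b)$, so the sets of ancestors and coverings touched by $(a,b)$ are comparable to those touched by $(a,p)$ and $(p,b)$. Combined with the definition of $\delta_p$ — which lower-bounds the relevant distances from $p$ when $d \ge 3$ and equals the clipped maximum distance from $p$ when $d = 2$ — this charges all changed edges to $p$ and gives the stated per-update bound.
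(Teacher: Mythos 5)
Your proposal is correct and follows essentially the same route as the paper: the paper's proof simply reuses the data structure of \Cref{lem:lowdiam_implicit} and observes that the edges of $G_2(P,\eps)$ depend only on adjacencies in the $\bigO{d}$ quadtree orders, so an update affects $\bigO{d}$ pairs whose edges are re-derived from the construction. Your additional charging argument for bounding the work of the pair that ceases (or starts) to be adjacent in terms of $\delta_p$ is a detail the paper's two-sentence proof leaves implicit, and your sketch of it is sound.
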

\begin{proof}
    We use the same data structure as in \Cref{lem:lowdiam_implicit}.
    Since the edges for each point are defined purely based on its neighbours in the quadtree orders, we can follow the given construction and only need to (un)do this for $\bigO{d}$ points when an insertion or deletion happens.
\end{proof}

Putting this together gives \Cref{thm:Hbipartite}, where we use the bound $\delta_p \leq 1$ for simplicity.

\begin{restatable}{theorem}{Hbipartite}\label{thm:Hbipartite}
    Let $\eps \in (0,\frac12]$ and $P$ be a set of $n$ points in $\Hyp^d$.
    There is a Steiner $(1+\eps)$-spanner for~$P$ with Steiner vertex set $Q$ that is bipartite on $P$ and $Q$.
    Each point in $P$ has degree at most $d^{\bigO{d}} \cdot \eps^{(1-d)/2} \log\frac{1}{\eps}$ when $d \geq 3$ and at most $\bigO{\frac{1}{\eps} \log\frac{1}{\eps}}$ when $d=2$.
\end{restatable}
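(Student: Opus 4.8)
The plan is to derive \Cref{thm:Hbipartite} as an essentially immediate corollary of the preceding lemma, by tracking how the factor $\delta_p$ interacts with the exponent $(1-d)/2$. Concretely, I would invoke that lemma to obtain the bipartite Steiner $(1+\eps)$-spanner $G_2(P,\eps')$ (with the same choice $\eps'=\eps/(c+3)$ used there), whose Steiner vertex set $Q$ and bipartiteness on $P$ and $Q$ are already exactly as claimed. The only remaining task is to simplify the per-point degree bound $d^{\bigO{d}}\cdot\delta_p(\eps\delta_p)^{(1-d)/2}\log\frac1\eps$ using the definition $\delta_p=\pi_{[1,1/\eps]}(\cdot)$, which guarantees $1\le\delta_p\le\frac1\eps$ in every case, so that the $\delta_p$-dependence disappears (for $d\ge3$) or collapses into the clean bound (for $d=2$).

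The main computation is to rewrite $\delta_p(\eps\delta_p)^{(1-d)/2}=\eps^{(1-d)/2}\,\delta_p^{(3-d)/2}$ and split into cases. For $d\ge3$ the exponent $(3-d)/2$ is nonpositive and $\delta_p\ge1$, hence $\delta_p^{(3-d)/2}\le1$, giving degree $d^{\bigO d}\eps^{(1-d)/2}\log\frac1\eps$ independent of $\delta_p$. For $d=2$ the exponent is $-\tfrac12$, so the degree is $\bigO{\sqrt{\delta_p/\eps}\,\log\frac1\eps}$, and plugging in the upper clip $\delta_p\le\frac1\eps$ yields $\bigO{\frac1\eps\log\frac1\eps}$. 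Combining the two cases gives the stated theorem; no new construction, Steiner-point placement, or stretch argument is needed beyond what the lemma already provides.

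I do not anticipate a genuine obstacle, since all the work sits in the lemma and in \Cref{lem:Hsplit2}; the only subtlety worth stating carefully is \emph{why} the clipping $\pi_{[1,1/\eps]}$ is exactly what the bound needs. For $d\ge3$ it is the lower clip $\delta_p\ge1$ that is essential — otherwise a nearly-isolated point with tiny nearest-neighbour distance would make $\delta_p^{(3-d)/2}$ blow up — whereas for $d=2$ it is the upper clip $\delta_p\le\frac1\eps$ that is essential, since the unclipped diameter of $P$ could be arbitrarily large. The last thing I would double-check is that the exponent appearing in the lemma is indeed $(1-d)/2$ and not $-(d-1)$, i.e.\ that the $\eps^{(1-d)/2}$ savings is actually available at intermediate hyperbolic distances for $d\ge3$; this is precisely where \Cref{lem:Hsplit2} (and \Cref{lem:split} via \Cref{thm:doublingmain} for the small-distance part) is used in the lemma's proof, so once that is granted the present theorem follows by the above arithmetic.
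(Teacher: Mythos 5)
Your proposal is correct and matches the paper's own derivation: the paper obtains \Cref{thm:Hbipartite} directly from the preceding lemma on $G_2(P,\eps)$ by plugging in the worst-case value of $\delta_p$, which is exactly your computation $\delta_p(\eps\delta_p)^{(1-d)/2}=\eps^{(1-d)/2}\delta_p^{(3-d)/2}$ with the lower clip $\delta_p\geq 1$ handling $d\geq 3$ and the upper clip $\delta_p\leq \frac1\eps$ handling $d=2$. Your case analysis is in fact spelled out more carefully than the paper's one-line remark, but the argument is the same.
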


Following the same steps as Corollary~16 of \cite{hyperquadtree}, \Cref{thm:Hbipartite} implies a $(2+\eps)$-spanner with the same edge bound.
We rephrase that corollary in the following theorem:
\begin{theorem}\label{thm:2spanner}
    Given a point set $P$ in some metric space $(X,\dist)$ and a bipartite Steiner $(1+\eps)$-spanner for $P$ with $|E|$ edges, there is also a $(2+2\eps)$-spanner for $P$ with at most $|E|$ edges.
\end{theorem}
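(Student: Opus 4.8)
The idea is the standard ``short-cutting'' trick for bipartite Steiner spanners, already used in \cite{hyperquadtree}: we take the given bipartite Steiner $(1+\eps)$-spanner $G$ on $P \cup S$ and build a graph $G'$ on $P$ alone by replacing each Steiner point by a star centred at one of its $P$-neighbours. Concretely, for each Steiner point $s \in S$, fix an arbitrary input neighbour $r_s \in P$ among the points adjacent to $s$ in $G$, and add to $G'$ the edge $p r_s$ for every other $P$-neighbour $p$ of $s$ (with edge weight $|p r_s|$, the true metric distance). First I would check the edge count: each original edge of $G$ is incident to exactly one Steiner point (bipartiteness), so the map sending the edge $ps$ to the edge $p r_s$ is well defined and at most one new edge is produced per original edge; hence $|E(G')| \le |E(G)| = |E|$. (If $p = r_s$ the edge is dropped, so the bound is only an inequality.)

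\textbf{Stretch analysis.} The key step is to show $\dist_{G'}(u,v) \le (2+2\eps)\,\dist(u,v)$ for all $u,v \in P$. Since $G$ is a $(1+\eps)$-spanner on $P$, there is a $u$--$v$ path in $G$; because $G$ is bipartite between $P$ and $S$, this path alternates and has the form $u = p_0,\, s_1,\, p_1,\, s_2,\, \dots,\, s_k,\, p_k = v$ with each $p_j \in P$ and each $s_j \in S$. Each two-hop sub-path $p_{j-1}\, s_j\, p_j$ of $G$ is replaced in $G'$ by going $p_{j-1} \to r_{s_j} \to p_j$ (one or both hops may be trivial if an endpoint equals $r_{s_j}$). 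By the triangle inequality in $(X,\dist)$,
\[
    |p_{j-1} r_{s_j}| + |r_{s_j} p_j|
    \le \bigl(|p_{j-1} s_j| + |s_j r_{s_j}|\bigr) + \bigl(|r_{s_j} s_j| + |s_j p_j|\bigr)
    = \bigl(|p_{j-1} s_j| + |s_j p_j|\bigr) + 2\,|s_j r_{s_j}|.
\]
Now $|s_j r_{s_j}|$ is itself the length of one hop of the $G$-path incident to $s_j$ between $s_j$ and some point of $P$; bounding it crudely by $|p_{j-1}s_j| + |s_j p_j|$ (it is one of these two if $r_{s_j} \in \{p_{j-1},p_j\}$, and in general at most their sum by the triangle inequality, since $s_j r_{s_j}$, $s_j p_{j-1}$ all go between $s_j$ and $P$ — more carefully, $|s_j r_{s_j}| \le |s_j p_{j-1}|$ or $\le |s_j p_j|$ can fail, so I would instead simply use $|s_j r_{s_j}| \le |p_{j-1} s_j| + |p_{j-1} r_{s_j}|$ and re-derive) gives that the $G'$-detour for this block costs at most $2\bigl(|p_{j-1} s_j| + |s_j p_j|\bigr)$. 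Summing over $j = 1,\dots,k$ and using $\sum_j (|p_{j-1}s_j| + |s_j p_j|) = \dist_G(u,v) \le (1+\eps)\dist(u,v)$ yields $\dist_{G'}(u,v) \le 2(1+\eps)\dist(u,v) = (2+2\eps)\dist(u,v)$, as desired. One also needs the trivial lower-bound direction $\dist_{G'}(u,v) \ge \dist(u,v)$, which holds because $G'$ is a geometric graph with edge weights equal to true distances, so no path can be shorter than the metric distance.

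\textbf{Main obstacle.} The only delicate point is bounding $|s_j r_{s_j}|$: the hop from $s_j$ to its ``representative'' $r_{s_j}$ need not appear on the $u$--$v$ path at all. The clean fix is to pick a better representative — for each $s$, let $r_s$ be the closest point of $P$ to $s$ among $s$'s $G$-neighbours — so that $|s r_s| \le \min\{|s p_{j-1}|, |s p_j|\}$ whenever $p_{j-1}, p_j$ are both neighbours of $s$; then each block costs at most $2\min\{\dots\} \le |p_{j-1}s_j| + |s_j p_j| \le $ actually we get the detour bounded by $|p_{j-1}s_j| + |s_j p_j| + 2|s_jr_{s_j}| \le 2(|p_{j-1}s_j|+|s_jp_j|)$, giving the same factor $2$ overall. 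With this choice the telescoping is immediate and the factor-$2$ loss is visibly tight only in the worst case. I would present the proof with the nearest-neighbour choice of $r_s$, note the edge-count inequality, write out the two-hop replacement and the one-line triangle-inequality estimate, sum, and conclude; everything else is routine.
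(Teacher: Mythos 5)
Your proposal is correct and is essentially the paper's own proof: the paper also replaces each Steiner point $s$ by its closest $G$-neighbour $s'\in P$, reconnects all other neighbours of $s$ to $s'$, and bounds each two-hop block via $\dist(p,s')+\dist(s',q)\le \dist(p,s)+2\dist(s,s')+\dist(s,q)\le 2(\dist(p,s)+\dist(s,q))$. Your write-up just makes the alternating-path decomposition and the edge-count bookkeeping more explicit.
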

\begin{proof}
    For each Steiner point $s$, find the closest point $s'$ that is connected to it and connect each point that was connected to $s$ to $s'$ instead.
    For any two points $p,q$ whose distance in the Steiner spanner was given by $\dist(p,s) + \dist(s,q)$, the distance in the spanner is at most
    \[ \dist(p,s') + \dist(s',q) \leq \dist(p,s) + 2\dist(s, s') + \dist(s,q) \leq 2\dist(p,s) + 2\dist(s,q). \]
    Thus, all distances increase by a factor of at most two, which means this is a $(2+2\eps)$-spanner.
\end{proof}

\subsubsection{Lower bound}
Note that Theorem~\ref{thm:Hbipartite} proves Theorem~\ref{thm:main} for $d \geq 3$.
However, for $d=2$ we only have the degree bound $\bigO{\frac{1}{\eps} \log\frac{1}{\eps}}$, which is much larger than the $\bigO{\frac{1}{\sqrt{\eps}} \log\frac{1}{\eps}}$ bound in the Euclidean case.
Despite that, we can use the following lemma to show that Theorem~\ref{thm:Hbipartite} is in fact optimal up to factors $\log\frac{1}{\eps}$:
combined with \Cref{thm:2spanner} it rules out a bipartite Steiner $(1+\eps)$-spanner with $o(n / \eps)$ edges.
Incidentally, it also rules out a $2$-spanner with $o(n \log n)$ edges.
\begin{lemma}\label{lem:bipartitelowerbound}
    For any $n \geq 2$, let $\eps \in [0, \frac{1}{2 \ln n}]$ and $P \subset \Hyp^2$ be a set of $n$ points equally spaced around a circle of radius~$3 \ln n$.
    Then, any $(2+\eps)$-spanner for $P$ must have $\bigOm{n \log n}$ edges.
\end{lemma}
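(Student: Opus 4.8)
The plan is to exploit that on a circle of radius $3\ln n$ in $\Hyp^2$ all pairwise distances lie in a multiplicative window of width essentially $\tfrac32$, which forces spanner paths to have at most two hops, and then to run a charging argument over dyadically scaled distance classes.

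\emph{Step 1: distances and a hop bound.} I would label the points $v_0,\dots,v_{n-1}$ in cyclic order and, for indices, write $g(v_i,v_j)=\min(|i-j|,\,n-|i-j|)$ for the cyclic combinatorial distance, i.e.\ the shortest-path metric on the $n$-cycle. Applying the hyperbolic law of cosines to the isosceles triangle $v_iOv_j$ with apex the circle centre $O$ and legs $R=3\ln n$, together with $\cosh 2t=1+2\sinh^2 t$, gives the exact formula $|v_iv_j|=2\arsinh\!\big(\sinh R\cdot\sin(\pi g(v_i,v_j)/n)\big)$. Since $\sinh R=\tfrac{n^3-n^{-3}}2$ and $\sin(\pi h/n)\ge\tfrac2n h$ for $h\in[0,\tfrac n2]$, the argument of $\arsinh$ is always at least $n^2-1\ge 3$, so using $\ln(2x)\le\arsinh x\le\ln(2x)+\tfrac1{4x^2}$ I would record the two-sided estimate $L(g)\le|v_iv_j|\le L(g)+\tfrac13$ with $L(g):=2\ln\!\big(2\sinh R\,\sin(\pi g/n)\big)$. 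Hence the minimum distance (at $g=1$) is at least $2\ln(2n^2-1)$ and the maximum distance is at most $6\ln n$ (attained at $g=\lfloor n/2\rfloor$ when $n$ is even). An elementary calculation shows $3\cdot 2\ln(2n^2-1)>12\ln n+3\ge(2+\eps)\cdot 6\ln n$ for all $n\ge 2$ when $\eps\le\tfrac1{2\ln n}$, so a path of at least three hops already exceeds the allowed stretch; thus in any $(2+\eps)$-spanner every pair $v_i,v_j$ is joined by a path of at most two hops, and since there are no Steiner points the midpoint of such a path is again some $v_k$.

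\emph{Step 2: locating the midpoint.} For a two-hop path $p-m-q$ with $|pm|+|mq|\le(2+\eps)|pq|$ I set $g=g(p,q)$, $g_1=g(p,m)$, $g_2=g(m,q)$. Substituting the Step~1 estimates and using $\eps L(g)\le\eps\cdot 6\ln n\le 3$, all the $\ln(2\sinh R)$ terms cancel and one is left with $\sin(\pi g_1/n)\sin(\pi g_2/n)<e^2\sin^2(\pi g/n)$. Sandwiching the sines via $\tfrac2n h\le\sin(\pi h/n)\le\tfrac\pi n h$ on $[0,\tfrac n2]$ turns this into $g_1g_2<\tfrac{e^2\pi^2}{4}g^2=:C_0 g^2$. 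Combined with the cyclic-metric facts $g_1+g_2\ge g$ and $|g_1-g_2|\le g$, this yields $\min(g_1,g_2)<\sqrt{C_0}\,g$, hence $\max(g_1,g_2)<(\sqrt{C_0}+1)g=:C_1 g$, while also $\max(g_1,g_2)\ge g/2$. In words: both spanner edges of the path have combinatorial length $\bigO{g}$, and the longer of the two has combinatorial length $\bigTh{g}$ — the quantitative form of ``$m$ is separated from $\{p,q\}$ by only $\bigO{g}$ of the points''.

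\emph{Step 3: charging.} For $j=0,\dots,\lfloor\log(n/2)\rfloor$ set $\mathcal P_j=\{\{v_i,v_{i+2^j}\}:i\in\mathbb Z_n\}$ (indices mod $n$), the $\bigTh{n}$ pairs at combinatorial distance $2^j$; these sets are pairwise disjoint and there are $\bigTh{\log n}$ of them. To each pair $(p,q)\in\mathcal P_j$ I assign a \emph{witness} edge of the spanner: if $pq$ is itself an edge, take that edge; otherwise take the longer of the two edges of a two-hop path $p-m-q$. Fixing an edge $e=\{u,w\}$ of combinatorial length $g_e$, I bound the number of pairs for which $e$ is the witness. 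As a direct witness, $e$ serves only the single pair $e$ itself, and only if $g_e=2^j$. As a longer-edge witness for a pair of $\mathcal P_j$, Step~2 forces $g_e\in[2^{j-1},C_1 2^j)$, which admits only $\bigO 1$ values of $j$; and for each such $j$ there are at most $8$ such pairs, since then $e$ shares exactly one endpoint with one pair-endpoint — determining that endpoint ($\le 2$ ways), while $m$ is the other endpoint of $e$ and each vertex lies in at most two pairs of $\mathcal P_j$ ($\le 2$ ways for the partner), and symmetrically for the edge $\{m,q\}$. Hence every edge is a witness for only $\bigO 1$ pairs in total, so $\sum_j|\mathcal P_j|\le\bigO{|E|}$; since $\sum_j|\mathcal P_j|=\bigOm{n\log n}$, we conclude $|E|=\bigOm{n\log n}$. (Small $n$ are handled by the hidden constant.)

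\emph{Main obstacle.} The crux lies in Steps~2--3 together: one must quantify exactly how little freedom the midpoint $m$ has — the bound $g_1g_2=\bigO{g^2}$, which is where the radius $3\ln n$ and the cap $\eps\le\tfrac1{2\ln n}$ are used so that the large logarithmic terms cancel — and then realise that a two-hop pair should be charged to the \emph{longer} of its two edges, whose combinatorial length is $\bigTh{2^j}$ rather than merely $\bigO{2^j}$; this is what confines each spanner edge to be the witness of only a bounded number of dyadic distance classes. Charging to an arbitrary path edge would fail, since a short edge could be reused by pairs from $\bigTh{\log n}$ different classes.
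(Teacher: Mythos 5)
Your proposal is correct and follows essentially the same route as the paper's proof: the exact formula $2\arsinh(\sinh R\cdot\sin(\pi g/n))$ with logarithmic two-sided estimates, the observation that the min/max distance ratio forces hop-diameter two, a quantitative localization of the two-hop midpoint ($\max(g_1,g_2)=\Theta(g)$, matching the paper's $|\alpha|<16$ bound), and a decomposition of pairs into $\Theta(\log n)$ index-distance scales each forcing $\Omega(n)$ distinct edges. The only difference is bookkeeping in the final count (dyadic classes charged to the longer path edge, versus the paper's blocks of $6k$ consecutive points and scales $51^i$), which does not change the substance of the argument.
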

\begin{proof}    
    Number the points of $P$ consecutively and modulo $n$, then define the \emph{index distance} between $p_i$ and $p_j$ to be $\min\{|i - j|, n - |i - j|\}$.
    First, let us determine the relation between this index distance and the hyperbolic distance.
    For that, let $c$ be the centre of the circle and consider the points $p_0$ and $p_k$.
    Also, place a point $m$ halfway along segment $p_0p_k$, so that we get a right-angled triangle $cmp_0$.
    Hyperbolic trigonometry now gives $\sinh|p_0m| = \sin(\angle p_0cm) \cdot \sinh|p_0c|$.
    Therefore,
    \[ |p_0p_k| = 2|p_0m| = 2\arsinh\left( \sin\frac{\pi k}{n} \sinh(3 \ln n) \right). \]
    
    We will bound this by simpler functions from both directions.
    First of all, $2x/\pi \leq \sin x \leq x$ for $0 \leq x \leq \pi/2$ and $(1-e^{-2}) e^x / 2 \leq \sinh x \leq e^x / 2$ for $x \geq 1$, giving
    \begin{align*}
        2\arsinh\left( kn^2 (1-e^{-2}) \right) &\leq
        |p_0p_k| \leq
        2\arsinh\left( \frac{\pi kn^2}{2} \right). \\
    \intertext{Additionally, $\ln 2x \leq \arsinh x$ for $x \geq 0$ and $\arsinh x \leq \ln x + \arsinh\frac{\pi}{2} - \ln\frac{\pi}{2}$ for $x \geq \frac{\pi}{2}$, so}
        2\ln\left( 2kn^2 (1-e^{-2}) \right) &\leq
        |p_0p_k| \leq
        2\ln\left(\frac{\pi kn^2}{2} \right) + 2\arsinh\frac{\pi}{2} - 2\ln\frac{\pi}{2}, \\
    \intertext{which we can finally simplify and approximate as}
        1 + 2\ln k + 4 \ln n &<
        |p_0p_k| <
        3 + 2\ln k + 4 \ln n.
    \end{align*}

    From this it follows that all distances between pairs in $P$ are greater than $1 + 4 \ln n$, while the maximum distance in $P$ is given by the circle diameter $6 \ln n$.
    Now any $(2+\eps)$-spanner must have hop-diameter at most two, seeing as $(2+\eps) 6 \ln n \leq 12 \ln n + 3 = 3 (1 + 4 \ln n)$, so a path of $3$ hops (which is strictly longer than $3 (1 + 4 \ln n)$) would not be able to give a $(2+\eps)$-approximate path for any pair of points.

    Next, we will constrain this even further.
    Let $k \in \{1, \dots, \lfloor n/4 \rfloor\}$ and assume that the path in the $(2+\eps)$-spanner between $p_{-k}$ and $p_k$ goes through some $p_{\alpha k}$.
    We will prove a condition on $\alpha$.
    First,
    \begin{align*}
        |p_{-k}p_{\alpha k}| + |p_{\alpha k}p_k| &\leq (2+\eps)|p_{-k}p_k| \\
        2 + 2\ln(k|\alpha+1|) + 2\ln(k|\alpha-1|) + 8 \ln n &< (2+\eps)\left( 3 + 2 \ln 2k + 4 \ln n \right).
    \end{align*}
    Many terms cancel out, simplifying this as follows.
    We then use the bound $k \leq n/4$, followed by $\eps \leq \frac{1}{2\ln n}$ and finally $n \geq 2$.
    \begin{align*}
        2\ln|\alpha^2-1| &< 4\ln 2 + 4 + \left( 3 + 2 \ln 2k + 4 \ln n \right) \eps \\
        &\leq 4\ln 2 + 4 + \left( 3 - 2 \ln 2 + 6 \ln n \right) \eps \\
        &\leq 4\ln 2 + 7 + \left( 3 - 2 \ln 2 \right) / (2 \ln n) \\
        &\leq 4\ln 2 + 6 + \frac{3}{2 \ln 2}
    \end{align*}
    This can now be rewritten as
    \begin{align*}
        |\alpha^2-1| &< 4 e^{3 + 3 / (4 \ln 2)} \\
        |\alpha| &< \sqrt{1 + 4 e^{3 + 3 / (4 \ln 2)}} \\
        &< 16.
    \end{align*}

    Consider the points $P_{6k} = p_1, \dots p_{6k}$.
    We will prove that the spanner needs at least $2k$ edges with an endpoint in $P_{6k}$ and index length in $[k, 51k)$.
    To that end, first split $P_{6k}$ into parts $A$, $B$ and $C$ that contain $2k$ consecutive points each.
    Because the spanner must have hop-diameter at most two, we can say without loss of generality that the path between arbitrary $a \in A$ and $c \in C$ will go through an edge $av$ of index length at least $k$.
    Let $b \in B$ be the point with index halfway between $a$ and $c$; its index distance to $a$ is at most $3k$.
    Now by the inequality above, the index distance between $b$ and $v$ must be less than $48k$ and thus $av$ has index length less than $51k$.
    There are $2k$ disjoint pairs $a \in A, c \in C$ and each can only share its edge of index length at least $k$ with one other disjoint pair, so there must be $k$ of these edges.
    This proves the claim.

    Now, for convenience assume $n/6$ is a power of $51$ and let $k = 51^i$ with $i$ going from $0$ to $\log_{51}(n/6)$.
    For each $k$, we can split $P$ into $\frac{n}{6k}$ disjoint sets similar to $P_{6k}$, giving a total of $n/6$ edges that have index length in $[51^i, 51^{i+1})$.
    Thus, any $(2+\eps)$-spanner requires at least $(n/3)\log_{51}(n/6)$ unique edges.
\end{proof}

This leaves the question of whether a $2$-spanner with $\bigOd{n \log n}$ edges is possible, and in fact, it is.
Let us first introduce some concepts that will also play a key role in \Cref{sec:strongspanner}.
Given points $P \subset \Hyp^d$, we define the tree $T^\ell(P)$.
Its vertices are the cells $\{ \cell(p,T^\ell) \mid p \in P \} \cup \{ \lca(p,q,T^\ell) \mid p,q \in P\}$, as well as the points $P$.
There is an edge from any cell to the first cell above it and from any point to the cell containing it.
Note that this strongly resembles a $(d-1)$-dimensional Euclidean compressed quadtree; this lets us efficiently construct and maintain $T^\ell(P)$.

\begin{figure}
    \centering
    \includegraphics[scale=.8]{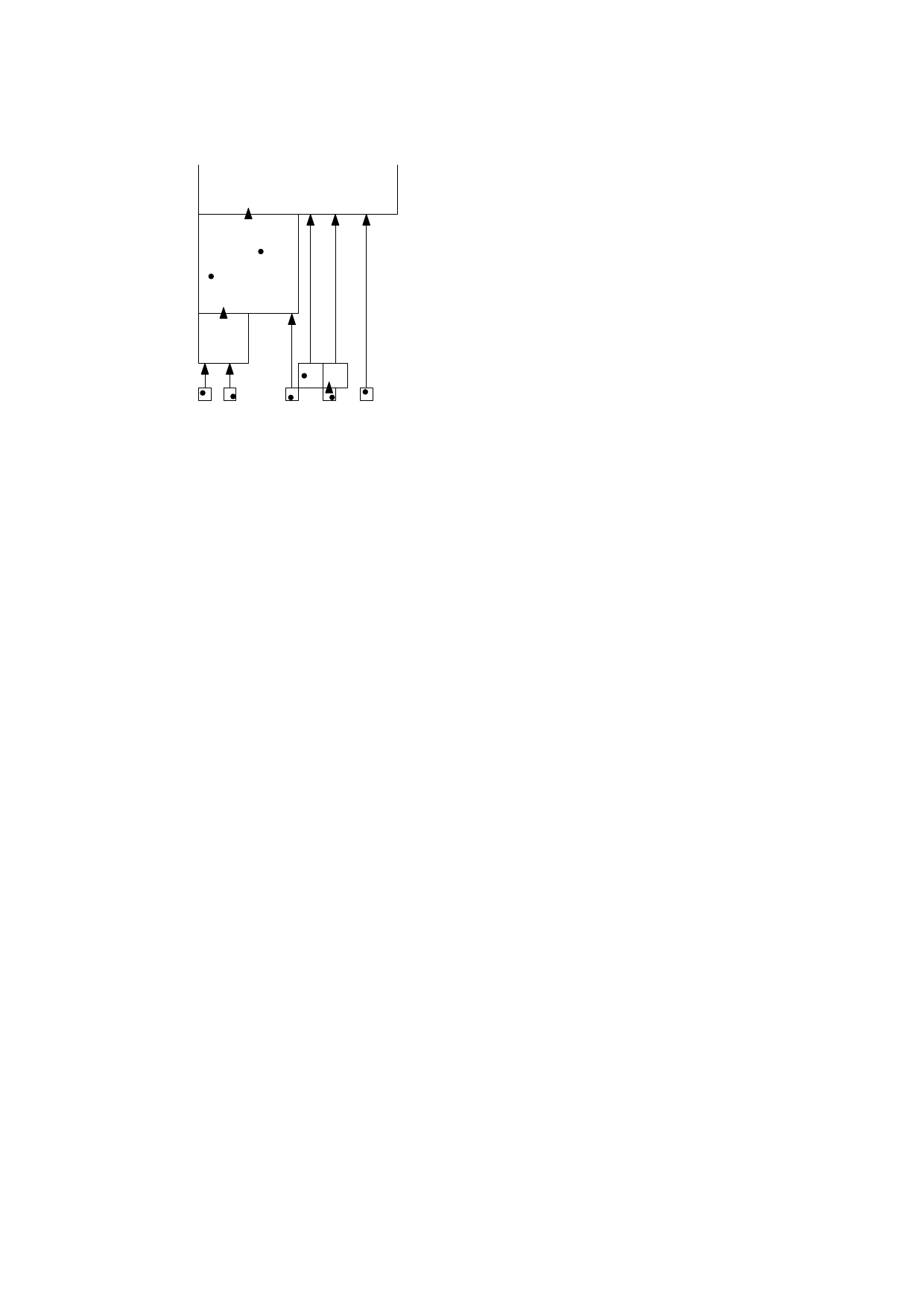}
    \caption{The tree $T^0(P)$ for the black points $P$.}
    \label{fig:tilingtree}
\end{figure}

\begin{lemma}\label{lem:dynamic_T(P)}
    We can maintain $T^\ell(P)$ such that updating it after a point insertion/deletion in $P$ takes $\bigO{d \log n}$ time.
\end{lemma}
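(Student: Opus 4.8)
The plan is to treat $T^\ell(P)$ as a compressed quadtree and reuse the standard dynamic-compressed-quadtree toolkit. As remarked right after its definition, $T^\ell(P)$ is structurally a $(d-1)$-dimensional \emph{compressed quadtree}: each cell of the arborescence $T^\ell$ has a unique parent (the cell directly above it), so $T^\ell$ itself is a tree; keeping only the cells that contain a point of $P$ or arise as $\lca(p,q,T^\ell)$ for some $p,q\in P$, and hanging each point of $P$ below its containing cell, compresses this tree down to $\bigO{n}$ vertices (at most $n$ leaf cells, at most $n-1$ branch cells, and $n$ points). It then suffices to adapt the textbook dynamic compressed quadtree (e.g.\ \cite{GeomAppAlg}), which uses only two geometric primitives — comparing two points in a fixed traversal order, and computing the smallest cell containing two given points — and performs $\bigO{\log n}$ of them per insertion or deletion, on top of $\bigO{\log n}$ pointer and balanced-search-tree operations.

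Both primitives cost $\bigO{d}$ time for $T^\ell$. Comparing two points in the traversal (L-)order is done exactly as in \Cref{lem:quadtree_operations}(3b), equivalently \Cref{lem:hyperquadtree}\ref{lem:Lorder}; and computing $\lca(p,q,T^\ell)$, i.e.\ the smallest $T^\ell$-cell containing both points, is done as in \Cref{lem:quadtree_operations}(3a): apply $\tilde\pi_x$ and $\tilde\pi_z$ to reduce to one lowest-common-ancestor query in a $(d-1)$-dimensional Euclidean quadtree together with one query in a one-dimensional quadtree, the only change being that the search is truncated at level $\ell$. I would store $P$ in a balanced binary search tree keyed by this traversal order, so that inserting or deleting the affected point and locating its order-predecessor $p^-$ and order-successor $p^+$ together take $\bigO{d\log n}$ time.

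The update itself is the usual compressed-quadtree update. To insert $p$: compute $C_1=\lca(p,p^-,T^\ell)$ and $C_2=\lca(p,p^+,T^\ell)$; one of them is an ancestor of the other, and the deeper one, call it $C^\ast$, is the unique branch cell that must become a vertex. I then create the leaf cell $\cell(p,T^\ell)$ if necessary, create $C^\ast$ if it is not already a vertex — in which case $C^\ast$ subdivides the unique tree-edge of $T^\ell(P)$ that $p^-$ or $p^+$ (whichever realises $C^\ast$) lies below, so it is located in $\bigO{1}$ pointer hops from that point's leaf — reroute the $\bigO{1}$ affected parent/child pointers, and finally attach $p$ below $\cell(p,T^\ell)$. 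Deletion is the reverse: remove $p$, remove $\cell(p,T^\ell)$ if it now holds no point, and contract any branch cell whose degree drops to $2$. Each such step touches $\bigO{1}$ vertices and uses $\bigO{1}$ of the $\bigO{d}$-time primitives, so the whole update is dominated by the $\bigO{d\log n}$ spent on the search tree.

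The hard part is essentially bookkeeping; the one genuine point to verify is that the $\bigO{d}$-time primitives really are available for the arborescence $T^\ell$ and not merely for the full hyperbolic quadtree $\cQ^d_\infty$ of \Cref{lem:hyperquadtree} — which is exactly why reducing $\lca(\cdot,\cdot,T^\ell)$ to $(d-1)$- and $1$-dimensional Euclidean quadtree queries via $\tilde\pi_x,\tilde\pi_z$ is the right move, since none of the model-specific subtleties (the sub-level-$0$ Euclidean splits, the $z=\zd(C)\,3^{h(C)/2}$ threshold of \Cref{lem:hyperquadtree}\ref{lem:Hstarshaped}) enter at or above level $\ell$. The remaining thing to be careful about, correctly splicing a newly created branch cell into the existing tree, is handled exactly as in the Euclidean compressed quadtree because the order-neighbours $p^-$ and $p^+$ pin down the edge to be subdivided.
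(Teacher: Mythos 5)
Your high-level plan (reduce to Euclidean-style compressed-quadtree maintenance with $\bigO{d}$-time order comparisons) is in the spirit of the paper, but as written there are genuine gaps. First, the primitive you invoke is the wrong one: $\lca(p,q,T^\ell)$ is the lowest common ancestor in the \emph{arborescence} $T^\ell$, i.e.\ a tile of the fixed tiling lying \emph{above} both points; since the tiles of $T^\ell$ are pairwise disjoint, it is not ``the smallest $T^\ell$-cell containing both points'', and it is not what \Cref{lem:quadtree_operations}(3a) computes (that returns the smallest \emph{hyperbolic quadtree} cell containing both, a different object at a different level). A correct $\bigO{d}$-time computation does exist (take the maximum of the two $z$-indices and the scale at which the $x$-grid cells of $\tilde\pi_x(p),\tilde\pi_x(q)$ merge), but your proof never establishes it. Second, and more seriously, your claim that the deeper of $\lca(p,p^-,T^\ell)$ and $\lca(p,p^+,T^\ell)$ is \emph{the} branch cell that must be added is imported from the Euclidean compressed quadtree without justification. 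That Euclidean argument uses that the sort order is a DFS order of the \emph{same} tree whose lcas are being maintained; here the L-order is a DFS of the hyperbolic quadtree, while the lcas live in $T^\ell$, whose subtrees are \emph{not} contiguous intervals in the L-order (in the binary tiling, the points below a tile are interleaved in the L-order with points of its sibling and of the tile above). So the statement ``the deepest $\lca(p,q,T^\ell)$ over $q\in P$ is attained at an L-order neighbour'' is exactly the place where the hyperbolic structure deviates from the Euclidean one; it may be salvageable for the particular child order underlying the L-order, but it needs a proof, and without it the update can miss a required branch vertex. Third, the splicing claim that the edge to subdivide is reachable ``in $\bigO{1}$ pointer hops from that point's leaf'' is false even for Euclidean compressed quadtrees: arbitrarily many existing branch vertices can lie between $\cell(p^\pm,T^\ell)$ and the new cell $C^\ast$ (and between $C^\ast$ and the root), so one needs an ordered dictionary over the stored \emph{cells}, not just over the points.

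For comparison, the paper sidesteps all three issues by not maintaining $T^\ell(P)$ exactly: it keeps a linear quadtree on the projected point set $\pi_x(P)\subset\Reals^{d-1}$ --- a genuine $(d-1)$-dimensional Euclidean structure, so the standard machinery applies verbatim --- together with the cells $\{\cell(p,T^\ell)\mid p\in P\}$. This stores a supergraph of $T^\ell(P)$ with $\bigO{n}$ cells, which is all the later constructions need, and updates cost $\bigO{d\log n}$. If you want to keep your ``maintain $T^\ell(P)$ exactly via order-neighbours'' route, you must (i) state and prove the correct $\bigO{d}$-time formula for the arborescence lca, (ii) prove the neighbour-realises-the-deepest-lca property for the specific L-order, and (iii) add a search structure over the cells for the splice step.
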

\begin{proof}
    We store a linear quadtree \cite{Gargantini82, GeomAppAlg} for the point set $\pi_x(P) \subset \Reals^{d-1}$, which stores exactly the cells that are the lowest common ancestor in the $(d-1)$-dimensional Euclidean quadtree for some pair $p, q \in \pi_x(P)$.
    We also add the cells $\{ \cell(p,T^\ell) \mid p \in P \}$ to this structure.
    This gives a supergraph of $T^\ell(P)$ which still includes only $\bigO{n}$ cells.
\end{proof}

Besides $T^\ell(P)$, we will also need \Cref{lem:lambert}, which essentially says that a segment $pq$ should stick close to the corresponding path in $T^\ell$.
For small $\ell$ the result is perhaps not as surprising, but for larger $\ell$ the distance shrinks exponentially, which is a much stronger result and will be the basis of the near-optimal Steiner spanner in \Cref{sec:strongspanner}.

\begin{lemma}\label{lem:lambert}
    Let $C$ be a cell of $T^{1 + \log\log(1/\eps)}$, let $p$ be a point that lies in $C$ or one of its descendant cells, and let $q$ be a point that lies in an ancestor cell of the parent $C_1$ of $C$.
    Then, $pq$ passes within distance $\eps$ of $s_C = \left(\frac{\x(C) + \xm(C)}{2},\, \sqrt{\zd(C_1) \zu(C_1)}\right)$.
\end{lemma}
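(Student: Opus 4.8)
The plan is to work in the half-space model and use the coordinates $\tilde\pi_x,\tilde\pi_z$ that turn the binary tiling into a Euclidean quadtree. Write $\ell = 1 + \log\log(1/\eps)$, so a cell of $T^\ell$ has height $h = 2^\ell = 2\log(1/\eps)$ in $\tilde\pi_z$-coordinates (i.e.\ a multiplicative factor $1/\eps^2$ in the actual $z$-coordinate), and its $x$-extent is a hypercube of the corresponding width. The point $s_C$ sits at the horizontal centre of $C$ and at the $z$-height $\sqrt{\zd(C_1)\zu(C_1)}$, which is the geometric mean of the two $z$-boundaries of the parent $C_1$ — equivalently, the horizontal midline of $C_1$, one level above $C$. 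The first step is to reduce to a planar statement: the geodesic $pq$ lies in the (hyperbolic) plane spanned by the vertical direction and the direction $x(q)-x(p)$ through the lowest point, and $s_C$ projects onto a point whose horizontal displacement from $x(p)$ is at most the diameter of $\pi_x(C)$, which is tiny (an $\eps^{O(1)}$-fraction) compared to the horizontal span of $q$ above $C_1$. So after an isometry I may assume everything lives in $\Hyp^2$, with $p$ below or inside $C$, and $q$ horizontally far to one side and at least one level of $T^\ell$ higher.

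The second step is the geometric heart: $p$ lies inside or below $C$, hence below the top of $C$, whose $z$-coordinate is a factor $\eps^2$ below $z(s_C)$ roughly speaking (more precisely, $\zu(C) = \zd(C_1)$ and $z(s_C)=\sqrt{\zd(C_1)\zu(C_1)}$, so $z(s_C)/\zu(C) = \sqrt{\zu(C_1)/\zd(C_1)} = 2^{h/2} = 1/\eps$ since $h(C_1)=h$ — wait, $C_1$ is the parent so $h(C_1) = h$, giving the factor $2^{h/2} = 2^{\log(1/\eps)} = 1/\eps$). Meanwhile $q$ lies in an ancestor of $C_1$, so $z(q) \geq \zu(C_1) = \zd(C_1)\cdot 2^{h} = \zu(C)/\eps^2$, and $q$ is horizontally separated from $C$. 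Thus the Euclidean semicircle modelling $pq$ has its highest point at $z$-height at least, say, $\tfrac12\|x(p)-x(q)\|$, which is enormous compared to $z(s_C)$; consequently near the horizontal location of $s_C$ the geodesic $pq$ is essentially a vertical Euclidean line passing through $x$-coordinate $\approx x(p)$, and its hyperbolic distance to $s_C$ is governed by $\big|\ln\frac{z(\text{geodesic point})}{z(s_C)}\big|$ together with the small horizontal offset. I would make this quantitative by a direct estimate: parametrise $pq$ as a Euclidean circle of radius $r$ centred at $(x_0,0)$ on the boundary; at $x$-coordinate equal to $x(s_C)$ the circle has height $z_* = \sqrt{r^2 - (x(s_C)-x_0)^2}$, and the hyperbolic distance from $(x(s_C),z_*)$ to $s_C$ is $2\arsinh\sqrt{\tfrac{(z_* - z(s_C))^2}{4 z_* z(s_C)}} = \big|\ln \tfrac{z_*}{z(s_C)}\big|$. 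Since $z_* \geq c\|x(p)-x(q)\| \geq c\,\zu(C_1) \geq z(s_C)/\eps$, this is at least $\ln(1/\eps)$ in the wrong direction — so instead I pick the point of $pq$ lying directly above $p$ on the same vertical, where $z$ ranges continuously from $z(p) \le \zu(C) = \eps\, z(s_C)$ all the way up past $z(q) \ge z(s_C)/\eps^2$; by the intermediate value theorem some point $m$ of $pq$ has $z(m) = z(s_C)$, and for that point the distance to $s_C$ is just the horizontal discrepancy, bounded by the hyperbolic diameter of $\pi_x(C)$ at height $z(s_C)$.

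The third step is to control that horizontal discrepancy. The horizontal width of $C$ in the model, divided by $\zd(C)$, is $w(C) = 2^{2^\ell - 1}/\sqrt{d-1}$ — which is large in the model but that is because $\zd(C)$ is correspondingly tiny; the point is that at the height $z(s_C) = \zu(C)/\eps$ the same horizontal segment subtends hyperbolic length $\|x(C)-x_{\max}(C)\|/z(s_C) = w(C)\,\zd(C)/z(s_C) = w(C)\, 2^{-h}\cdot\eps$. Since $2^h = (1/\eps)^2$ and $w(C) = \tfrac12 (1/\eps)^2/\sqrt{d-1}$, this is $\tfrac12\eps/\sqrt{d-1} < \eps$, and $m$ lies horizontally within $C$ while $s_C$ is its horizontal centre, so $|ms_C|$ is at most half of that, well under $\eps$. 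I should double-check the small gap between $z(m)$ lying exactly on the geodesic versus exactly at height $z(s_C)$ — but those coincide by construction of $m$ — and I should verify that the geodesic really does pass over the horizontal extent of $C$ rather than veering off; this holds because $p$ is inside or below $C$ (so its $x$-coordinate is in $\pi_x(C)$) and $q$ is in a strict ancestor on the other side, so the Euclidean semicircle, being monotone in $z$ up from $p$'s side until its apex which lies outside $C$ horizontally, covers all $z$-values in $[z(p), z(s_C)]$ at $x$-coordinates arbitrarily close to $x(p) \in \pi_x(C)$.

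\textbf{Main obstacle.} The routine part is the distance estimate once the configuration is pinned down; the only genuinely delicate point is justifying that the intermediate-value point $m$ on $pq$ can be taken with $x$-coordinate close enough to the horizontal centre of $C$ — i.e.\ that the geodesic does not "drift" horizontally by more than $O(\eps)$ over the $z$-range in question. This needs the observation that along a Euclidean semicircle whose apex is at height $\gg z(s_C)$, the portion below height $z(s_C)$ is nearly vertical: its horizontal extent there is $O(z(s_C)^2 / r)$, which is $O(\eps^2 \cdot \text{stuff})$ and hence negligible. Making that "nearly vertical" claim precise (a Taylor estimate on $\sqrt{r^2 - u^2}$ near $u = \pm r$, i.e.\ near the bottom of the circle) is the one calculation I would actually write out carefully; everything else follows from bookkeeping of the $2^{2^\ell}$ scale factors.
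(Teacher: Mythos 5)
Your overall skeleton is the same as the paper's: normalise so that $\zu(C)=1$, take the point of $pq$ at height $z(s_C)=1/\eps$ (your $m$, the paper's $a$), bound its horizontal displacement in the model, and convert to a hyperbolic distance by dividing by the height. The gap is exactly in the step you flagged as the delicate one, and your proposed resolution of it is wrong. You claim the geodesic is ``nearly vertical'' below height $z(s_C)$, with horizontal drift $O\bigl(z(s_C)^2/r\bigr)=O(\eps^2\cdot\text{stuff})$, so that $m$ stays horizontally inside $\pi_x(C)$ and $|ms_C|$ is at most half the width subtended by $C$ at that height. This presumes $r\gg 1/\eps^2$, which you justify by picturing $q$ as ``horizontally far to one side'' so that the apex height $\approx\tfrac12\|x(p)-x(q)\|$ is enormous. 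But the only constraints are $z(q)\geq\zu(C_1)=1/\eps^2$ and that $q$ lies in an ancestor cell, whose horizontal extent at height $z(q)$ is $\Theta\bigl(z(q)/\eps^2\bigr)$ per coordinate; nothing forces $\|x(p)-x(q)\|$ to be large or small. The worst case is not $q$ far away but $q$ sitting near the apex of the circular arc: take $z(p)$ tiny, $z(q)=1/\eps^2$ and $\|x(q)-x(p)\|\approx z(q)$ (easily inside an ancestor's footprint for small $\eps$). Then $r\approx z(q)=1/\eps^2$, and the displacement of $m$ from $x(p)$ is $r-\sqrt{r^2-1/\eps^2}=\frac{1-\sqrt{1-\eps^2}}{\eps^2}\to\frac12$ as $\eps\to0$. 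This is a \emph{constant} in model units, larger than the horizontal half-extent $\tfrac14$ of $C$ (which is $\tfrac12$ across the diagonal, independent of $\eps$ in this normalisation), so $m$ can lie well outside $\pi_x(C)$ and your claimed bound $|ms_C|\leq\tfrac{\eps}{4\sqrt{d-1}}$ fails; the Taylor estimate you planned to ``write out carefully'' would not rescue this, because its premise ($r$ much larger than $1/\eps^2$) is false in the binding configuration.

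The lemma itself survives because the budget is more generous than you use: at height $1/\eps$, any horizontal separation below $1$ model unit gives hyperbolic distance below $\eps$. The paper's proof does exactly this bookkeeping: it bounds the displacement of $a$ from $x(p)$ by $\tfrac34$ via an explicit analysis of the extremal circle (using $z(q)\geq1/\eps^2$ and the per-coordinate bound $|x_j(q)|\leq\frac{z(q)}{2\eps^2\sqrt{d-1}}$, leading to $r\geq1+\frac{1}{4\eps^4}$ and $\|x(a)\|\leq r-\sqrt{r^2-2\sqrt{r-1}}<\tfrac34$), adds $\|x(s_C)\|\leq\tfrac14$ for the offset of $x(p)$ from the cell centre, and concludes $|as_C|<\eps$. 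So to repair your write-up you must replace ``drift is negligible, hence $m$ stays in $C$'' by a genuine worst-case bound of the form ``displacement plus half-width $<1$ model unit'', which requires identifying and handling the small-$r$ (apex) configuration rather than the far-$q$ one.
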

\begin{proof}
    Cells $C'$ of $T^{1 + \log\log(1/\eps)}$ have $w(C') = \frac{1}{2 \eps^2 \sqrt{d-1}}$ and $h(C') = 2\log\frac{1}{\eps}$, meaning $\zd(C') / \zu(C') = \eps^2$.
    Without loss of generality, assume that $x(p) = (0, \dots, 0)$ and $\zu(C) = 1$.
    Then, $z(p) \leq 1$ and $z(q) \geq 1/\eps^2$.
    Also, $q$ must lie in the same cell $C'$ of $T^{1 + \log\log(1/\eps)}$ as $((0, \dots, 0),\, z(q))$ and thus $|x_j(q)| \leq \frac{z(q)}{2 \eps^2 \sqrt{d-1}}$ for all horizontal axes $j \in \{1, \dots, d-1\}$.
    In the halfspace model, $pq$ is an arc from a Euclidean circle through $p$ and $q$ with its centre on $z=0$.
    Let $a$ be the point on $pq$ with $z(a) = z(s_C) = 1/\eps$.
    We will first find an upper bound on $\|x(a)\|$ to later also bound $|as_C|$.
    
    Note that decreasing $z(p)$ and increasing $\|x(q)\|$ both increase $\|x(a)\|$.
    Thus, we get an upper bound on $\|x(a)\|$ by setting $p$ to the origin and 
    \[q = ((\frac{z(q)}{2 \eps^2 \sqrt{d-1}}, \dots, \frac{z(q)}{2 \eps^2 \sqrt{d-1}}),\, z(q)).\]
    Let $S$ be the Euclidean circle that $pq$ is now an arc of and $r$ its radius.
    Then, the centre of $S$ is $c = ((\frac{r}{\sqrt{d-1}}, \dots, \frac{r}{\sqrt{d-1}}), 0)$.
    Now, $\|c - p\| = \|c - q\|$ and thus
    \begin{align*}
    r^2 &= (d-1)\left(\frac{r}{\sqrt{d-1}} - \frac{z(q)}{2 \eps^2 \sqrt{d-1}}\right)^2 + z(q)^2\\
    \left(\frac{r}{z(q)}\right)^2 &= \left(\frac{r}{z(q)} - \frac{1}{2\eps^2}\right)^2 + 1\\
    \frac{r}{z(q)} &= \eps^2 + \frac{1}{4\eps^2}.
    \end{align*}
    Note that $z(q) \geq 1/\eps^2$ implies $r \geq 1 + \frac{1}{4\eps^4}$ or equivalently $1/\eps^2 \leq 2\sqrt{r-1}$.

    Since $z(a) = 1/\eps$ is also the vertical Euclidean distance from $a$ to $c$, their horizontal Euclidean distance $\|x(c) - x(a)\| = \sqrt{r^2 - 1/\eps^2}$.
    Seeing as $x(a)$ and $x(c)$ lie in the same direction from the origin, we have 
    \begin{align*}
    \|x(a)\|
    &= \|x(c)\| - \|x(c) - x(a)\| \\
    &= r - \sqrt{r^2 - 1/\eps^2} \\
    &\leq r - \sqrt{r^2 - 2\sqrt{r-1}}.
    \end{align*}
    Using this inequality, we will prove that $\|x(a)\| < \frac{3}{4}$.
    This would hold if
    \begin{align*}
        r - \frac{3}{4} &< \sqrt{r^2 - 2\sqrt{r-1}} \\
        r^2 - \frac{3}{2}r + \frac{9}{16} &< r^2 - 2\sqrt{r-1} \\
        0  &< \frac{3}{2}(r-1) - 2\sqrt{r-1} + \frac{15}{16}.
    \end{align*}
    We can treat this as a quadratic function of $\sqrt{r-1}$.
    In particular, the coefficient of $(r-1)$ is positive and the discriminant is $-\frac{13}{8} < 0$, so the function is indeed always positive.
    This now proves that $\|x(a)\| < \frac{3}{4}$.
    
    Finally, $\|x(s_C)\| \leq w(C)\zd(C)\sqrt{d-1} / 2 = \frac{1}{4}$ and thus $|as_C| \leq \frac{\|x(a) - x(s_C)\|}{z(a)} < (\frac{3}{4} + \frac{1}{4}) \eps = \eps$, which proves the lemma.
\end{proof}

Now we have all we need for the $2$-spanner.
For small distances, it simply uses \Cref{thm:doublingspanner}, while for larger distances it uses a bipartite Steiner spanner with constant additive error and then applies the procedure of \Cref{thm:2spanner}.
The triangle inequality by itself is not enough to prove that this gives a $2$-spanner, but a more in-depth look at the hyperbolic distances is.

\twospanner*
\begin{proof}
    First, we use \Cref{thm:doublingspanner} to handle distances of up to $8$, the same way Theorem~\ref{thm:Hbipartite} uses Theorem~\ref{thm:doublingmain} for small distances.
    Namely, for each shift we partition the points $P$ based on level $4 + \log\log d$ of the shifted quadtree, then apply \Cref{thm:doublingspanner} with $\eps = \frac{1}{2}$.
    To handle distances larger than $2\arcosh 2$, we will first need the claim below.
    
    \begin{claim*}
        We can construct a bipartite Steiner spanner for $P$ with $d^{\bigO{d}} \cdot n \log n$ edges in $d^{\bigO{d}} \cdot n \log n$ time.
        For every pair $p,q \in P$ that are not in the same cell of $T^{4 + \log\log d}_i$ under any shift $i$, it connects both to a common Steiner point within distance $\frac{1}{4}$ of segment $pq$.
    \end{claim*}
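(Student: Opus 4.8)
The plan is to place one Steiner point per node of the trees $T^\ell_i(P)$ --- for $\ell := 4+\log\log d$ and the $\bigO d$ shifts~$i$ of \Cref{lem:lca} --- following \Cref{lem:lambert}, and to join input points to them through a centroid decomposition so the edge count stays $d^{\bigO d}\,n\log n$. Fix $\eps_0 := d^{-8}$, so that $1+\log\log(1/\eps_0)=\ell$; for a level-$\ell$ cell $c$, writing $\hat c$ for the cell of $T^\ell_i$ directly above $c$, \Cref{lem:lambert} says the point $s_c := \bigl(\tfrac{\x(c)+\xm(c)}{2},\ \sqrt{\zd(\hat c)\,\zu(\hat c)}\bigr)$ lies within $\eps_0\le\tfrac14$ of the geodesic $pq$ whenever $p$ lies in $c$ or a descendant of $c$ and $q$ lies in a cell strictly above $\hat c$. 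For the configurations where this fails to apply I would additionally attach to each node $c$ a hyperbolic $\tfrac14$-covering $\cov(c)$ of $c\cup\hat c$; since $c$ and $\hat c$ are level-$\ell$ cells, $c\cup\hat c$ has diameter $\bigO{2^\ell}=\bigO{\log d}$ by \Cref{lem:hyperquadtree}\ref{lem:Hcelldiam}, so $|\cov(c)| = d^{\bigO d}$.

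The graph is then: for each shift~$i$, compute a centroid decomposition of the rooted tree $T^\ell_i(P)$, yielding for every node $v$ a set $\mathrm{cd}(v)$ of $\bigO{\log n}$ centroid ancestors such that any two nodes $u,v$ have a common $c\in\mathrm{cd}(u)\cap\mathrm{cd}(v)$ on the $u$--$v$ path; then, for every input point $p$ and every $c\in\mathrm{cd}(\cell(p,T^\ell_i))$, add an edge from $p$ to $s_c$ and edges from $p$ to each $s\in\cov(c)$. Each input point gets $d^{\bigO d}\log n$ edges per shift, so the graph is bipartite with $d^{\bigO d}\,n\log n$ edges and Steiner points; building the trees via \Cref{lem:dynamic_T(P)}, their centroid decompositions, and reading off these edges takes $d^{\bigO d}\,n\log n$ time.

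To establish the property, take $p,q\in P$ that share no cell of $T^\ell_i$ in any shift. By \Cref{lem:lca} (combined with \Cref{lem:hyperquadtree}\ref{lem:Hstarshaped}, as in the proof of \Cref{lem:shift}) there is a shift~$i$ with $pq\subseteq C:=\lca(p,q,T^\ell_i)$ and apex $t$ of $pq$ in $C|_{[\frac13,1]}$. The $\cell(p,T^\ell_i)$--$\cell(q,T^\ell_i)$ path in $T^\ell_i(P)$ runs through $C$, so their common centroid $c\in\mathrm{cd}(\cell(p))\cap\mathrm{cd}(\cell(q))$ lies on it; relabelling if needed, $c$ is an ancestor of $\cell(p,T^\ell_i)$ with $c$ at or below $C$, hence $p$ lies in $c$ or a descendant and $q\in C$. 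If $\hat c$ is a strict descendant of $C$, then $C$ is a cell strictly above $\hat c$, so \Cref{lem:lambert} applies and puts $s_c$ within $\eps_0\le\tfrac14$ of $pq$. Otherwise $\hat c=C$ or $c=C$, so $C\subseteq c\cup\hat c$; then $t\in C\subseteq c\cup\hat c$ and $\cov(c)$ contains a point within $\tfrac14$ of $t$, hence of $pq$. In both cases the witnessing Steiner point is joined to both $p$ and $q$ because $c$ is their common centroid. (Since the witness $s$ has $\dist(s,pq)\le\tfrac14$, \Cref{lem:Hsplit2} gives $|ps|+|sq|\le|pq|+\tfrac1{16}$, so this is already a constant-additive bipartite Steiner spanner on these pairs; combined with the bipartite Steiner spanner of \Cref{thm:doublingmain} inside each level-$\ell$ cell it becomes one for all of $P$.)

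The hard part will be the case distinction above: showing that \Cref{lem:lambert} can only fail for the common centroid $c$ when $\hat c$ is not strictly below $C$ --- that is, $c$ is $C$ itself or the cell one uncompressed level below $C$ --- and that in exactly those cases the apex $t$ of $pq$ necessarily lies in $C\subseteq c\cup\hat c$. This rests on $pq$ staying inside $C$ and on $t$ being in the upper two-thirds of $C$, which is what \Cref{lem:lca} and \Cref{lem:hyperquadtree}\ref{lem:Hstarshaped} deliver. A minor point is that $\eps_0 = d^{-8}$ is chosen so that $1+\log\log(1/\eps_0)$ is exactly the working level $\ell=4+\log\log d$, so the points $s_c$ are defined on cells of the correct shape.
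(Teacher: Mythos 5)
Your construction is essentially the paper's: a centroid decomposition of the compressed trees $T_i^{4+\log\log d}(P)$ over the $\bigO{d}$ shifts, a $d^{\bigO{d}}$-size Steiner set per centroid (the paper uses a $\tfrac14$-covering of the boundary of the centroid cell and its horizontally adjacent tiles instead of your $s_c$ plus a covering of $c\cup\hat c$, which is immaterial), every point joined to the Steiner sets of its $\bigO{\log n}$ centroid ancestors, and correctness via \Cref{lem:lca} together with \Cref{lem:lambert} at $\eps = d^{-8}$ — exactly the paper's choice. One slip to fix: the vertices of $T^\ell_i$ are \emph{disjoint} level-$\ell$ tiles, not nested quadtree cells, so ``$pq\subseteq C$'' and ``$q\in C$'' are false in general ($q$ lies in a descendant tile of $C$, and \Cref{lem:lca} only guarantees that $pq$ \emph{intersects} $C|_{[\frac13,1]}$). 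Consequently, in your Case A you cannot invoke \Cref{lem:lambert} with the pair $(p,q)$; instead apply it to $p$ and the point $t\in pq\cap C$ supplied by \Cref{lem:lca}, whose tile $C$ is a proper ancestor of $\hat c$ in that case, which gives the same conclusion for the sub-segment $pt\subseteq pq$. With that one-line repair (and noting that the required ingredient $t$ is already in your argument, as you use it in Case B), the proof goes through as in the paper.
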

    \begin{claimproof}
        We do the construction for every shift $i$ given by \Cref{lem:lca} and there we first consider the tree $T^{4 + \log\log d}_i(P)$.
        The construction is recursive.
        Given a tree $T$, we first find its \emph{centroid}, which is the node $v$ such that all connected components of $T \setminus\{v\}$ contain at most $n/2$ nodes.
        If $T$ has $n$ nodes, its centroid can be found greedily in $\bigO{n}$ time.
        Now, add a set of Steiner points for the cell $v$ and connect them to every point $p \in P$ that is still in $T$.
        In particular, we place these Steiner points as a $\frac{1}{4}$-covering on the boundaries of cell $v$ and the vertex-adjacent cells with the same $z$-coordinates.
        We then recurse on the connected components of $T \setminus\{v\}$.

        Since the number of nodes halves in each level of the recursion, this goes to depth at most $\log n$.
        In each level of the recursion, we spend a total of $\bigO{n}$ time on finding centroids and connect each point of~$P$ to $d^{\bigO{d}}$ Steiner points, since those Steiner points cover at most $3^{d-1}$ small-boundary cells each.
        Thus, we get the promised running time and edge count of $d^{\bigO{d}} \cdot n \log n$.
        
        Finally, given a pair $p,q \in P$ that are not in the same cell of $T^{4 + \log\log d}_i$ under any shift $i$, we need to prove that both are connected to a Steiner point $s$ within distance $\frac{1}{4}$ of segment $pq$.
        Note that by \Cref{lem:lca} there is a shift $i$ such that $pq$ goes through $C_\text{top} = \lca(p,q,T^{4 + \log\log d}_i(P))$.
        Additionally, the construction ensures that $p$ and $q$ are connected to Steiner points for some cell $C$ that separates them in $\lca(p,q,T^{4 + \log\log d}_i(P))$.
        If $C = C_\text{top}$, $p \in C$ or $q \in C$ then we are done.
        Otherwise, we know that $pq$ goes through some descendant and some ancestor of $C$, so we can use \Cref{lem:lambert} with $\eps = 1/d^{8}$, such that $1 + \log\log\frac{1}{\eps} = 4 + \log\log d$.
        Now, $pq$ must pass within distance $\eps$ of $C$, so in particular it must go through $C$ or one of the vertex-adjacent cells with the same $z$-coordinates.
        Thus, $pq$ always goes through a cell whose boundary has a $\frac{1}{4}$-covering of Steiner points that both $p$ and $q$ are connected to, which proves the claim.
    \end{claimproof}
    
    \begin{figure}
        \centering
        \includegraphics{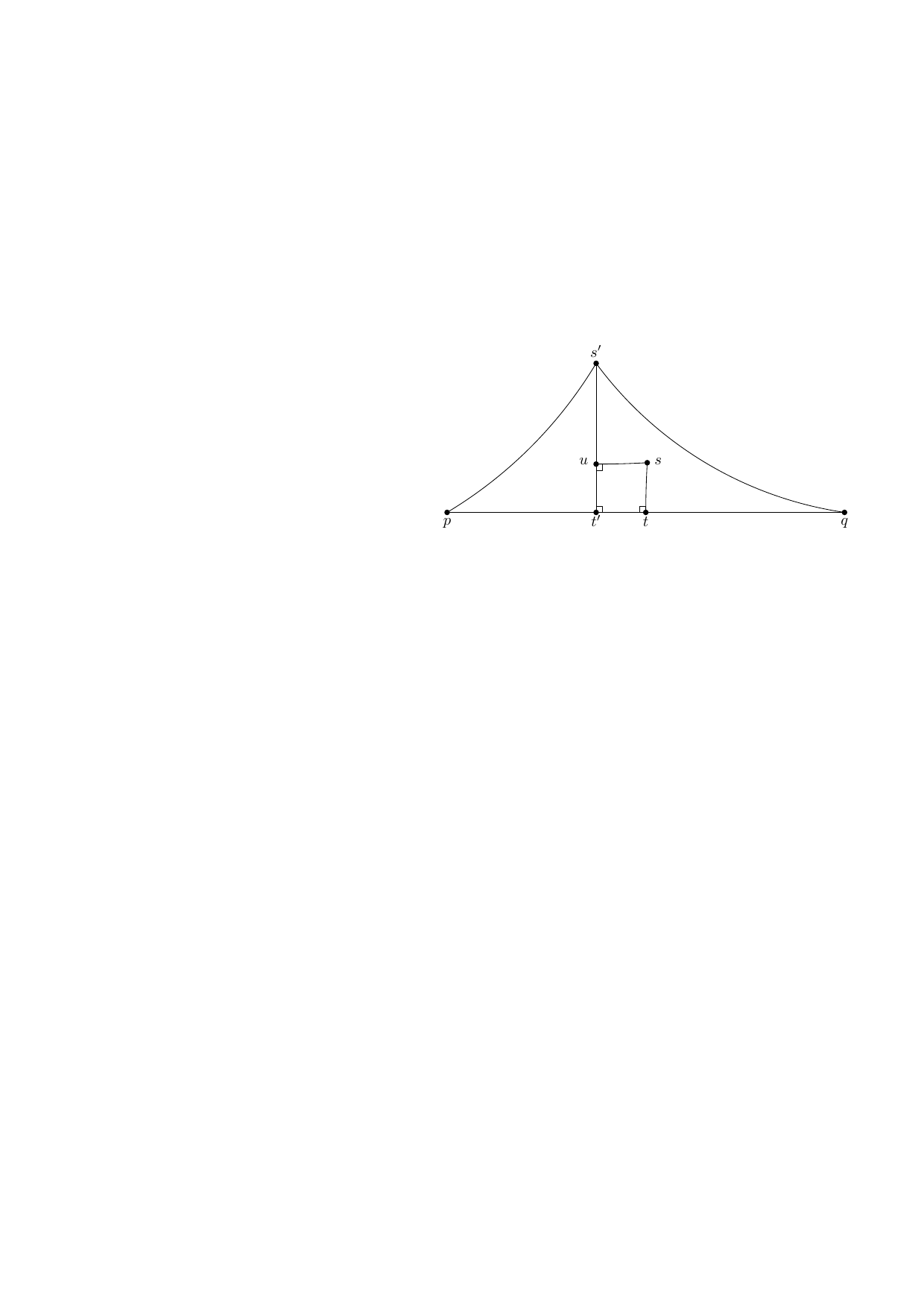}
        \caption{The $2$-spanner uses the path $p,s',q$ instead of the Steiner path $p,s,q$. Since $s$ is within distance $1/4$ to the line $pq$ and $|s's|\leq \min\{|sp|,|sq|\}$, we can show that $|ps'|+|s'q|\leq 2|pq|$.}
        \label{fig:2spanner}
    \end{figure}
    We now turn this Steiner spanner into a spanner using the procedure of \Cref{thm:2spanner}.
    To prove that this gives a $2$-spanner, take two points $p,q \in P$ whose shortest path in the Steiner spanner went through Steiner point $s$.
    In the spanner, $s$ is replaced by point $s' \in P$ where $|ss'| \leq \min\{|sp|, |sq|\}$.
    We will now prove that $|ps'| + |qs'| \leq 2|pq|$.
    To simplify the setting, we first use a (hyperbolic) rotation around $pq$ to put $s$ in the plane of $pqs'$; this might decrease the distance $|ss'|$ but leaves all other distances untouched.
    Let $t$ be the closest point to $s$ on $pq$ and $t'$ the closest point to $s'$ on the line through $p$ and $q$, then finally let $u$ be the closest point to $s$ on $s't'$.
    From the claim, $|ts| \leq \frac{1}{4}$.
    See \Cref{fig:2spanner} for an illustration.
    
    First assume that $|pt'| < 2$; we will prove that $|t's'| < 4$.
    Consider the quadrilateral $t'tsu$, which has right angles at $t'$, $t$ and $u$, making it a Lambert quadrilateral.
    This gives the formula $\sinh|ts| = \sinh|ut'| \cosh|us|$, which implies $|ut'| \leq |ts| \leq \frac{1}{4}$.
    From the triangle inequality we get $|us| \geq |tt'| - |t'u| - |st| \geq |tt'| - \frac{1}{2}$, as well as $|tt'| \geq |pt| - |pt'| > |pt| - 2$ and $|pt| \geq |ps| - |ts| \geq |ps| - \frac{1}{4}$.
    Putting these three together shows that $|us| > |ps| - \frac{11}{4}$.
    If $|ps| \leq \frac{11}{4}$ (making this bound meaningless), then already $|t's'| \leq |t's| + |ss'| \leq \frac{1}{4} + \frac{11}{4} < 4$ so we are done.
    Otherwise, consider the right-angled triangle $uss'$.
    The hyperbolic analogue to the Pythagorean theorem gives
    \begin{align*}
        \cosh|us'| &= \cosh|ss'|\ / \cosh|us'| \\
         &< \cosh|ps|\ / \cosh\left(|ps| - \frac{11}{4} \right).
    \intertext{Applying the identity $\cosh(x+y) = \cosh x \cosh y + \sinh x \sinh y$ now yields}
        \cosh|us'| &<
        \cosh|ps|\ / \left(\cosh|ps| \cdot \cosh\frac{11}{4} - \sinh|ps| \cdot \sinh\frac{11}{4} \right) \\
        &\leq 1 / \left(\cosh\frac{11}{4} - \sinh\frac{11}{4} \right) \\
        &= e^{11/4}.
    \end{align*}
    From here, $|us| < 3.5$, which gives $|t's'| = |t'u| + |us'| < 4$.
    Thus, $|ps'| + |qs'| \leq |pq| + 2|t's'| < |pq| + 8$.
    Since $|pq| \geq 8$, this proves that $|ps'| + |qs'| \leq 2|pq|$.
    Note that the same proof works when $|qt'| < 2$.

    If $|pt'| \geq 2$ and $|qt'| \geq 2$, then we use that $|ss'| \leq \frac{1}{4} + \frac{1}{2}|pq|$ to now bound $|t's'| \leq |t's| + |ss'| \leq \frac{1}{2} + \frac{1}{2}|pq|$.
    This lets us calculate the total length with again the hyperbolic Pythagorean theorem:
    \begin{align*}
        |ps'| + |qs'| &=
        \arcosh\left( \cosh|pt'| \cdot \cosh|t's'| \right) +
        \arcosh\left( \cosh|qt'| \cdot \cosh|t's'| \right) \\
        &\leq
        \arcosh\left( \cosh|pt'| \cdot \cosh\left(\frac{|pq|+1}{2} \right)\right) +
        \arcosh\left( \cosh|qt'| \cdot \cosh\left(\frac{|pq|+1}{2} \right)\right).
    \end{align*}
    Using the fact that $\arcosh x < \ln 2x$ for $x \geq 1$, we can upper bound this by
    \[
        \ln\left( 4 \cosh|pt'| \cdot \cosh|qt'| \cdot \cosh^2\left(\frac{|pq|+1}{2} \right)\right).
    \]
    Since each $\cosh$ has argument at least $2$, we can now use the bound $\cosh x \leq \cosh2 \cdot e^{x-2}$ for $x \geq 2$, which further upper bounds this as
    \begin{align*}
        \ln\left( 4 \cosh^4 1 \cdot e^{2|pq| - 7} \right).
    \end{align*}
    Note here that $4 \cosh^4 1 \cdot e^{-7} < 1$ and therefore $|ps'| + |qs'| < \ln(e^{2|pq|}) = 2|pq|$.
    Thus, the path from $p$ to $q$ through $s'$ is always at most twice as long as $pq$, which proves that this is a $2$-spanner.
\end{proof}

\subsection{Near-optimal Steiner spanner}\label{sec:strongspanner}
As shown by Lemma~\ref{lem:bipartitelowerbound}, the only way to improve our Steiner spanner further is if it is no longer bipartite, i.e.\ we must connect Steiner points to Steiner points.
However, with our current constructions these Steiner points occur in large groups and connecting all pairs among two such groups is not viable.
To deal with this gracefully, we will use a different method of placing Steiner points based on \Cref{lem:lambert}.

\paragraph{Construction.}
We first combine lemmas \ref{lem:lca} and \ref{lem:lambert} to get the next Steiner spanner $G_3(P,\eps)$, which has a very simple construction as the union of $\bigO{d}$ trees but only works for distances $\bigOm{\frac{1}{\eps} \log\frac{1}{\eps}}$.
Each of these trees comes from a shift $i$ given by Lemma~\ref{lem:lca} and is closely related to $T^\text{base}_i := T_i^{1 + \log\log(1/\eps)}(P)$:
for every cell $C$ in $T^\text{base}_i$, we place a Steiner point $s_C$
as defined in Lemma~\ref{lem:lambert},
then we add edges between these Steiner points as in $T^\text{base}_i$.
We also connect every point in $P$ to the Steiner point corresponding to the cell of $T^\text{base}_i$ it lies in.

\begin{lemma}\label{lem:verylargedist}
    Let $\eps \in (0, \frac{1}{2}]$ and $P$ be a set of points in $\Hyp^d$.
    There is a set of $\bigO{d}$ trees with $P$ as the leaves whose union is a Steiner spanner for $P$ with multiplicative distortion $1 + \eps$ and additive distortion $\bigO{\log\frac{1}{\eps}}$.
    In particular, if all points in $P$ have distance at least $\frac{1}{\eps}\log\frac{1}{\eps}$ to each other, this is a Steiner $(1 + \bigO{\eps})$-spanner.
\end{lemma}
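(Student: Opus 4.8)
The plan is to work inside a single tree of $G_3(P,\eps)$, namely the one coming from the shift $i$ produced by \Cref{lem:lca}, and bound the length of the unique $p$–$q$ path there. Non-contraction is free: each of the $\bigO{d}$ trees is a geometric graph in $\Hyp^d$, so every path in it has length at least the geodesic distance of its endpoints, hence so does the shortest path in the union. So the whole argument is about the stretch. Fix $p,q\in P$ and set $\ell=1+\log\log(1/\eps)$; we may assume $\eps$ is small enough relative to $d$ that $\ell\ge 4+\log\log d$ (otherwise replace $\eps$ by a suitable constant threshold, or fall back on \Cref{thm:Hbipartite}). By \Cref{lem:lca} there is a shift $i$ for which $pq$ meets $C^*:=\lca(p,q,T_i^\ell)$, and in fact the highest point $t$ of $pq$ lies in $C^*$. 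In tree $i$ the $p$–$q$ path has the form $p\to s_{C_p}\to s_{A_1}\to\dots\to s_{A_k}=s_{C^*}\to\dots\to s_{C_q}\to q$, where $C_p=A_0\subsetneq A_1\subsetneq\dots\subsetneq A_k=C^*$ are the cells on the compressed path on $p$'s side and symmetrically on $q$'s side. I will bound the ``up half'' $p\to s_{C^*}$ by $(1+\bigO{\eps})|pt|+\bigO{\log\tfrac1\eps}$; the ``down half'' is the same argument applied to $q$ with the same $t$, and since $t\in pq$ we have $|pt|+|qt|=|pq|$, so adding the halves gives $\dist_{G_3}(p,q)\le(1+\bigO\eps)|pq|+\bigO{\log\tfrac1\eps}$. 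Rerunning with $\eps/c$ for a constant $c$ makes the multiplicative factor exactly $1+\eps$ (and leaves the additive term $\bigO{\log\tfrac1\eps}$), and when all pairwise distances are at least $\tfrac1\eps\log\tfrac1\eps$ the additive term is absorbed into the multiplicative one, yielding the stated $(1+\bigO\eps)$-spanner. (If $p$ and $q$ happen to share a cell of $T^\text{base}_i$ the two-hop path already has length $\bigO{\log\tfrac1\eps}$, so this case is covered by the additive term.)

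For the up half, the point is to apply \Cref{lem:lambert} not to the pair $(p,q)$ — both points lie \emph{below} $C^*$, so neither can play the ``high'' role — but to $p$ together with a point of $pq\cap C^*$. Indeed, for every $A_j$ on the $p$-side path that lies at least two $T^\ell$-levels below $C^*$, its $T^\ell$-parent is a proper descendant of $C^*$, hence has top at most $\zd(C^*)$, which is at most the height of any point of $pq\cap C^*$; so \Cref{lem:lambert} applies and $pq$ passes within $\eps$ of $s_{A_j}$. Since the $A_j$ are nested, at most one of them lies exactly one level below $C^*$, so all but the last one or two hops are ``good'' in this sense. Let $y_j\in pq$ be the point closest to $s_{A_j}$; the heights $z(s_{A_j})$ are strictly increasing (consecutive ones by a factor at least $\eps^{-2}$), so the $y_j$ are monotone along $pq$ towards $t$, and telescoping gives that the good part of the path has length at most $|p\,y_{\mathrm{last}}|+2\cdot(\#\text{good hops})\cdot\eps$. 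The remaining one or two hops reach $s_{C^*}$, which sits in the middle of the $T^\ell$-parent of $C^*$: a factor between $\eps^{-1}$ and $\eps^{-\bigO{1}}$ above $t$ but with $x$-coordinate inside $C^*$, so $|t\,s_{C^*}|=\bigO{\log\tfrac1\eps}$; a couple of triangle inequalities then bound this top part by $|y_{\mathrm{last}}\,t|+\bigO{\log\tfrac1\eps}$. Using $|p\,y_{\mathrm{last}}|+|y_{\mathrm{last}}\,t|=|pt|$, the up half has length at most $|pt|+2(\#\text{good hops}+\bigO1)\eps+\bigO{\log\tfrac1\eps}$.

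The main obstacle is exactly that $\#\text{good hops}$ can be as large as $\Theta(n)$ in a compressed quadtree, so an additive $\eps$ per hop is unaffordable. This is resolved by a scale argument: consecutive Steiner points on the path (and $p$ and $s_{C_p}$) lie at heights differing by a factor at least $\eps^{-1}$, so every hop has length at least $\ln\tfrac1\eps$; therefore $\#\text{good hops}\le(\text{up-half length})/\ln\tfrac1\eps$, and the total slack is at most $\tfrac{2\eps}{\ln(1/\eps)}$ times the up-half length — a genuinely multiplicative $(1+\bigO\eps)$ error rather than an additive one. Plugging this back in and rearranging (legitimate since $\eps\le\tfrac12$, so $1-2\eps/\ln\tfrac1\eps$ is bounded away from $0$ once $\eps$ is below a small absolute constant) gives up-half length $\le(1+\bigO\eps)|pt|+\bigO{\log\tfrac1\eps}$, which combined with the symmetric bound for the down half completes the proof.
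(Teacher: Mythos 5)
Your proposal is correct and takes essentially the same route as the paper's proof: pick the shift via \Cref{lem:lca}, use \Cref{lem:lambert} to place every intermediate Steiner point within $\eps$ of the relevant geodesic, charge the per-hop error $\eps$ against the fact that each hop has length $\bigOm{\log\frac{1}{\eps}}$, and absorb the $\bigTh{\log\frac{1}{\eps}}$ diameter of the lca cell into the additive term. The only differences are bookkeeping: the paper applies \Cref{lem:lambert} to the segment $ps_C$ and bounds $|ps_C|+|s_Cq|\leq |pq|+4\diam(C)$ up front (which makes the ordering of the projections along the reference segment immediate, sidestepping your somewhat under-justified claim that the points $y_j$ are monotone along $pq$ towards $t$), and it counts hops against the geodesic length $|ps_C|$ directly rather than against the graph path length followed by a rearrangement.
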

\begin{proof}
    Take points $p,q \in P$ and let $\dist_G$ denote the distance in $G_3(P,\eps)$; we will prove that $\dist_G(p,q) \leq (1+\bigO\eps)|pq| + \bigO{\log\frac{1}{\eps}}$.
    By Lemma~\ref{lem:lca}, there is a shift $i$ where $pq$ goes through $C = \lca(p,q,T^\text{base}_i)$, thus $|ps_C| + |s_Cq| \leq |pq| + 4\diam(C)$, where $\diam(C) = \bigTh{\log\frac{1}{\eps}}$.
    The path in $G_3(P,\eps)$ from $p$ to $s_C$ will be some (possibly empty) sequence of $k$ Steiner points $s_{C_j}$.
    By Lemma~\ref{lem:lambert}, $ps_C$ must pass within distance $\eps$ of each Steiner point $s_{C_j}$, giving total error $k\eps$.
    Simultaneously, $|ps_C| \geq 2k \log\frac{1}{\eps}$ because it has to traverse the full height of $k$ cells.
    Thus, $\dist_G(p, s_C) \leq (1+\eps) |ps_C|$ and analogously $\dist_G(s_C, q) \leq (1+\eps) |s_Cq|$.
    Finally, $\dist_G(p,q) \leq (1+\eps) |pq| + \bigO{\log\frac{1}{\eps}}$.
    If $|pq| \geq \frac{1}{\eps} \log\frac{1}{\eps}$, then $\bigO{\log\frac{1}{\eps}} \leq \bigO{\eps}|pq|$ so this reduces to $(1+\bigO{\eps}) |pq|$.
\end{proof}

For smaller distances, we need to reduce the additive error $\bigO{\log\frac{1}{\eps}}$.
In particular, since distances up to $16 \log d$ can be handled by Theorem~\ref{thm:doublingmain}, we need to reduce it to $\bigO{\eps \log d}$.
To that end, we will incorporate ideas from \Cref{sec:bipartitespanner}.
We introduce $T_i^\text{fine}$, whose vertices are the cells of $T_i^{4 + \log\log d}$ contained in some cell of $T_i^\text{base}$ and whose edges again go from each cell to the first cell above it.
For each cell $C'$ of $T_i^\text{fine}$, we define $\cov(C')$ as a $\sqrt{\eps\diam(C')}$-covering of $C'|_{\{\frac{1}{6}\}}$.
Now, for every point $p \in P$ or $p = s_C$ for some cell $C$ in $T_i^\text{base}(P)$, consider its $2\log\frac{1}{\eps}$ ancestors $\cC$ in $T_i^\text{fine}$ and connect $p$ to $\cov(C')$ for every $C' \in \cC$.
This gives the following result, which proves the hyperbolic claim of \Cref{thm:main}:

\begin{theorem}\label{thm:Hmain}
    Let $\eps \in (0,\frac12]$ and $P$ be a set of $n$ points in $\Hyp^d$.
    There is a Steiner $(1+\eps)$-spanner for~$P$ with $d^{\bigO{d}} \eps^{(1-d)/2} \log\frac{1}{\eps} \cdot n$ edges.
    We can maintain it dynamically, such that each point insertion or deletion takes $d^{\bigO{d}}  \eps^{(1-d)/2} \log \frac{1}{\eps} \cdot \log n$ time.
\end{theorem}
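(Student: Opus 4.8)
The plan is to show that the graph $G=G(P,\eps')$ defined above---the union, over the $\bigO{d}$ shifts, of (I) the Steiner spanner of \Cref{thm:doublingmain} built on each level-$(4+\log\log d)$ cell, (II) the backbone trees $G_3(P,\eps')$ of \Cref{lem:verylargedist}, and (III) the fixed fine-level coverings $\cov(C')$ joined to each vertex of $P\cup\{s_C : C\in V(T_i^{\text{base}}(P))\}$ and each of its $2\log\tfrac1{\eps'}$ nearest ancestors in $T_i^{\text{fine}}$---is a Steiner $(1+\eps)$-spanner, where $\eps'=\eps/(c+3)$ and $c=d^{\bigO1}$ is the same constant as in \Cref{thm:doublingmain}, chosen so that every additive term of size $\bigO{\eps'\log d}$ encountered below is at most $\eps|pq|$. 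With the spanner property in hand, the edge count and update time are then checked directly.

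For the spanner property I would argue as in Sections~\ref{sec:doublingspace} and~\ref{sec:bipartitespanner}: induction on $|pq|$ via \Cref{lem:reprapprox}, so it suffices, for each $p,q\in P$, to produce proxies $p',q'\in P$ with $|pp'|,|qq'|\le\eps'|pq|$ and a $G$-path of length $\le(1+c\eps')|pq|$ between them. Let $C$ be the smallest cell over all shifts with $p,q\in C$; by \Cref{lem:shift} $\diam(C)=\bigO{d\sqrt d}\,|pq|$, and taking $p',q'$ as the closest-in-order points inside the descendants of $C$ of diameter $\le\eps'\diam(C)/\bigO{d\sqrt d}$ gives $p'q'\subseteq C$ via \Cref{lem:hyperquadtree}\ref{lem:Hstarshaped}. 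I then split on $|pq|$. For $|pq|$ below the constant threshold of \Cref{lem:shift}, $p$ and $q$ share a level-$(4+\log\log d)$ cell and part~(I) already yields $\dist_G(p,q)\le(1+\eps)|pq|$. For $|pq|\ge\tfrac1{\eps'}\log\tfrac1{\eps'}$, part~(II) alone suffices, since \Cref{lem:verylargedist} with parameter $\eps'$ gives $\dist_G(p,q)\le(1+\eps')|pq|+\bigO{\log\tfrac1{\eps'}}\le(1+\eps)|pq|$. For the intermediate range I use \Cref{lem:lca} at level $4+\log\log d$ to get a shift $j$ in which $p'q'$ crosses a fine cell through its $\{1/6\}$-slice; $p'$ and $q'$ (and, when needed, backbone Steiner points on the paths from them toward $\lca(p',q',T_j^{\text{base}})$) are connected to a $\sqrt{\eps'\diam(\cdot)}$-covering of that slice, so a crossing hop through the covering point nearest $p'q'$ costs only an extra additive $\bigO{\eps'\log d}$ by \Cref{lem:Hsplit2} (whose distance-$\ge 2$ hypothesis is met exactly as in \Cref{sec:bipartitespanner}), while any backbone hops used to reach that point contribute, by \Cref{lem:lambert}, a deviation of only $\eps'$ each against a hop length $\Omega(\log\tfrac1{\eps'})$---so the climb stays within a factor $1+\eps'$. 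Concatenating and using $|p'q'|\le(1+2\eps')|pq|$ gives a path of length $\le(1+c\eps')|pq|$, and \Cref{lem:reprapprox} closes the induction.

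For the edge count: part~(I) contributes $d^{\bigO d}\eps^{(1-d)/2}\log\tfrac1{\eps}\cdot n$ as in \Cref{thm:doublingmain}; part~(II) contributes $\bigO{dn}$ because each compressed tree $T_i^{\text{base}}(P)$ has $\bigO{n}$ nodes, hence $\bigO{n}$ Steiner points and edges; and part~(III) joins each of the $n+\bigO{dn}$ endpoints to $2\log\tfrac1{\eps'}$ coverings, each of size $d^{\bigO d}\bigl(\eps'\diam(C')\bigr)^{(1-d)/2}=d^{\bigO d}\eps^{(1-d)/2}$ using $\diam(C')=\bigTh{\log d}$ and $(\log d)^{(d-1)/2}=d^{\bigO d}$, for a total of $d^{\bigO d}\eps^{(1-d)/2}\log\tfrac1{\eps}\cdot n$; summing gives the stated bound. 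For dynamic maintenance I would store, per shift, the $\bigO d$ sorted orders of \Cref{lem:lowdiam_implicit}, the linear quadtree underlying $T_i^{\text{base}}(P)$ (updatable in $\bigO{d\log n}$ by \Cref{lem:dynamic_T(P)}), and an adjacency list of $G$. A point insertion or deletion changes the order-neighbourhood of $\bigO1$ points and splits or merges $\bigO1$ cells of each $T_i^{\text{base}}(P)$, hence affects $\bigO1$ backbone Steiner points; replaying the three sub-constructions on these localised changes, as in \Cref{lem:lowdiam_implicit} and \Cref{lem:bipartite_implicit}, finds the $d^{\bigO d}\eps^{(1-d)/2}\log\tfrac1{\eps}$ affected edges, and writing each into the adjacency list costs $\bigO{d\log n}$, giving total update time $d^{\bigO d}\eps^{(1-d)/2}\log\tfrac1{\eps}\cdot\log n$.

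The main obstacle is the intermediate regime: one must pin down which fine cell along $p'q'$ to route through---the lowest-common-ancestor cell $\lca(p',q',T_j^{\text{fine}})$ when it is already within $2\log\tfrac1{\eps'}$ fine levels of $p'$ and $q'$, and otherwise a fine cell sitting just above a backbone Steiner point reached from $p'$ (resp.\ $q'$)---and verify that in each case the point-to-covering and Steiner-to-covering edges of part~(III) are actually present, i.e.\ that the chosen cell really is among the relevant $2\log\tfrac1{\eps'}$ fine-level ancestors. Splicing the single quadratic covering error of \Cref{lem:Hsplit2} with the chain of exponentially small backbone deviations of \Cref{lem:lambert} into one clean $(1+c\eps')$ multiplicative guarantee then requires careful bookkeeping of where the geodesic $p'q'$ sits relative to the base- and fine-level cells it traverses.
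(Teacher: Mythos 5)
Your overall plan is the paper's: the same three ingredients (\Cref{thm:doublingmain} inside level-$(4+\log\log d)$ cells for small distances, the backbone trees of \Cref{lem:verylargedist} for distances $\gtrsim\frac1\eps\log\frac1\eps$, and the $\sqrt{\eps\diam(C')}$-coverings of fine-cell slices attached to every point of $P$ and every backbone Steiner point and its $2\log\frac1\eps$ fine ancestors), the same key lemmas (\Cref{lem:lca}, \Cref{lem:lambert}, \Cref{lem:Hsplit2}), and essentially the paper's edge-count and update-time accounting. Two remarks on the parts you did write out: the \Cref{lem:reprapprox} proxy induction is unnecessary here (unlike for $G_1,G_2$, part~(III) connects \emph{every} point of $P$, not just order-adjacent pairs, so the paper argues directly on $p,q$), and that is harmless; your small- and large-distance cases are fine.

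The genuine gap is exactly the ``main obstacle'' you defer: the intermediate regime is the entire content of the paper's proof, and your fallback routing there does not work as stated. Routing the two sides through ``a fine cell sitting just above a backbone Steiner point reached from $p'$ (resp.\ $q'$)'' leaves the path disconnected: covering points of distinct fine cells are never adjacent, and a covering point of a cell $C'$ is only joined to points/backbone points that have $C'$ among their $2\log\frac1\eps$ fine ancestors, so the relay cell must be a \emph{common} fine ancestor of both sides (and the geodesic must provably pass within $\sqrt{\eps\diam(C')}$ of its $\{\frac16\}$-slice, which \Cref{lem:lca} guarantees only for the lca cell). The paper therefore always routes through $C=\lca(p,q,T_i^{4+\log\log d})$ (which lies in $T_i^\text{fine}$), and settles edge existence plus error accounting in one stroke by truncating the backbone climb: take the shortest prefix $p=s_{C_0},s_{C_1},\dots,s_{C_k}$ whose last vertex is joined to $\cov(C)$; if $s_{C_k}$ is within fine-distance $2\log\frac1\eps$ of $C$, then $s_{C_{k-1}}$ is also joined to $\cov(C)$ and the prefix can be shortened, so one may assume $k=0$ or every backbone hop has length $\Omega(\log\frac1\eps)$. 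That lower bound is what makes the $k$ Lambert deviations of $\eps$ each chargeable to a $(1+\eps)$ factor on $|ps|$, while the single covering hop costs $\eps\diam(C)\le\eps|pq|$ by \Cref{lem:Hsplit2}, giving stretch $1+2\eps$ overall. Without this (or an equivalent) argument pinning down the common relay cell, the presence of the needed edges, and the per-hop length lower bound, your sketch does not yet yield the $(1+\eps)$ guarantee in the regime between $\Theta(\log d)$ and $\frac1\eps\log\frac1\eps$.
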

\begin{proof}
    The edge count follows by construction: for each of $\bigO{d}$ shifts $i$, $T_i^\text{base}$ has $\bigO{n}$ vertices.
    For each of these, we consider at most $2\log\frac{1}{\eps}$ cells $C$ of $T_i^\text{fine}$ and connect to $\cov(C)$, where $C$ is a small-boundary cell so $|\cov(C)| = d^{\bigO{d}} \cdot (\eps \diam(C))^{(d-1)/2} = d^{\bigO{d}} \cdot \eps^{(d-1)/2}$.
    For dynamic updates, we maintain $T_i^\text{base}$ for each shift using \Cref{lem:dynamic_T(P)}, as well as an adjacency list representation of the Steiner spanner.
    The Steiner points and edges follow from the construction in $d^{\bigO{d}} \eps^{(1-d)/2} \log \frac{1}{\eps}$ time, which we can then each insert into the adjacency lists in a total of $d^{\bigO{d}} \eps^{(1-d)/2} \log \frac{1}{\eps} \cdot \log n$ time.

    To prove this is a Steiner spanner, we follow the same idea as Lemma~\ref{lem:verylargedist} except now $C = \lca(p, q, T_i^\text{fine})$ and we connect to a point $s \in \cov(C)$.
    In particular, let $p, q \in P$, then by Lemma~\ref{lem:lca} there is a shift $i$ where $p$ goes through $C = \lca(p, q, T_i^{4 + \log\log d})$.
    This cell $C$ is by construction also represented in $T_i^\text{fine}$.
    There is a (possibly empty) sequence of $k$ Steiner points $s_{C_j}$ where we can say $s_{C_0} = p$, each $s_{C_j}$ is connected to $s_{C_{j+1}}$, and $s_{C_k}$ is connected to all Steiner points in $\cov(C)$, such that either $k=0$ or the distance between $s_{C_k}$ and $C$ is at least $2\log\frac{1}{\eps}$:
    if we have a sequence where the distance between $s_{C_k}$ and $C$ is smaller than $2\log\frac{1}{\eps}$, then by construction $s_{C_{k-1}}$ will also be connected to all points in $\conv(C)$ so we can shorten the sequence.
    
    As shown in the proof of Lemma~\ref{lem:verylargedist}, $pq$ will pass within distance $\eps$ of each Steiner point $s_{C_j}$ and $|pq| \geq k\log\frac{1}{\eps}$.
    Additionally, there is a point $s \in \cov(C)$ such that (by the second part of Lemma~\ref{lem:lca}) $pq$ passes within distance $\sqrt{\eps \diam(C)}$ of $s$.
    Thus, by Lemma~\ref{lem:Hsplit2}, $|ps| + |sq| \leq |pq| + \eps\diam(C) \leq (1 + \eps)|pq|$.
    As shown, the path through the points $s_{C_j}$ is at most $1 + \eps$ times longer than $|ps|$, so in total $\dist_G(p,q) \leq (1 + 2\eps)|pq|$.
    In other words, this is a Steiner $(1+\eps')$-spanner for $\eps' = 2\eps$.
\end{proof}

\subsection{\texorpdfstring{$\eps$}{ε}-Additive Steiner spanner}\label{sec:additivespanner}
Using the same ideas, we can also make an additive instead of a multiplicative spanner.
This extends the $\bigO{\log d}$-additive Steiner spanner of \cite{additivespanner} and we follow their idea of using a \emph{transitive closure spanner}.
Given a directed graph, a $k$-transitive closure spanner adds edges that shortcut any existing path between two vertices to one of at most $k$ hops (see also the survey by Raskhodnikova~\cite{transclosure}).
Applying a transitive closure spanner to the trees of Lemma~\ref{lem:verylargedist} lets us get rid of the multiplicative distortion.
We make use of the following transitive closure spanner by Thorup~\cite{Thorup97}, where $\alpha(n)$ is the extremely slow-growing inverse Ackermann function.
It follows from his Algorithm L by taking $m = n$.

\begin{lemma}[Algorithm L of \cite{Thorup97}]\label{lem:transclosure}
    Let $T$ be a tree where its edges are directed towards the root.
    Then, there is an $\bigO{\alpha(n)}$-transitive closure spanner of $T$ that has $2n-1$ edges and can be found in $\bigO{n}$ time.
\end{lemma}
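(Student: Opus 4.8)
The plan is to prove this in two stages: first the one-dimensional core --- shortcutting a directed \emph{path} --- and then to lift it to arbitrary rooted trees. For the path, label the vertices $1,\dots,n$ so that every edge points towards vertex $1$ (the root), so following edges from $i$ walks $i\to i-1\to\cdots\to 1$; a valid shortcut is any edge $i\to j$ with $j<i$, and we must guarantee that for every $j<i$ there is a $\bigO{\alpha(n)}$-hop directed path from $i$ to $j$. I would build such a shortcut set by an Ackermann-style recursion on a parameter $k$ (the ``number of rounds''). At round $k$, cut the path into $m_k$ consecutive blocks, recurse with round $k-1$ inside each block, add a \emph{skeleton} on the $m_k$ block boundaries --- e.g. a balanced structure, or recursively the same path construction on the boundaries --- so that any two boundaries are reachable within $\bigO{\alpha(m_k)}$ hops, and additionally connect every vertex to the left boundary of its block. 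A query $i\to j$ then goes: inside $i$'s block to a boundary, along the skeleton, inside $j$'s block to $j$ --- three recursive pieces plus a skeleton traversal. Choosing the block counts $m_k$ along the inverse-Ackermann hierarchy (roughly doubling the ``tower height'' of $m_k$ with each increment of $k$, and taking the final $k=\bigTh{\alpha(n)}$) makes the hop bound $\bigO{\alpha(n)}$.

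Stage two lifts this to a rooted tree $T$. Here I would mirror the path recursion with \emph{subtrees} in place of subpaths: pick a set of $\bigTh{n/s}$ ``landmark'' vertices closed under lowest common ancestor, so that under the ancestor relation of $T$ they induce a tree $T'$ on $\bigTh{n/s}$ nodes; recurse inside each component of $T$ minus the landmarks (each of size $\bigO{s}$, together covering all $n$ vertices), build the skeleton shortcuts \emph{on $T'$} by the same tree construction recursively, and connect every vertex to its nearest landmark ancestor. An ancestor query $u\to v$ then decomposes into: $u$ up to its nearest landmark ancestor $u'$, along the $T'$-skeleton from $u'$ to the nearest landmark lying on the $u$--$v$ path above $v$, and finally down into $v$'s component --- again three recursive pieces. (One may instead reduce directly to Stage~one via a ladder or heavy-path decomposition, but one must be careful that a root-to-vertex path meets only $\bigO{1}$ ladders in an amortized sense rather than the naive $\bigO{\log n}$.) Crucially every edge added is of the form ``vertex to one of its ancestors'', so the orientation of $T$ is respected and the result is a genuine transitive-closure spanner.

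The main obstacle is not the $\bigO{\alpha(n)}$ hop bound --- which the recursion above delivers --- but the \emph{linear}, indeed $2n-1$, edge count. A naive implementation adds $\bigO{n}$ edges \emph{per round}, i.e. $\bigO{\alpha(n)\cdot n}$ in total, which is not linear. Thorup's Algorithm~L circumvents this by not using a uniform number of rounds everywhere: the rounds are distributed over the tree so that, amortized, each vertex is charged only $\bigO{1}$ shortcut edges --- roughly, a vertex participates only in the top few rounds of the recursion on the current (small) piece containing it, and the pieces shrink fast enough that the geometric sum over all recursion depths is $\bigO{1}$ per vertex. Reproducing this charging scheme precisely, and tightening the constant to exactly $2n-1$ --- which forces essentially one shortcut per vertex on top of the $n-1$ tree edges --- is the delicate part; I would either carry out Thorup's amortized analysis in full, or, since the lemma is used only as a black box here, invoke it as stated. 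The $\bigO{n}$ running time then follows because each decomposition step (block or landmark selection, skeleton construction, base cases) is linear and the pieces shrink geometrically.
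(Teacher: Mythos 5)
The paper does not prove this lemma at all: it is imported verbatim from Thorup's paper (the remark in the text is that it follows from his Algorithm~L by taking $m=n$), so your fallback option of ``invoke it as stated'' is in fact exactly what the paper does, and for the purposes of this paper that is sufficient since the lemma is only ever used as a black box in the proof of the $\eps$-additive spanner.

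As a from-scratch proof, however, your sketch has a genuine gap, and it sits precisely where the lemma's strength lies. The inverse-Ackermann hierarchy you describe (blocks, skeleton on block boundaries, connect each vertex to its block boundary, recurse) is the standard route to an $\bigO{\alpha(n)}$ hop bound, but as you yourself observe it charges $\bigO{n}$ shortcut edges per round and hence $\bigO{n\,\alpha(n)}$ edges overall; the claim that Thorup's charging scheme reduces this to $2n-1$ edges and $\bigO{n}$ construction time is asserted, not argued. That amortization --- distributing the rounds non-uniformly over the tree so that each vertex is charged $\bigO{1}$ shortcuts, while still guaranteeing every ancestor--descendant pair an $\bigO{\alpha(n)}$-hop path --- is the entire technical content of Algorithm~L, and nothing in your outline substitutes for it; in particular the exact count $2n-1$ (essentially one shortcut per vertex on top of the $n-1$ tree edges) cannot be extracted from the uniform recursion you describe. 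The tree-lifting stage is also looser than it looks: you would need to verify that the landmark set closed under lowest common ancestors has the claimed size, that the ``three recursive pieces'' compose without inflating the hop bound beyond $\bigO{\alpha(n)}$, and that the heavy-path alternative really meets only $\bigO{1}$ ladders per query rather than $\bigO{\log n}$ --- none of which is spelled out. So either carry out Thorup's analysis in full or cite it; the half-constructed hierarchy in between does not establish the stated bounds.
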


Putting this together gives \Cref{thm:additive}.
Note that \Cref{thm:main} already implies the same result without the factor $\alpha(n)$ for $\Sph^d$ and for points in a ball of radius $\bigOd{1}$ in $\Euc^d$ or $\Hyp^d$.

\additivespanner*
\begin{proof}
    We first construct $G_3(P,\eps / \alpha(n))$ (the trees of Lemma~\ref{lem:verylargedist}).
    On each of these trees, we apply the transitive closure spanner of \Cref{lem:transclosure}, so that for any pair $p,q \in P$ the additive distortion we accumulate along the path to their lowest common ancestor is $\bigO{\eps}$ (instead of being at most $\eps|pq|$).
    From here, we follow the construction of \Cref{thm:Hmain} again, which reduces the additive error of $\bigO{\log\frac{\alpha(n)}{\eps}}$ around the lowest common ancestor to $\bigO{\eps}$ as well.
    
    For small distances, we can make a Steiner $(1 + \eps / (16 \log d))$-spanner from Theorem~\ref{thm:doublingmain}.
    Because all distances are at most $16 \log d$, this adds at most $\eps$ to every distance.
\end{proof}

\section{Applications}\label{sec:applications}
\subsection{Quotient spaces}\label{sec:quotient}
By the Killing-Hopf theorem, any $d$-dimensional \emph{space form} (a constant-curvature Riemannian manifold that is complete and connected) is isometric to $\Sph^d$, $\Euc^d$ or $\Hyp^d$, modulo some isometry group $\Gamma$ (note however that not every isometry group gives a space form).
A classical example of this is the square flat torus, which is isometric to $\Euc^2$ modulo the translations from $\mathbb Z^2$.
Thus, results for $\Sph^d$, $\Euc^d$ and $\Hyp^d$ can be built upon to get results for these more general manifolds.

In particular, if $|\Gamma|$ is finite we can simply replace each point by its $|\Gamma|$ representatives in the covering space ($\Sph^d$, $\Euc^d$ or $\Hyp^d$), do the Steiner spanner construction there, then identify points in the covering space that are the same modulo $\Gamma$.
This is the case for, for example, $d$-dimensional elliptic space, which is isometric to $\Sph^d$ with antipodal pairs of points identified and thus has $|\Gamma| = 2$.

\begin{theorem}
    For any set $P$ of $n$ points on a manifold isometric to $\Sph^d / \Gamma$, $\Euc^d / \Gamma$ or $\Hyp^d / \Gamma$ for some finite isometry group $\Gamma$, there is a Steiner $(1+\eps)$-spanner with $d^{\bigO{d}} \cdot \eps^{(1-d)/2} \log\frac1\eps \cdot n |\Gamma|$ edges.
\end{theorem}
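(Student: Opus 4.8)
The plan is to lift everything to the covering space, invoke \Cref{thm:main} there, and then push the resulting spanner back down. Write $\mathbb{X}^d\in\{\Sph^d,\Euc^d,\Hyp^d\}$ for the covering space and $\pi\colon\mathbb{X}^d\to\mathbb{X}^d/\Gamma$ for the quotient map. Since $\Gamma$ is finite, the quotient metric is $\dist_{\mathbb{X}^d/\Gamma}(\pi(x),\pi(y))=\min_{\gamma\in\Gamma}\dist_{\mathbb{X}^d}(x,\gamma y)$; in particular $\pi$ is $1$-Lipschitz (take $\gamma=\mathrm{id}$) and every orbit has size at most $|\Gamma|$. I would set $\tilde P:=\pi^{-1}(P)\subseteq\mathbb{X}^d$, which has at most $n|\Gamma|$ points, and apply \Cref{thm:main} to $\tilde P$ in $\mathbb{X}^d$ to obtain a Steiner $(1+\eps)$-spanner $G$ for $\tilde P$ with $\bigOd{\eps^{(1-d)/2}\log\frac1\eps\cdot n|\Gamma|}$ edges.

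Next I would define the projected graph $\bar G:=\pi(G)$: its vertex set is $\pi(V(G))$, and for every edge $uv$ of $G$ it contains an edge joining $\pi(u)$ and $\pi(v)$ weighted by $\dist_{\mathbb{X}^d/\Gamma}(\pi(u),\pi(v))$. Discarding self-loops and keeping only a shortest copy among parallel edges gives $|E(\bar G)|\le|E(G)|$, and $\bar G$ is a geometric graph on $\mathbb{X}^d/\Gamma$. Note that $P=\pi(\tilde P)\subseteq\pi(V(G))=V(\bar G)$, so $\bar G$ has the right input vertices, and its Steiner points are images of Steiner points of $G$, so the edge count $\bigOd{\eps^{(1-d)/2}\log\frac1\eps\cdot n|\Gamma|}$ is as claimed.

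To verify the spanner property, fix $\bar p,\bar q\in P$ and choose lifts $p\in\pi^{-1}(\bar p)$, $q\in\pi^{-1}(\bar q)$ realising the minimum in the quotient-distance formula, so $p,q\in\tilde P$ and $\dist_{\mathbb{X}^d}(p,q)=\dist_{\mathbb{X}^d/\Gamma}(\bar p,\bar q)$. A $(1+\eps)$-approximate $p$--$q$ path in $G$ projects under $\pi$ to a $\bar p$--$\bar q$ walk in $\bar G$ whose length, edge by edge, does not increase because $\pi$ is $1$-Lipschitz; hence $\dist_{\bar G}(\bar p,\bar q)\le(1+\eps)\dist_{\mathbb{X}^d}(p,q)=(1+\eps)\dist_{\mathbb{X}^d/\Gamma}(\bar p,\bar q)$. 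The reverse inequality $\dist_{\bar G}(\bar p,\bar q)\ge\dist_{\mathbb{X}^d/\Gamma}(\bar p,\bar q)$ is immediate since all edge weights of $\bar G$ are genuine distances in $\mathbb{X}^d/\Gamma$, which establishes the theorem.

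There is essentially no obstacle in this argument; the two points requiring care are that $\tilde P$ must be the \emph{full} preimage $\pi^{-1}(P)$ — so that the distance-realising pair of lifts is actually present among the points whose spanner we build — and that it is the $1$-Lipschitzness of $\pi$, rather than any local-isometry statement, that makes the projection of a short path short. If $\Gamma$ fails to act freely, one additionally observes that smaller orbits only decrease $|\tilde P|$, so the bound $\bigOd{\eps^{(1-d)/2}\log\frac1\eps\cdot n|\Gamma|}$ still holds.
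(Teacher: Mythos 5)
Your proposal is correct and follows essentially the same route as the paper: lift $P$ to its full preimage (at most $|\Gamma|$ representatives per point), apply \Cref{thm:main} in the covering space, and then identify orbit points, using that the quotient map is distance non-increasing while the realising pair of lifts is present in the lifted point set. Your write-up just spells out in detail what the paper states in a single sentence.
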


If $\Gamma$ is not finite then existing results still let us do something similar for orientable surfaces ($d=2$), assuming they still have constant curvature, are complete and are connected.
These surfaces fall into three categories based on their genus $g$: spheres ($g=0$), tori ($g=1$), and hyperbolic surfaces ($g \geq 2$).
In each case, $\bigO{g^{\bigO g}}$ representatives per point suffice.
For spheres $|\Gamma| = 1$ so this only requires $0^0 = 1$.
In the other cases, we use a \emph{Dirichlet domain}.
Let the surface be $S = \mathbb X^2 / \Gamma$ and take a base point $b \in S$.
Then, construct the Voronoi diagram for all representatives of $b$ in $\mathbb X^2$.
This diagram is highly regular: for any two cells there is an isometry in $\Gamma$ that maps one to the other.
Thus, modulo $\Gamma$ (i.e.\ on $S$) there is only one cell, which is the Dirichlet domain $\mathcal D_b$.
By construction, any shortest curve from $b$ to another point on $S$ does not cross the boundary of $\mathcal D_b$ (though it may end at the boundary).
From the Euler characteristic, it follows that $\mathcal D_b$ has at most $12 - 6g$ edges.
(If $\mathcal D_b$ has $k$ edges, then on the surface we have a graph with $e = k/2$ edges, $f=1$ face and $v \leq k/6$ vertices, since each vertex has degree at least $3$. The inequality now follows from $2 - 2g = v - e + f$.)

On a torus ($\mathbb X = \Euc$), Euclidean geometry implies that any Dirichlet domain $\mathcal D_{b'}$ is simply $\mathcal D_b$ translated by $b' - b$.
We will build a set $\widetilde P \subset \Euc^2$ of representatives of points in $P \subset S$.
For this, first fix one representative $\widetilde{\mathcal D_b} \subset \Euc^2$ of $\mathcal D_b$.
For each point $p \in P \subset S$, now add to $\widetilde P$ all representatives where the corresponding representative of $\mathcal D_p$ intersects $\widetilde{\mathcal D_b}$.
Now, take a point $q \in P$; it must have a representative $\tilde q \in \widetilde P \cap \widetilde{\mathcal D_b}$.
There will also be related representatives $\tilde p$ and $\widetilde{\mathcal D_p}$ where $\tilde q \in \widetilde{\mathcal D_p}$.
Since the shortest curve between $p$ and $q$ never crosses the boundary of $\mathcal D_p$, this means that $\tilde p \tilde q$ is a representative of this shortest curve.
Thus, this set $\widetilde P$ represents all shortest curves, which means a Steiner spanner for $\widetilde P$ becomes a Steiner spanner for $P$.
For a torus, the Dirichlet domains are either parallelograms or hexagons with parallel opposite sides; in both cases it can be shown that $|\widetilde P| \leq 4|P|$.
Similar observations have been used to construct the Delaunay triangulation on (three-dimensional) tori \cite{dolbilin1997periodic}.

Hyperbolic surfaces ($\mathbb X = \Hyp$) are much more challenging since translations no longer commute.
However, Despré, Kolbe and Teillaud \cite{DespreKT24} have proven that something similar still holds:

\begin{lemma}[Lemma 5 of \cite{DespreKT24}]
    Let $\mathcal D$ be a Dirichlet domain with $k$ edges.
    A shortest curve between two points crosses the boundary of $\mathcal D$ at most $k/2$ times.
\end{lemma}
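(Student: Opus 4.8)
The plan is to pass to the universal cover $\Hyp^2$ and work with the Dirichlet--Voronoi tessellation $\{g\widetilde{\mathcal D}\mid g\in\Gamma\}$, where $\widetilde{\mathcal D}$ is the lift of $\mathcal D$ realised as the Voronoi cell of the base point $\tilde b$ with respect to the orbit $\Gamma\tilde b$; for a closed hyperbolic surface this is a compact convex geodesic polygon with $k$ edges. A shortest curve $\gamma$ from $p$ to $q$ on $S=\Hyp^2/\Gamma$ lifts to a length-minimising geodesic segment $\sigma$ from a lift $\tilde p\in\widetilde{\mathcal D}$ to the lift $\tilde q$ that realises $\dist_S(p,q)$. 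Every time $\gamma$ meets $\partial\mathcal D$ on $S$ corresponds to $\sigma$ crossing an edge of the tessellation in $\Hyp^2$, so it suffices to bound the number of such crossings of $\sigma$. If $\sigma$ happens to pass through a tessellation vertex I would perturb $q$ slightly and invoke continuity, or dispose of the finitely many vertices by a direct case analysis.

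Next I would pin down the combinatorics of $\sigma$. Dirichlet domains are geodesically convex (each bisector is a totally geodesic line), and a geodesic meets a convex set in a sub-segment; hence $\sigma$ enters and leaves every tile it visits exactly once, so the visited tiles form a \emph{simple} gallery $T_0=\widetilde{\mathcal D},T_1,\dots,T_m$ of distinct tiles, $\sigma$ crosses exactly $m$ tessellation edges, and $\tilde q\in T_m$. Writing $T_i=g_i\widetilde{\mathcal D}$ with $g_0=\mathrm{id}$, the crossing between $T_i$ and $T_{i+1}$ is labelled by the side of $\mathcal D$ it uses, equivalently by the side-pairing transformation $h_i=g_i^{-1}g_{i+1}$, which pairs two edges of $\mathcal D$. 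Since the $k$ edges of $\mathcal D$ fall into $k/2$ side-pairs, the lemma reduces to the claim that no side-pair is used by two different crossings, i.e.\ that $h_0,\dots,h_{m-1}$ are pairwise distinct up to inversion; this immediately yields $m\le k/2$.

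\textbf{Main obstacle.} The heart of the argument is precisely this no-repeat claim, and it is where the \emph{global} minimality of $\sigma$ (not just its being locally geodesic) must enter: a generic geodesic line in $\Hyp^2$ crosses tessellation edges from a single $\Gamma$-orbit arbitrarily often, so the bound is simply false for non-minimising arcs, and a purely topological count of the sub-arcs of $\gamma$ inside the disk $\mathcal D$ does not help either, since a disk admits arbitrarily many disjoint boundary-to-boundary arcs. I would attack it by combining minimality with the Voronoi structure: $\tilde p\in\widetilde{\mathcal D}$ gives $\dist(\tilde p,c_0)\le\dist(\tilde p,c_i)$ for every centre $c_i=g_i\tilde b$, $\tilde q\in T_m$ gives the symmetric inequalities at $c_m$, at each crossing point the two adjacent centres are equidistant, and $t\mapsto\dist(\sigma(t),c)$ is strictly convex along the geodesic; the goal is to show that a repeated side-pair would force a configuration of tiles/centres that contradicts one of these inequalities (or lets one exhibit a strictly shorter competitor between $\tilde p$ and $\tilde q$). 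Turning this into a clean contradiction, while correctly bookkeeping the side-pairing transformations $h_i$ and the two orientations in which a side can be traversed, is the delicate part of the proof; everything else is routine once it is settled.
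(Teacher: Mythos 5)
First, note that the paper does not prove this statement at all: it is imported verbatim as Lemma~5 of \cite{DespreKT24}, so there is no internal proof to compare against. Judged on its own terms, your write-up is a proof \emph{plan}, not a proof. The preparatory steps are fine and essentially routine: lifting to $\Hyp^2$, using geodesic convexity of the Voronoi tiles to see that the minimising segment $\sigma$ meets each tile in a sub-segment (so the visited tiles form a simple gallery and the number of boundary crossings equals the number of tile changes $m$), and labelling each crossing by a side-pairing transformation $h_i=g_i^{-1}g_{i+1}$ so that $m\le k/2$ would follow once no side-pair $\{h,h^{-1}\}$ is used twice (here one also uses torsion-freeness so that no side is paired with itself, and one must handle the vertex-avoidance perturbation with some care, since crossing counts do not behave semicontinuously at tessellation vertices and shortest curves need not be unique).

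The genuine gap is exactly the step you flag yourself: the no-repeat claim is the entire content of the lemma, and you do not prove it. Your reduction has not isolated away any of the difficulty --- as you observe, locally geodesic arcs can cross edges from a single $\Gamma$-orbit arbitrarily often, and indeed two geodesic segments on a hyperbolic surface can cross twice (they then bound a non-contractible bigon), so nothing short of a global argument combining minimality with the Dirichlet/Voronoi structure will do. The ingredients you list (the inequalities $\dist(\tilde p,c_0)\le\dist(\tilde p,c_i)$ and $\dist(\tilde q,c_m)\le\dist(\tilde q,c_i)$, equidistance of adjacent centres at each crossing, convexity of $t\mapsto\dist(\sigma(t),c)$) are the plausible raw material, but you never exhibit the configuration forced by a repeated side-pair nor the shorter competitor or violated inequality it would produce; it is not even argued that a repetition must involve two crossings of the \emph{same} surface edge in a geometrically constrained position. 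As it stands the lemma is reduced to an unproven statement of essentially the same strength, so the proposal cannot be accepted as a proof; to complete it you would either have to carry out the contradiction argument in detail or, more in the spirit of how the present paper treats the statement, simply invoke \cite{DespreKT24}.
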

This lemma lets us use $(k+1)^{k/2} = g^{\bigO g}$ copies for each point on $S$, leading to the following theorem:
\begin{theorem}
    Let $S$ be a constant curvature surface of genus $g$ that is closed, orientable and connected.
    For any set $P$ of $n$ points on $S$, there is a Steiner $(1+\eps)$-spanner with $\bigO{\frac{1}{\sqrt\eps} \log\frac1\eps \cdot n g^{\bigO g}}$ edges.
\end{theorem}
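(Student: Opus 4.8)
The plan is to lift everything to the universal cover $\mathbb X^2\in\{\Sph^2,\Euc^2,\Hyp^2\}$, build a bounded‑size set $\widetilde P\subset\mathbb X^2$ of lifts of the points of $P$ that ``sees'' every shortest curve on $S$, apply \Cref{thm:main} in $\mathbb X^2$ to $\widetilde P$, and push the resulting Steiner $(1+\eps)$-spanner $G$ back down to $S$. Concretely, project $G$ via the covering map (Steiner points map to Steiner points), merge parallel images, and reweight each edge by the \emph{surface} distance between its endpoints; since a geodesic segment and its projection have the same length, each edge weight only decreases, so the image graph $G'$ satisfies $\dist_S(p,q)\le\dist_{G'}(p,q)\le\dist_G(\tilde p,\tilde q)\le(1+\eps)\dist_S(p,q)$ for the right lifts $\tilde p,\tilde q$, and $|E(G')|\le|E(G)|$. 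Thus it remains to build $\widetilde P$ with $|\widetilde P|=g^{\bigO g}\cdot n$ in each genus case. (The curvature of $S$ is irrelevant here: by the rescaling remark for $\Sph^d$ and $\Hyp^d$ from the introduction, \Cref{thm:main} applies to $\mathbb X^2$ of any constant curvature.)

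For $g=0$, Gauss--Bonnet forces positive curvature, so $S$ is a round sphere, $\Gamma$ is trivial, $\widetilde P=P$, and \Cref{thm:main} in $\Sph^2$ is exactly the claim (with $g^{\bigO g}=0^0=1$). For $g=1$ we have $S=\Euc^2/\Gamma$ with $\Gamma$ a lattice; fix one lift $\widetilde{\mathcal D_b}$ of the Dirichlet domain $\mathcal D_b$ and let $\widetilde P$ consist, for each $p\in P$, of those lifts $\tilde p$ whose copy of $\mathcal D_p$ meets $\widetilde{\mathcal D_b}$. On a flat torus every $\mathcal D_p$ is a translate of $\mathcal D_b$ (a parallelogram or a centrally symmetric hexagon), so at most $4$ such copies meet $\widetilde{\mathcal D_b}$ and $|\widetilde P|\le 4n$. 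For correctness, given $p,q\in P$ take the canonical lift $\tilde q\in\widetilde{\mathcal D_b}$ and the lift $\tilde p$ with $\tilde q$ in the copy of $\mathcal D_p$ around $\tilde p$; since a shortest curve from $p$ never crosses $\partial\mathcal D_p$, the segment $\tilde p\tilde q$ is a lift of $pq$, has length $\dist_S(p,q)$, and $\tilde p\in\widetilde P$. Applying \Cref{thm:main} in $\Euc^2$ and pushing down gives $\bigO{\eps^{-1/2}\log\frac1\eps\cdot n}$ edges.

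For $g\ge 2$, $S=\Hyp^2/\Gamma$ with $\Gamma$ non‑abelian, so the copies of $\mathcal D_b$ only form a combinatorial tiling of $\Hyp^2$ by $k$‑gons with $k=\bigO g$ edges. Fix the tile $\widetilde{\mathcal D_b}$ and, for each $q\in P$, its canonical lift $\tilde q\in\widetilde{\mathcal D_b}$. For any $p$, lift the shortest curve $pq$ starting at $\tilde q$; it runs through a chain of tiles, and each tile‑boundary crossing projects to a crossing of $\partial\mathcal D_b$ on $S$, so by Lemma~5 of \cite{DespreKT24} there are at most $k/2$ of them. Hence the other endpoint $\tilde p$ lies in a tile reachable from $\widetilde{\mathcal D_b}$ by at most $k/2$ edge‑adjacency steps, and there are at most $(k+1)^{k/2}=g^{\bigO g}$ such tiles. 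Taking $\widetilde P$ to be all lifts of $P$ in those tiles gives $|\widetilde P|\le g^{\bigO g}\cdot n$ while still containing a lift of every shortest curve; \Cref{thm:main} in $\Hyp^2$ then yields a Steiner $(1+\eps)$-spanner with $\bigO{\frac1{\sqrt\eps}\log\frac1\eps\cdot n g^{\bigO g}}$ edges.

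The main obstacle is the hyperbolic case, and precisely the claim that for \emph{every} ordered pair $(p,q)$ the lift of $pq$ anchored at the canonical lift of $q$ terminates inside $\widetilde P$: this is exactly where Lemma~5 of \cite{DespreKT24} enters, and it is what forces the $g^{\bigO g}$ factor. Everything else is routine: that the surface‑distance‑reweighted pushforward of a geometric $(1+\eps)$-spanner on $\widetilde P$ is a geometric $(1+\eps)$-spanner on $P$ with no more edges, the elementary counts $|\widetilde P|\le 4n$ on the torus and $|\widetilde P|\le(k+1)^{k/2}n$ for genus at least $2$, and the degenerate boundary‑crossing situations (shortest curves passing through tile vertices), which are handled by choosing the base point $b$ in general position.
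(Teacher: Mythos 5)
Your proposal is correct and follows essentially the same route as the paper: lift to the universal cover, use the Dirichlet domain (with Lemma~5 of \cite{DespreKT24} bounding boundary crossings by $k/2$ for $k=\bigO{g}$ edges) to select $g^{\bigO g}$ representatives per point, apply \Cref{thm:main} in $\Sph^2$, $\Euc^2$ or $\Hyp^2$, and project the spanner back to the surface by identifying lifts. Your write-up only adds some detail the paper leaves implicit (reweighting pushed-down edges by surface distance and anchoring lifts at a canonical representative in $\widetilde{\mathcal D_b}$), but the argument and the resulting bound are the same.
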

\begin{proof}
     As noted, for each genus we first find $\bigO{g^{\bigO{g}}}$ representatives of each point in $\Sph^2$, $\Euc^2$ or $\Hyp^2$.
     In the case $g=0$ we have the sphere so a single representative suffices, and when $g=1$ we have a torus with the set of representatives $\widetilde P$ described earlier.
     For genus $g \geq 2$ we first fix a Dirichlet domain, which must have $k \leq 12g - 6$ edges as noted earlier.
     Now consider all representatives reachable by crossing the domain boundary at most $k/2$ times.
     At every step there are $k+1$ distinct ways of (not) crossing the boundary, so this gives at most $(k+1)^{k/2} = g^{\bigO{g}}$ representatives per point.
     
     Next, use the construction of \Cref{thm:main} on the representatives, and finally identify all representatives of the same point again, taking the union of their edges.
     Since \Cref{thm:main} gives $\bigO{\frac{1}{\sqrt\eps} \log\frac1\eps \cdot \tilde n}$ edges where $\tilde n = \bigO{n g^{\bigO g}}$ here, this proves the theorem statement.
\end{proof}

Note that for hyperbolic surfaces we can also use Theorem~\ref{thm:additive} to get an additive approximation and Lemma~\ref{lem:verylargedist} to get a very sparse Steiner spanner for large distances.

\subsection{Approximate nearest neighbours}
As already noted in \cite{hyperquadtree}, if we can maintain a bipartite Steiner spanner dynamically then we only need a small modification to make it into a dynamic data structure that finds approximate nearest neighbours.
In particular, for each Steiner point $s$, we sort the points $p$ connected to it based on the distance $|sp|$.
To find a $(1+\eps)$-approximate nearest neighbour to a query point $q$, we then use \Cref{lem:lowdiam_implicit} or \Cref{lem:bipartite_implicit} to get the Steiner points it would be connected to and list the closest point for each of these Steiner points.
From this set, we take the point that is closest to $q$ and this has to be a $(1+\eps)$-approximate nearest neighbour.
\begin{theorem}\label{thm:anndyn}
    For $n$ points in $\Sph^d$, $\Euc^d$ or $\Hyp^d$ with $d \geq 3$, there is a data structure using $d^{\bigO{d}} \eps^{(1-d)/2} \log\frac1\eps \cdot n$ space, that can answer queries for a $(1+\eps)$-approximate nearest neighbour in $d^{\bigO{d}} \eps^{(1-d)/2} \log\frac1\eps + \bigO{d^2 \log n}$ time and perform updates (point insertions and removals) in $d^{\bigO{d}} \eps^{(1-d)/2} \log\frac1\eps \cdot \log n$ time.
\end{theorem}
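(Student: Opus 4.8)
The plan is to augment the dynamically-maintained bipartite Steiner $(1+\eps)$-spanner — that of \Cref{thm:doublingmain} for $\Euc^d$ and $\Sph^d$, and that of \Cref{thm:Hbipartite} for $\Hyp^d$ (where the restriction $d\ge 3$ is exactly what gives the input-point degree $d^{\bigO d}\eps^{(1-d)/2}\log\frac1\eps$) — with one extra field: for every Steiner point $s$ we keep the multiset of input points currently joined to $s$ in a balanced search tree keyed by $\dist(s,\cdot)$, together with a pointer to its minimum. The backbone remains the implicit representation of \Cref{lem:lowdiam_implicit} / \Cref{lem:bipartite_implicit}, i.e.\ the $\bigO d$ quadtree orders stored in balanced BSTs, plus an adjacency-list copy of the spanner. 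Since every input point has degree $d^{\bigO d}\eps^{(1-d)/2}\log\frac1\eps$, the total number of point–Steiner-point incidences — hence the combined size of all per-Steiner-point trees and of the adjacency lists — is $d^{\bigO d}\eps^{(1-d)/2}\log\frac1\eps\cdot n$, which is the claimed space. An insertion or deletion of a point rearranges the $\bigO d$ order-adjacencies in $\bigO{d^2\log n}$ time and, by \Cref{lem:lowdiam_implicit}/\Cref{lem:bipartite_implicit}, makes only $d^{\bigO d}\eps^{(1-d)/2}\log\frac1\eps$ edges appear or disappear; applying each such change (and, when needed, creating or removing the corresponding Steiner point's tree) touches $\bigO{\log n}$ memory, giving the stated update time.

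A query with a point $q$ runs the edge-generation procedure of the spanner construction for $q$ as a \emph{virtual} insertion: we locate $q$'s neighbours in each of the $\bigO d$ orders in $\bigO{d^2\log n}$ time and then emit, exactly as in the construction of $G_1$ (resp.\ $G_2$), the set $\mathcal S_q$ of Steiner points that an inserted copy of $q$ would be joined to; by symmetry of the construction $|\mathcal S_q|$ is at most an input point's degree, so this costs $d^{\bigO d}\eps^{(1-d)/2}\log\frac1\eps$ time. For each $s\in\mathcal S_q$ we read the point $p_s$ nearest to $s$ among $s$'s current neighbours via the minimum-pointer and evaluate $\dist(q,p_s)$ in $\bigO d$ time, and we return the $p_s$ minimising this quantity. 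Totalling gives the query time $d^{\bigO d}\eps^{(1-d)/2}\log\frac1\eps+\bigO{d^2\log n}$.

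The step that needs the most care is correctness, i.e.\ that the returned point is a $(1+\eps)$-approximate nearest neighbour, and I would obtain it by re-running the correctness proof of \Cref{thm:doublingmain} / \Cref{thm:Hbipartite} for the single pair $(q,p^\ast)$, where $p^\ast\in P$ is a true nearest neighbour of $q$, inside the point set $P\cup\{q\}$. That proof produces a Steiner point $s$ of the smallest quadtree cell $C$ containing $q$ and $p^\ast$, together with points $\hat q,\hat p\in P\cup\{q\}$ where $\hat q$ lies in the same sub-cell of $C$ as $q$ with $\dist(q,\hat q)\le\eps'\dist(q,p^\ast)$, $\hat p$ lies in the sub-cell of $p^\ast$ with $\dist(\hat p,p^\ast)\le\eps'\dist(q,p^\ast)$, both $\hat q$ and $\hat p$ are joined to $s$, and — via \Cref{lem:split} for $\Euc^d,\Sph^d$ and via \Cref{lem:Hsplit2} for $\Hyp^d$ — one has $\dist(\hat q,s)+\dist(s,\hat p)\le(1+c\eps')\dist(q,p^\ast)$ for the construction's parameter $c$ (with $\eps'$ the constant fraction of $\eps$ fixed there). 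The key observation is that $\hat q$ must equal $q$: otherwise $\hat q\in P$ with $\dist(q,\hat q)\le\eps'\dist(q,p^\ast)<\dist(q,p^\ast)$ (using $\eps'<1$), contradicting that $p^\ast$ is a nearest neighbour of $q$. Hence $q$ itself is joined to $s$, so $s\in\mathcal S_q$ is examined by the query; since $p_s$ minimises $\dist(s,\cdot)$ over $s$'s neighbours, $\dist(s,p_s)\le\dist(s,\hat p)$, and therefore $\dist(q,p_s)\le\dist(q,s)+\dist(s,p_s)\le\dist(\hat q,s)+\dist(s,\hat p)\le(1+c\eps')\dist(q,p^\ast)\le(1+\eps)\dist(q,p^\ast)$. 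The main obstacle is exactly this bookkeeping: making the virtual edge-generation for $q$ reproduce the construction faithfully enough that the witness Steiner point $s$ is indeed emitted, and checking that the ``$\hat q=q$'' argument and the two-hop structure ($\hat q\to s\to\hat p$) of the witnessing path carry over verbatim to the hyperbolic construction $G_2$, where both endpoints attach to a common Steiner point of $\lca(\hat q,\hat p,T_j^\ell)$.
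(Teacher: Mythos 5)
Your proposal is correct and follows essentially the same route as the paper: the paper's (very terse) proof likewise keeps, for each Steiner point, its attached input points sorted by distance, uses \Cref{lem:lowdiam_implicit}/\Cref{lem:bipartite_implicit} to virtually insert the query and enumerate the Steiner points it would be joined to, and returns the best of their closest neighbours. Your explicit correctness argument (that the surrogate $\hat q$ must be $q$ itself since any other point of $P$ in its sub-cell would beat $p^\ast$, so the witness Steiner point is among $\mathcal S_q$) is exactly the reasoning the paper leaves implicit by reference to the spanner proofs and to prior work.
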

Note that the case $d=2$ is a special case regardless, because there (in the static setting at least) we can get exact answers using Voronoi diagrams. Voronoi diagrams are efficiently computable in $\Euc^2$~\cite{BergCKO08}, $\Sph^2$~\cite{SphereVoronoi} and $\Hyp^2$~\cite{HyperVoronoi1, BogdanovDT14}, and provide $\Oh(\log n)$ query time via standard point location data structures~\cite{BergCKO08}.

As also noted in \cite{hyperquadtree}, we can use a very similar approach to maintain a dynamic approximate bichromatic closest pair.
\begin{theorem}
    For $n$ points in $\Sph^d$, $\Euc^d$ or $\Hyp^d$ with $d \geq 3$, there is a data structure using $d^{\bigO{d}} \eps^{(1-d)/2} \log\frac1\eps \cdot n$ space, that can maintain an $(1+\eps)$-approximate bichromatic closest pair while allowing updates (point insertions and removals) in $d^{\bigO{d}} \eps^{(1-d)/2} \log\frac1\eps \cdot \log n$ time.
\end{theorem}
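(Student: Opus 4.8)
The plan is to reuse the bipartite Steiner $(1+\eps)$-spanner for the point set $P := R \cup B$ together with its dynamic interface: \Cref{thm:doublingmain} with \Cref{lem:lowdiam_implicit} in $\Euc^d$ and $\Sph^d$, and \Cref{thm:Hbipartite} with \Cref{lem:bipartite_implicit} in $\Hyp^d$ for $d \geq 3$. Recall that every edge of such a spanner joins an input point of $P$ to a Steiner point, and that each input point has degree $d^{\bigO{d}}\eps^{(1-d)/2}\log\frac1\eps$, so the spanner has $d^{\bigO{d}}\eps^{(1-d)/2}\log\frac1\eps \cdot n$ edges. For every Steiner point $s$ I would store the red neighbours of $s$ in a balanced binary search tree keyed by distance to $s$, and likewise its blue neighbours, so that the closest red neighbour $r_s$ and the closest blue neighbour $b_s$ of $s$ are available in $\bigO{1}$ time. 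On top of these I would keep a min-heap (with position handles, so that arbitrary deletions and key changes cost $\bigO{\log n}$) over all \emph{active} Steiner points, i.e.\ those having at least one red and at least one blue neighbour, keyed by $|r_s b_s|$. The pair $(r_s, b_s)$ at the top of the heap is reported as the current approximate bichromatic closest pair; these structures together use $d^{\bigO{d}}\eps^{(1-d)/2}\log\frac1\eps \cdot n$ space.

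For correctness I would argue that $\min_s |r_s b_s|$, over active Steiner points $s$, lies between the true closest-pair distance and $1+\eps$ times it. The lower bound is immediate since each $(r_s,b_s)$ is a genuine red--blue pair. For the upper bound, let $(r^\star, b^\star) \in R \times B$ realise the closest-pair distance. Running the correctness argument of \Cref{thm:doublingmain} (resp.\ \Cref{thm:Hbipartite}) on the pair $r^\star, b^\star$ produces points $r'', b'' \in P$ lying in quadtree cells of diameter at most $\eps'|r^\star b^\star|$ around $r^\star$ and $b^\star$ respectively (with $\eps' = \Theta(\eps)$), both connected in the spanner to a common Steiner point $s^\star$, and satisfying $|r'' s^\star| + |s^\star b''| \leq (1+\eps)|r^\star b^\star|$. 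The one point that needs care is that $r''$ and $b''$ have the \emph{right} colours: since $|r^\star r''| \leq \eps'|r^\star b^\star| < |r^\star b^\star|$, the pair $(r^\star, r'')$ cannot be bichromatic, so $r'' \in R$; symmetrically $b'' \in B$. Hence $s^\star$ is active, $|r_{s^\star} s^\star| \leq |r'' s^\star|$ and $|s^\star b_{s^\star}| \leq |s^\star b''|$, so $|r_{s^\star} b_{s^\star}| \leq |r_{s^\star} s^\star| + |s^\star b_{s^\star}| \leq |r'' s^\star| + |s^\star b''| \leq (1+\eps)|r^\star b^\star|$.

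To handle an insertion or deletion of a point $p$ (say into or from $R$; the $B$ case is symmetric), I would first call \Cref{lem:lowdiam_implicit} or \Cref{lem:bipartite_implicit} to obtain, in $d^{\bigO{d}}\eps^{(1-d)/2}\log\frac1\eps + \bigO{d^2\log n}$ time, the at most $d^{\bigO{d}}\eps^{(1-d)/2}\log\frac1\eps$ spanner edges that appear or disappear. Grouping these by Steiner point, for each affected Steiner point $s$ I would update its red search tree in $\bigO{\log n}$ time, recompute $r_s$, then, if $s$ is (or becomes, or ceases to be) active, recompute the key $|r_s b_s|$ and perform one heap insertion, deletion, or key change, each in $\bigO{\log n}$ time. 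Since there are at most $d^{\bigO{d}}\eps^{(1-d)/2}\log\frac1\eps$ affected Steiner points and only $\bigO{1}$ heap and search-tree operations per affected Steiner point, the total update time is $d^{\bigO{d}}\eps^{(1-d)/2}\log\frac1\eps \cdot \log n$, as claimed.

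The construction parallels that of \Cref{thm:anndyn}, and the only genuinely new point over the nearest-neighbour case is the colour-matching observation in the correctness proof; everything else is bookkeeping, chiefly keeping a Steiner point in the heap exactly while it is active and charging the $\bigO{\log n}$ heap work to the boundedly many changed edges. As in \Cref{thm:anndyn}, we state the result for $d \geq 3$: in $\Hyp^2$ no bipartite Steiner spanner of size $o(n/\eps)$ exists by \Cref{lem:bipartitelowerbound}, so the hyperbolic plane genuinely needs different machinery, while in $\Euc^2$ and $\Sph^2$ the same construction still applies (and in the static regime Voronoi diagrams give exact answers anyway).
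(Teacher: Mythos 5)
Your proposal is correct and follows essentially the same route as the paper, which only sketches the argument by analogy to the ANN structure: per-Steiner-point neighbour lists sorted by distance on top of the dynamically maintained bipartite Steiner spanner, with updates charged to the $d^{\bigO{d}}\eps^{(1-d)/2}\log\frac1\eps$ changed edges reported by the implicit data structures. Your colour-matching observation (that the nearby representatives $r'',b''$ must inherit the colours of $r^\star,b^\star$ because the closest bichromatic pair is $(r^\star,b^\star)$) is exactly the detail needed to adapt the spanner-correctness argument, and the heap over active Steiner points is a valid bookkeeping realisation of the paper's intended construction.
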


\section{Conclusion}
\label{sec:conclusion}
We have given spanners of size $\Oh_{d}(\eps^{(1-d)/2}n)$ in $d$-dimensional spaces of constant curvature, most importantly, even in the non-doubling hyperbolic setting. 
Our results raise several natural questions. We believe that the following are especially promising open problems for future work:
\begin{itemize}
\item Let $\cR$ be a fixed $d$-dimensional Riemannian manifold (of non-constant curvature). Can one construct a Steiner spanner for any set of $n$ points in $\cR$ that has size $\widetilde \Oh_{\cR}(\eps^{(1-d)/2}n)$?
\item Can we get a polynomial dependence on the genus $g$ for Steiner spanners on (constant-curvature) surfaces?
\item Is there a near-linear size $\eps$-additive Steiner spanner in hyperbolic space that is dynamic? Can we remove the $\alpha(n)$ factor in its edge count?
\item Can we strengthen \Cref{lem:verylargedist}, for example by getting a smaller number of trees or already getting a Steiner $(1+\eps)$-spanner with $\bigO{dn}$ edges at distances $o(\frac{1}{\eps}\log\frac{1}{\eps})$?
\end{itemize}

\bibliographystyle{alphaurl}
\bibliography{bib}

\end{document}